\documentclass[a4paper]{book} 
\usepackage[utf8]{inputenc}
\usepackage{lmodern}
\usepackage[a4paper, top=3cm,bottom=3cm,left=3.5cm,right=3.5cm]{geometry}
\usepackage[greek,british]{babel}
\usepackage{csquotes}

\title{Indeterministic finite-precision physics and intuitionistic mathematics}
\author{Tein van der Lugt}
\date{31st July 2020}

\usepackage[style=numeric-comp,
            sortcites=true,
            giveninits=true, 
            useprefix=true,
            sorting=nyt]{biblatex}
\DeclareFieldFormat{titlecase}{#1} 
\DeclareFieldFormat{postnote}{#1} 
\DeclareSortingNamekeyTemplate{
  \keypart{\namepart{family}}
  \keypart{\namepart{prefix}} 
  \keypart{\namepart{given}}
  \keypart{\namepart{suffix}}
}
\DeclareNameAlias{author}{family-given} 
\renewbibmacro{begentry}{\midsentence}  
\addbibresource{references.bib}

\usepackage{tikz, xcolor, wrapfig}

\usepackage[pdfusetitle]{hyperref}
\usepackage{bookmark} 
\bookmarksetup{numbered} 
\usepackage[shortlabels]{enumitem}
\setlist{noitemsep}

\usepackage{emptypage}
\pagestyle{headings}



\usepackage[Sonny]{fncychap} 
\usepackage{lettrine}

\counterwithout{footnote}{chapter}

\usepackage{marginnote}
\reversemarginpar
\newif\iftodo
\newif\iftodow
\newif\ifnote
\newif\ifcaution
\newif\ifnieuw

\newcommand{\nieuw}[1]{\ifnieuw{\marginnote{\normalsize\normalfont\color{green}\textsf{NEW}}{\color{green}{#1}}}\else#1\fi}
\todofalse  
\todowfalse
\nieuwfalse
\cautionfalse


\usepackage{amsmath, amsthm, amssymb, bm, commath}
\newtheorem{theorem}{Theorem}[chapter]
\newtheorem{corollary}[theorem]{Corollary}
\newtheorem{proposition}[theorem]{Proposition}

\theoremstyle{definition}
\newtheorem{definition}[theorem]{Definition}
\newtheorem{notation}[theorem]{Notation}
\newtheorem{convention}[theorem]{Convention}
\newtheorem{example}[theorem]{Example}
\newtheorem{assumption}[theorem]{Assumption}
\newcommand{\defn}[1]{\emph{\textbf{#1}}}

\renewcommand{\a}{\alpha}
\renewcommand{\b}{\beta}
\renewcommand{\d}{\delta}
\renewcommand{\phi}{\varphi}
\newcommand{\e}{\varepsilon}
\newcommand{\g}{\gamma}
\newcommand{\s}{\sigma}

\newcommand{\apart}{\mathrel{\#}}

\def\N{\mathbb{N}}
\def\Z{\mathbb{Z}}
\def\Q{\mathbb{Q}}
\def\R{\mathbb{R}}
\def\S{\mathbb{S}}
\def\NN{\mathcal{N}}
\def\D{\mathcal{D}}
\def\M{\mathcal{M}}
\def\E{\mathcal{E}}
\def\Sing{\mathcal{S}}
\def\Som{\Sigma^\omega}
\def\Sst{\Sigma^*}
\DeclareMathOperator{\dom}{dom}
\DeclareMathOperator{\ran}{ran}
\DeclareMathOperator{\diam}{diam}
\DeclareMathOperator{\vol}{vol}
\newcommand{\inv}[1]{\frac{1}{#1}}
\newcommand{\ddt}[1]{\frac{\dif #1}{\dif t}}
\let\o\overline

\begin{document}
    \pagenumbering{gobble}

    \begin{titlepage}
    \newgeometry{top=4cm,bottom=4cm,left=2cm,right=2cm,marginparwidth=1.75cm}
    
    \center 
    \large

    \includegraphics[width=.4\textwidth]{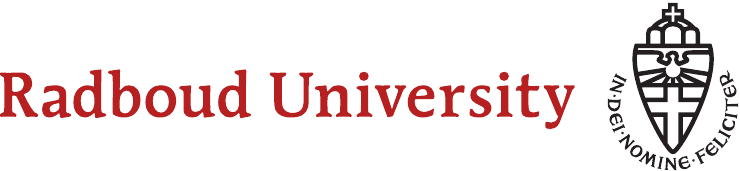}\\[2cm]

    \makeatletter
    {\sffamily\huge Indeterministic finite-precision physics\\[1mm]and intuitionistic mathematics}\\[3cm] 
    \makeatother
    
    {\Large Tein \textsc{van der Lugt}}\vfill\vfill
    
    \textsc{Bachelor's thesis}\\[0.2cm]
    Radboud Honours Academy\\[2cm]

    \emph{Supervised by}\\[0.3cm]
    Prof.\ N.P.\ \textsc{Landsman}\\[2cm]
    
    Department of Mathematics\\
    Radboud University Nijmegen\\[1cm]
    
    \makeatletter
    {\large \@date}
    \makeatother


    \vfill
\end{titlepage}
    
    \hspace{0pt}\vfill 
    \noindent\textsc{Abstract.}\ \ \ In recent publications in physics and mathematics, concerns have been raised about the use of real numbers to describe quantities in physics, and in particular about the usual assumption that physical quantities are infinitely precise. In this thesis, we discuss some motivations for dropping this assumption, which we believe partly arises from the usual point-based approach to the mathematical continuum. We focus on the case of classical mechanics specifically, but the ideas could be extended to other theories as well. We analyse the alternative theory of classical mechanics presented by Gisin and Del Santo \cite{dSG19}, which suggests that physical quantities can equivalently be thought of as being only determined up to finite precision at each point in time, and that doing so naturally leads to indeterminism. Next, we investigate whether we can use intuitionistic mathematics to mathematically express the idea of finite precision of quantities, arriving at the cautious conclusion that, as far as we can see, such attempts are thwarted by conceptual contradictions. Finally, we outline another approach to formalising finite-precision quantities in classical mechanics, which is inspired by the intuitionistic approach to the continuum but uses classical mathematics.
    \hspace{0pt}\vfill
    
    \frontmatter
    
    \phantomsection
    \addcontentsline{toc}{chapter}{\contentsname}
    \tableofcontents
    
    \phantomsection
\addcontentsline{toc}{chapter}{Preface}
\chapter*{Preface}

\lettrine[lines=3]{O}{ver the past year,} the project that has led to this thesis has formed a great opportunity for me to get acquainted with a number of fields within mathematics, philosophy and physics, and to have many interesting discussions with people working in these fields. I could not have foreseen the multitude of directions that this project has ventured into and along the way, it has proven a significant challenge to contain its scope. As a result, I think there are many more discussions to be held and options to be considered on this topic, and this project is, at least for me, unfinished. Because I also realise that many of the reasonings in this thesis might be naive and that I am not remotely acquainted with all relevant existing literature, and simply because I would like to hear thoughts and opinions of others on this topic, I encourage the reader to share any useful comments. As a further disclaimer, this thesis addresses some fundamental and empirically undecidable philosophical questions, which I try to approach in an objective way and the answers to which I am agnostic to, even when arguments that I give might seem to indicate otherwise.

I am deeply grateful to all the people who have guided, helped and inspired me along the way. First of all, I would like to thank my supervisor Klaas Landsman, who has aided me with much devotion throughout the year, formed a valuable source of inspiration and ideas and introduced me to many other experts on relevant topics. Great thanks also go to Nicolas Gisin, whose work in recent years has formed the main inspiration for this thesis and who kindly invited me to Geneva, where I was able visit him just in time before the pandemic started to impact travel regulations across Europe. We had some very insightful brainstorm sessions and I hope we will keep in contact about this topic. The same goes for Flavio Del Santo, whom I unfortunately could not meet in person. I would like to thank Wim Veldman for introducing me to intuitionistic mathematics through his course at Radboud University, for his enthusiasm about this project and for helping me out with many intuitionistic questions that emerged along the way. My meeting with Rosalie Iemhoff was also of great help in this area. Moreover, I would like to thank Bryan Roberts for his enthusiasm and effort in making many useful suggestions from the philosophical side, as well as for inviting me to the online LSE/Cambridge seminars on philosophy of physics, which not only brought forward insights useful to this thesis but also boosted my personal interest in this field, which I was (and am still) new to. Finally, I would like to thank Freek Wiedijk for being the second reader of this thesis, and last but not least, the Honours programme of the Faculty of Science for offering the opportunity to work on this project for the entire academic year and to travel to Geneva.

    \mainmatter
    
    \chapter{Introduction}

\lettrine[lines=3]{F}{or centuries, the} intimate symbiosis between mathematics and physics has formed a great source of inspiration for both fields. Their mutual independence, however, has proven to be an important factor in this relationship: over the years, many abstract mathematical structures that were developed completely independently of physics have turned out to be surprisingly suitable for applications in physics, while conversely, many of modern mathematics’ most important research directions were directly or indirectly inspired by findings in physics. Mathematics has long proven its unreasonable effectiveness in the natural sciences, at least by its success in making empirical predictions. However, precisely because much of the mathematical formalism used in physics today has originated independently of physics, it might be questioned whether this formalism also provides the best candidate to describe physical reality from an ontological perspective.

One example of an originally purely mathematical structure used everywhere in physics today is the real number system. The formalisation of the continuum in terms of real numbers in the late nineteenth century was accompanied by the emergence of paradoxes about uncountable infinities, and this played an important role in the motivation for the development of more constructive and `intuitive’ approaches to the continuum. One of these approaches was developed in the early twentieth century by the Dutch mathematician Luitzen Egbertus Jan Brouwer (1881--1966), founder of intuitionistic mathematics. The differences between intuitionistic mathematics and classical mathematics, as Brouwer called the usual approach to mathematics that we still use today, eventually led to one of the greatest debates in the foundations of mathematics, of which classical mathematics was the clear winner \cite{vandalen1990war}.

Somewhat surprisingly, however, physical considerations played a very insignificant role in this foundational crisis of mathematics.\footnote{Einstein, who was a figure of great influence at this time, remained stubbornly neutral in the conflict between intuitionistic and classical mathematics and wrote to Born: ``I do not intend to plunge as a champion into this frog-mice battle (\emph{Frosch-Mäusekrieg}) with another paper lance’’ \cite{vandalen1990war}.} As a result, the mathematical real number system was not designed to accurately represent the `physical continuum’, i.e.\ the number line representing the possible values of physical quantities. Still, this mathematical formalism is widely applied in contemporary mathematical physics.

In recent years, however, multiple publications in physics and mathematics have expressed doubts as to whether the `real’ numbers do indeed deserve a place in physical reality. The authors of these publications are mainly concerned that, informally speaking, real numbers are infinitely \emph{precise} and contain an infinite amount of \emph{information}, and that this could imply that real numbers do not accurately represent physical quantities. Several alternative number systems have been proposed; in particular, Nicolas Gisin and Flavio Del Santo have proposed a different view on physics in which physical quantities are, at each point in time, only finitely precise and can be described with finitely much information \cite{gisin2017time, gisin2018, gisin2019real, dSG19, gisin2020comment, dS20}. We will refer to such theories under the broad term \emph{finite-precision theories}.

In addition, the question arises whether the continuum of intuitionistic mathematics might be more suitable to represent physical quantities; this continuum, after all, ought to be more `intuitive’ than the classical one. However, also the development of intuitionism was not carried out with physical applications in mind; as we will see in this thesis, its philosophy might even be so human-centred that application to physics would not make much sense.

The aim of this thesis is to give an account of some of the motivations for and consequences of finite-precision theories of physics, to discuss the appropriateness of intuitionistic mathematics to formulate such theories and to propose a new mathematical formalism for finite-precision classical physics. Note that many of the considerations in this thesis can be viewed as regarding either the epistemology or the ontology of physics; our focus, however, will be throughout on the ontology.

We start in Chapter~\ref{chap:whats-the-problem} by discussing the motivations for finite-precision physics in more detail. We argue that one cannot empirically decide whether physical quantities are finitely or infinitely precise. In Chapter~\ref{chap:gisin}, we summarise Gisin and Del Santo's approach to finite-precision classical mechanics and give some comments about it. In Chapter~\ref{chap:intuitionistic-physics}, we explore some of the motivations for using constructive or intuitionistic mathematics for physics in general; we will see that applying intuitionistic philosophy to physics is not straightforward and perhaps even impossible. Nevertheless, we will try to use the language of intuitionistic mathematics to formulate a  finite-precision physics, an attempt which suffers from a variety of problems. In Chapter~\ref{chap:further-development}, an alternative formulation of finite-precision classical mechanics is proposed, which uses classical mathematics. Chapter~\ref{chap:open-questions} presents some open questions and suggestions for future research on finite-precision theories of physics. Appendix~\ref{app:preliminaries-physics} contains very brief introductions to the mathematical formalism of classical mechanics (which can be skipped by those acquainted with classical mechanics, but also introduces notation used in Chapter~\ref{chap:further-development}) and to determinism. Finally, Appendices~\ref{app:intuitionism} and~\ref{app:computability-theory} contain preliminaries from intuitionistic mathematics and computability theory, respectively.
    \chapter{Are physical quantities finitely precise?}\label{chap:whats-the-problem}

\lettrine[lines=3]{I}{n recent years,} multiple authors have raised concerns about the use of the classical real number system in physics. More specifically, they question what we will call the \defn{orthodox} interpretation of physical theories,\footnote{We will sometimes just say `orthodox theories'. The use of the term `orthodox' is inspired from \textcite{dS20}; the term `classical' might also be possible, but we already use that to distinguish between classical and quantum theories (in either orthodox or non-orthodox interpretations).} namely the interpretations according to which physical quantities have infinitely precise (point-like) values and can be represented by arbitrary real numbers. See for example Gisin and Del Santo~\cite{gisin2017time, gisin2018, gisin2019real, dSG19, gisin2020comment, dS20}, \textcite{dowek2013real}, \textcite{visser2012}, \textcite{drossel2015}, \textcite{chaitin2004} and \textcite{lev2017}. Most concerns given in these articles have to do with either the fact that real numbers are infinitely precise, or the fact that almost all real numbers are uncomputable, which, in a certain sense, means that they contain infinitely much information---while actual infinities are in many cases considered non-physical (see also \textcite{ellis2018}). In section~\ref{sec:problem-randomness-and-indeterminism}, I outline (a version of) the main argument from \textcite{gisin2018} and an alternative interpretation of real numbers emerging from it. In section~\ref{sec:problem-infinite-information}, we investigate some other commonly used physical arguments against the real numbers, which I believe are not correct or at least incomplete.

Note that the question whether real numbers accurately describe ontological reality assumes a realistic attitude; although some of the arguments presented below are based on the in principle limitations of measurement, we always assume that a theory should be able to describe the \emph{perfect-information} state of the Universe, and not just that which is known to observers.

Although many of the arguments presented in this chapter are applicable to a large range of physical theories (indeed, all theories that use the real numbers), following \textcite{gisin2018}, we will largely focus on classical mechanics, which is a simple example of a physical theory using the real numbers. We will discuss the reason for this in section~\ref{sec:beyond-classical}. Appendix~\ref{sec:hamiltonian-formalism-orthodox} introduces the mathematical formalism of (the orthodox interpretation of) classical mechanics.

\section{Chaotic systems and finite-precision physics}\label{sec:problem-randomness-and-indeterminism}
In the orthodox interpretation, a classical physical system is thought of as being represented by a single real vector indicating a point in phase space, which causes the time evolution of the system to be completely determined by the initial conditions (see Appendix~\ref{sec:hamiltonian-formalism-orthodox}). This also applies to classical \emph{chaotic systems}, which are, loosely speaking, systems whose behaviour over time is very sensitive to the initial configuration. The ubiquity of these chaotic systems in physics was only recognised during the previous century, when it became clear that e.g.\ the long-term outcome of weather predictions is drastically influenced by only small changes to the initial condition \cite{lorenz1963,sep-chaos}. A simpler example of a chaotic system, however, is a double pendulum (see e.g.\ the figure).

\begin{wrapfigure}{o}[1cm]{0pt}
    \centering
    \fbox{\begin{tikzpicture}[scale=1,x=1cm, y=1cm]%
        \draw (-1,0) -- (1,0);
        \draw (-1.0,0) -- (-0.8,0.2);
        \draw (-0.8,0) -- (-0.6,0.2);
        \draw (-0.6,0) -- (-0.4,0.2);
        \draw (-0.4,0) -- (-0.2,0.2);
        \draw (-0.2,0) -- (-0.0,0.2);
        \draw (-0.0,0) -- (0.2,0.2);
        \draw (0.2,0) -- (0.4,0.2);
        \draw (0.4,0) -- (0.6,0.2);
        \draw (0.6,0) -- (0.8,0.2);
        \draw (0.8,0) -- (1.0,0.2);
        
        \draw[black!30] (0,0) -- (0.919239,-0.919239) -- (2.045072,-0.269239);
        \draw[fill,black!40] (0.919239,-0.919239) circle (0.035);
        \draw[black!30] (0,0) -- (0.791390,-1.031359) -- (2.080268,-0.861675);
        \draw[fill,black!40] (0.791390,-1.031359) circle (0.035);
        \draw[black!30] (0,0) -- (0.650000,-1.125833) -- (1.905704,-1.462298);
        \draw[fill,black!40] (0.650000,-1.125833) circle (0.035);
        \draw[black!30] (0,0) -- (0.497488,-1.201043) -- (1.528848,-1.992433);
        \draw[fill,black!40] (0.497488,-1.201043) circle (0.035);
        \draw[black!100] (0,0) -- (0.336465,-1.255704) -- (0.986465,-2.381537);
        \draw[fill,black!100] (0.336465,-1.255704) circle (0.035);
        \draw[fill] (0,0) circle (0.05);
        
        \node[anchor=center] at (.5,-2.9) {\small Double pendulum};
    \end{tikzpicture}}
\end{wrapfigure}

Let us consider a double pendulum whose configuration at time $t=0$ is specified by a set of real numbers, containing the real number $x_0$, say,
\[ x_0 = 0.36824317\ldots, \]
which represents e.g.\ the distance between the point of suspension and the tip of the pendulum, where the length of the upper arm of the pendulum is taken as the length unit. According to the orthodox interpretation, all digits in the infinite decimal expansion of $x_0$ are determined at $t=0$; that is, in order to fully describe the system at $t=0$, all digits in the decimal expansion should be given, even if the very far-away digits might not be physically relevant (i.e.\ have a physically insignificant influence on the configuration of the system). Furthermore, because the system is chaotic, every digit in the decimal expansion of $x_0$ is relevant to the behaviour of the double pendulum over time; that is, for every $n$, there exists a timepoint $t$ at which the value of the $n$-th digit of $x_0$ influences the configuration of the pendulum on the order of magnitude of the pendulum itself. The orthodox interpretation therefore seems right to the extent that all digits in a real number should exist (i.e.\ have a well-determined value) at least at some point in time, even if they have no physical relevance at $t=0$, because they can obtain physical relevance as time progresses (at least in a chaotic system).

However, the orthodox interpretation makes the additional assumption that all digits of $x_0$ (and all other relevant real numbers characterising the initial condition) are already determined \emph{at $t=0$}, even if they only obtain physical relevance in the far future. This assumption cannot be empirically verified, nor falsified, because the moment one performs a measurement, the measured (digits of) quantities become physically relevant. That is to say, we cannot empirically decide whether physically irrelevant digits have a well-determined value without making these values physically relevant and thereby forcing them to having a well-determined value.\footnote{In this section, we focus on the decimal expansion of real numbers because it is intuitive and intuitively brings across the current argument; however, the argument does not depend on this particular representation of the reals. As we will see later in this thesis, identifying physical quantities with their decimal expansion poses some problems.}

\

\Textcite{gisin2018} uses a similar argument and suggests that an alternative theory of classical mechanics, and of real numbers in general, is possible. In this theory, not all digits have a well-determined value at $t=0$; instead, digits only attain a determined value once they become physically relevant (and this happens by an indeterministic process, as will become clear below). He argues that this theory is empirically equivalent to the orthodox interpretation of real numbers, that is, the theories yield the same measurement results and can therefore not be distinguished empirically.\footnote{For this reason, we sometimes call these theories merely \emph{interpretations} of the same theory; however, because the two approaches give very different philosophical accounts of reality, a scientific realist would certainly regard them as two distinct theories.} Let us call such theories in which, as opposed to orthodox theories, quantities are only finitely precise \defn{finite-precision theories}. These finite-precision theories form the main subject of this thesis.\footnote{\label{fn:determined-precision-warning}Although the term ‘finite precision’ might be associated to the finite precision of measurements, this meaning is not intended here. Instead, finite precision of a physical quantity means that the quantity is \emph{inherently determined} up to only finite precision; Nature simply has not yet determined a more precise value. This can (but in many other ways cannot) be compared to the term \emph{uncertainty} used in quantum mechanics; however, also this term misleading, even in quantum mechanics, as it suggests that it depends on human knowledge about physical quantities, while it is actually a property of the physical system itself.}

What exactly `physically relevant’ means in this context, and where the border between relevant and irrelevant lies, is of course unclear. The reasoning above suggests that, in order for the finite-precision and orthodox theories to be empirically equivalent, a minimum requirement is that quantities should at least be physically relevant if they are or have been measured by intelligent beings. This suggests that the action of measurement might play an important role in finite-precision theories, and that a `classical measurement problem' exists, similar to the measurement problem of quantum mechanics. We will discuss the classical measurement problem in more detail in section~\ref{sec:gisin-classical-measurement-problem}. Note, however, that limitations on human measurement, even in principle ones, are \emph{not} used as an argument in favour of finite-precision theories by Gisin or by me; in our view, these limitations do not tell us anything about the inherent (ontological) determinateness or preciseness of values of physical quantities themselves.

\subsection{Randomness and indeterminism}\label{sec:randomness-indeterminism}
Because the orthodox and finite-precision theories are empirically equivalent, one can give arguments for either of the two only on the basis of naturalness or elegance. We will now discuss one such argument, which I think speaks in favour of the naturalness of finite-precision theories and is based on the fact that almost all real numbers (with respect to the Lebesgue measure) are uncomputable. In brief, this means that the decimal expansions of these numbers cannot be computed by an algorithm (as opposed to the decimal expansions of computable numbers like $1$, $\sqrt 2$ and $\pi$). (See Appendix~\ref{app:computability-theory} for more details on computable numbers.) In fact, almost all real numbers are \emph{1-random}, which is a much stronger mathematical definition of randomness which intuitively captures that the decimal expansions are incompressible, unpredictable and patternless.\footnote{For an introduction to the mathematical theory of 1-randomness (or \emph{algorithmic randomness}), see e.g.\ \textcite{terwijn2016randomness} or \textcite{dasgupta2011}.}

As a result, $x_0$ is 1-random with probability one,\footnote{Here, the meaning of \emph{probability one} is that among the set of all possible initial conditions of the double pendulum, the set of initial conditions where $x_0$ is 1-random has full measure; that is, when ‘drawing’ an initial condition ‘at random’ from this set, the value of $x_0$ is almost surely 1-random.} which means that the behaviour of the double pendulum over time is seemingly random and unpredictable (in the informal sense). Still, in the orthodox interpretations of chaotic systems like the classical double pendulum, it is assumed that the behaviour of the system over all time is completely encoded in the configuration of the system at $t=0$ only, which is why one speaks of \emph{deterministic chaos}. This apparent coexistence of chaos and determinism, which manifests itself in the emergence of randomness from simple physical laws, can be considered counterintuitive or unnatural.

In the finite-precision theory as introduced above, on the other hand, digits of physical quantities only attain a well-determined value once they become physically relevant. The existence of chaotic systems, whose behaviour at least appears random, suggests that it is reasonable to assume that the process by which these new values become determined is indeterministic.\footnote{One might also argue for this by noting that if the new values were completely fixed by the physical laws and the finitely many digits defining the initial state, this would probably mean that the resulting real numbers would not be 1-random. This is a heuristic argument, however, since we cannot simply identify physical laws with mathematical algorithms.} As a result, the time evolution of a chaotic system is not completely encoded in the initial condition. In this way, finite-precision theories bridge the gap between the mathematical and physical notions of randomness, namely by promoting 1-randomness of real numbers to physical indeterminism.

\

While I believe that these observations speak in favour of finite-precision theories, whether the orthodox theories or the finite-precision theories are more natural is partly, of course, a matter of personal taste. Let me stress once more, however, that the theories are empirically equivalent, so that the assumption that physical quantities are determined with infinite precision is empirically both unverifiable and unfalsifiable. This means that physical quantities could \emph{just as well} be determined up to only finite precision\footnote{See footnote~\ref{fn:determined-precision-warning}.} at each point in time, and consequently, that classical physics could \emph{just as well} be indeterministic as deterministic (which also holds for physical theories in general, as remarked in section~\ref{sec:determinism}). It is therefore surprising that the finite-precision interpretation of real numbers has barely been studied before and has only so recently been brought forward by Gisin.

\subsection{Intuitionistic mathematics}
The view that at each point in time, only a finite number of digits in the decimal expansion of a physical quantity have a well-determined value very closely resembles the philosophy of infinite sequences in intuitionistic mathematics, in which time plays a central role (see e.g.\ sections~\ref{sec:choice-sequences-continuity} and~\ref{sec:intuitionism-time}). This puts forward the idea that intuitionistic mathematics might be the right language to express finite-precision theories of physics, as has been suggested in \textcite{gisin2020comment} (see also the popular account by \textcite{quanta2020}). This is investigated further in Chapter~\ref{chap:intuitionistic-physics}.

\subsection{How have we come to the orthodox interpretation?}\label{sec:how-have-we-come-to-orthodox}
I think that the widespread acceptance of the orthodox interpretation can be attributed to at least two factors.
\paragraph{Precision of measurements}
The precision of human measurements has rapidly increased over the past few centuries. This might have led us to believe that we can in principle measure quantities with arbitrary but finite precision,\footnote{Note that we are talking about real physical quantities, and not about quantities such as expected position and momentum in quantum mechanics.} and that therefore, physical quantities have to be determined with infinite precision (because otherwise, measurements of quantities with a lower level of precision than the level of the measurement apparatus would be inconsistent and irreproducible). However, in this reasoning, ‘arbitrary but finite precision’ is too readily extrapolated to ‘infinite precision’, because it is not taken into account that when performing multiple measurements with increasing precision on the same physical quantity, the precision to which the quantity is determined can increase in the time between the measurements (but still be finite at each point in time). Moreover, it is not taken into account that the very act of measurement might cause the physical quantity to acquire a more (but still finitely) precise value. As we have argued above, this is not a far-fetched suggestion, since measurements naturally increase the physical relevance of the measured quantities; moreover, note that measurements also play a central role in the (widespread) Copenhagen interpretation of quantum mechanics. 

Note (again) that I do \emph{not} use the in principle limitations on human measurement as an argument for finite-precision theories as opposed to orthodox theories, but only that it follows from these limitations that the theories are empirically equivalent.\footnote{One could say that a finite-precision theory carries some of the arguments against  \emph{predictability} over to \emph{determinism}, by introducing finite precision, which is frequently associated with empiricism only, to the ontological level. See also section~\ref{sec:determinism}. Finite-precision theories are (in my view) not intended to identify predictability with determinism, however!}

In this answer to the question in the section title, we once again see a connection to intuitionistic mathematics, for Brouwer suggested that the law of the excluded middle, and the limited principle of omniscience (LPO) in particular (see Appendix~\ref{app:intuitionism}), is accepted by classical mathematicians because they tacitly extrapolate reasoning about finite sequences to reasoning about infinite sequences (see section~\ref{sec:NN}).

This observation suggests that infinitely precise real numbers are merely an idealisation, a limit case, of physical reality. Indeed, Gisin has suggested that the orthodox theory describes a ‘view from the end of time’ (i.e.\ a view of the system in the limit $t\to\infty$). This can also been seen as the reason that the orthodox theory is deterministic.

\

\paragraph{The mathematical continuum}\label{sec:continuum}
Another explanation for the widespread adoption of the view that physical quantities are infinitely precise is the mathematical formalisation of the continuum in the nineteenth century. While the notion of the continuum goes back to Ancient Greece, only in the nineteenth century was it formalised as being built up of an infinite set of points. Although viewing the continuum as a set of points is a logical consequence of the central place of set theory in classical mathematics, it does not correspond to the intuition behind the continuum. This inspired the development of continua in constructive mathematics,\footnote{Both Brouwer and Weyl spoke of the `intuitive continuum’ \textcite{weyl1918kontinuum,van2002brouwer}, while Borel spoke of the `geometric continuum’ \cite[§1.4.2]{TrvD}.} and in particular intuitionistic mathematics, where not points but ever-shrinking rational intervals are central to the continuum. Also in nonconstructive mathematics, approaches have been developed to formalise the continuum using \emph{regions} (e.g.\ open sets) instead of points; see e.g.\ \textcite{hellman2018varieties} or \textcite{johnstone1983point}. It might have been that the widespread acceptance of the usual classical definition of the real numbers has led to the view that also physical quantities are given by infinitely precise points. However, while the notion of the continuum is essential to physics, it is questionable whether physics needs it to consist of points. Accordingly, the finite-precision theories discussed in Chapters~\ref{chap:intuitionistic-physics} and~\ref{chap:further-development} make use of regions (which we shall later call \emph{domains of indeterminacy}) instead of points.\footnote{Ideally, we should try to make a clear distinction between the `mathematical continuum' and the `physical continuum'; for example, even if physical quantities should be expressed by regions instead of points, numbers like 1 and $\pi$ can still be said to be correspond to an infinitely precise point on the mathematical continuum. However, making this distinction is difficult because, for example, the relation between physical quantities can depend on point-like mathematical constants like $\pi$.}

\subsection{Indeterministic classical physics}\label{sec:indeterministic-classical-physics}
Historically, deterministic theories have generally been regarded as more natural or intuitive than indeterministic theories. This can among other things be attributed to the fact that many macroscale phenomena like the falling of an apple or the motion of the planets around the sun look deterministic. 
The preference for deterministic theories led to historic debates at the advent of quantum mechanics in the early twentieth century, and to the development of deterministic quantum theories, notably Bohmian mechanics (see also Appendix~\ref{sec:determinism}), which remain to be developed to this day \cite{hooft2016cellular}, and might be said to try to `bring quantum closer to classical'.

However, the possibility that even classical mechanics need not be deterministic (and that viewing it as indeterministic might even be more natural than viewing it as deterministic) shows us that quantum mechanics is not necessarily the only or the first theory to introduce indeterminism to physics. An indeterministic interpretation of classical mechanics as the one outlined in this section might therefore cause the community to be more at ease with the concept of indeterminism and, accordingly, with quantum mechanics, by `bringing classical closer to quantum' instead of the other way around. This was one of the main motivations for Gisin to develop his theory.\footnote{Personal communication.} He has suggested that the real numbers can be seen as the hidden variables of classical mechanics \cite{gisin2019real}, comparing them to the hidden variables which are supplemented to quantum theory by Bohmian mechanics in order to make quantum theory deterministic. See also \textcite{dS20} for more historical discussion on this issue.

The finite-precision interpretation, being time-irreversible, also influences the relationship between classical mechanics and thermodynamics. This is discussed in more detail in section~\ref{sec:past-thermo}.



\subsection{Parmenides and Heraclitus time}\label{sec:heraclitus-parmenides}
It is useful to distinguish two notions of time in finite-precision physics. The first, which \textcite{gisin2017time} calls \defn{Parmenides time}, corresponds to time evolution as given (in the case of classical mechanics) by the Hamiltonian differential equations of motion. Parmenides time could also be called `boring time' \cite{gisin2017time}, as its evolution is completely determined on the basis of the initial condition (i.e.\ it is deterministic); no new information is generated as Parmenides time passes. It has no preferred direction but is just another parameter of spacetime. (Parmenides was an Ancient Greek philosopher according to whom existence is timeless and change is deceptive). On the other hand, we have what Gisin calls \defn{Heraclitus time}, the evolution of which is indeterministic and which corresponds to generation of new information and to the change from \emph{potential} to \emph{actual} (and, perhaps, to free will \cite{gisin2017time}); it could also be called `creative time'. In the current setting, it refers to the process of determination or `actualisation' of new digits of physical quantities. (Heraclitus was an Ancient Greek philosopher who believed, on the other hand, that existence is constantly changing; \textgreek{πάντα ῥεῖ} (`everything flows')).

Parmenides time and Heraclitus time could be compared with propagation of the wave function via the Schrödinger equation and indeterministic wave function collapse in quantum mechanics, respectively.

In some sense, Parmenides time and Heraclitus time are perpendicular, since the Hamiltonian differential equations can be solved using finite-precision quantities as initial conditions (i.e.\ Parmenides time evolution can be calculated within one Heraclitus time slice). However, if the actualisation of digits is indeed triggered by measurements or by them becoming `physically relevant' over the course of Parmenides time, then it seems that Parmenides time and Heraclitus time must be inextricably linked. We will return to this issue later.

\subsection{What about theories besides classical mechanics?}\label{sec:beyond-classical}
A natural question to ask is why we focus on the example of classical mechanics, as we already know that precisely on the small scale, classical mechanics does not accurately represent reality. We do this first of all because classical mechanics, in its orthodox interpretation, is an archetype of a deterministic theory, and as explicated in section~\ref{sec:indeterministic-classical-physics}, we think it is useful to show that even this theory \emph{can} be interpreted indeterministically. But indeed, most motivations for questioning that physical quantities are infinitely precise can be generalised to any other physical theory that uses the real numbers. Moreover, similarly to classical physics, the idea that quantities get more precisely determined over time would introduce (another level of) indeterminism to these theories, by promoting mathematical randomness to physical indeterminism. Classical mechanics, by virtue of being a simple theory which is usually regarded as deterministic, allows us to explore this particular indeterministic process.

\section{Infinite information}\label{sec:problem-infinite-information}
Many of the publications mentioned in the beginning of this chapter are in particular concerned with the aforementioned fact that most real numbers are uncomputable, which (can be and) is frequently described as them `containing infinitely much information’, and that this in conflict with the alleged principle that the Universe should have a ‘finite information density’, i.e.\ that a finite volume of space ‘contains at most finite information’. I am not convinced by the validity of this \emph{finite-information principle} and believe that the question whether it holds or not is empirically underdetermined. In this section we will try to analyse some of the arguments used in favour of this principle.

Before doing that, however, we must have clarity on what exactly is meant by the information ‘contained’ in a physical system. It makes sense to define it as the minimal amount of information necessary to completely specify the configuration of that system. Since completely specifying the configuration of a system requires the system to be isolated, this definition is limited to isolated systems, and it is therefore questionable whether it makes sense to speak about the information contained in a particular \emph{finite} region of space. Indeed, we cannot say that the information in a single real number representing, say, the $x$-coordinate of a particle, is ‘stored’ at the location of that particle, for that information depends on our description of the system (e.g.\ we can always choose a coordinate system in which this physical quantity is a rational number).\footnote{In classical mechanics, in the presence of forces with infinite range like Newtonian gravity, a change in the location of the particle immediately affects the behaviour of the system at arbitrarily large distances; therefore, in this case we cannot even say that the physical quantity in question is ‘localised’ at the location of the particle, let alone the information in the real number representing its value.} The only well-posed question seems to be whether \emph{all} physical quantities relevant to describing the entire system can together be described using finitely much information (i.e.\ can be expressed by a finite algorithm). Therefore, let us from now on only consider isolated systems of finite spatial extent. The Bekenstein bound indeed only applies to such systems \cite{page2018}.

\subsection{The Bekenstein bound}
Some of the earlier stated papers \cite{gisin2018, dS20, dowek2013real, chaitin2004} in particular mention the \emph{Bekenstein bound} as an argument against the infinite information in real numbers. Derived first in 1981 by Jacob Bekenstein in the context of black hole physics \cite{bekenstein1981}, it provides an upper bound on the ratio between the entropy and energy in an isolated system enclosed in a sphere of finite radius $R$:
\begin{equation}\label{eq:bekenstein}
    \frac{S}{E} \leq \frac{2\pi k R}{\hbar c},
\end{equation}
where $k$ is Boltzmann’s constant, $\hbar$ is Planck’s constant and $c$ is the speed of light. 

Entropy is often used as a measure of information, so that the infinite information in real numbers would contradict this bound. However, I think this link between entropy and information is too readily made. First of all, there exist many different notions of entropy, which should not thoughtlessly be identified with each other. What complicates matters even more is that entropy is not a property inherent to a physical system; rather, it is a property of our \emph{description} of the system \cite{Cat08,FrW11}. Hence, there is no such thing as `the’ entropy of the system. For this and other reasons, the domain of applicability of the Bekenstein bound is unclear (it is often ambiguously stated as in Equation~\eqref{eq:bekenstein} without specifying the specific entropy notion that is meant) \cite{page2018}.

The entropy involved in black hole thermodynamics, for example, which is the context that the Bekenstein bound was first derived in, is very different from entropy in statistical mechanics. In order to pass from thermodynamic entropy to statistical mechanical entropy, one needs to, at least in continuous systems,\footnote{It is true that some (quantum) theories allow the existence of discrete systems; however, if we only consider discrete systems (i.e.\ systems with a countable number of possible microstates), I do not see why there is a problem of infinite information in the first place.} coarse-grain the system \cite{FrW11}, i.e.\ divide state space into countably many compartments; this means that microstates are already assumed from the start to correspond to a region of state space, instead of a point. In doing so, all information relevant to the current discussion, namely the information contained in uncomputable, infinitely precise real numbers, is lost.

Furthermore, while there has been much work on the connection between (statistical mechanical) entropy and the algorithmic information contained in \emph{finite} binary strings (i.e.\ the \emph{Kolmogorov complexity}; see e.g.\ \textcite{GrW08}, \textcite{tadaki2019}), whether and how this generalises to a link with uncomputability of \emph{infinite} strings corresponding to real numbers is yet unclear (but would be interesting to investigate in more detail).

\subsection{What information?}
\textcite[section~IV]{gisin2018} outlines another argument for the principle of finite information density, based on the observation that although the information storage capacity has dramatically increased over the past century, there will always remain a certain minimal amount of energy, mass or space that is needed to encode one bit of information. This refers, however, to information that is stored by human beings in a digital format, which, almost by definition, does indeed have a finite density; in my view, the argument does not apply to the information necessary to completely describe the state of the system itself.

In addition, arguments based on thought experiments known as Landauer’s principle or Szilard’s engine \cite{sep-information-entropy} are sometimes used to make the connection between thermodynamic entropy and information (not requiring coarse-graining by passing through statistical mechanics) \cite{dS20}. However, Landauer’s principle deals with information processing carried out within the Universe, so it can again be questioned whether it applies to all information necessary to \emph{describe} the system; similarly, Szilard’s engine involves an intelligent being (``Maxwell’s demon’’) knowing the state of the system and actively interacting with the system from within.

In \cite{lloyd2002}, Lloyd calculates an upper bound on the number of logical operations performed and numbers of bits registered within the Universe in its lifetime, using the Bekenstein bound and the Margolus-Levitin theorem. He argues that these numbers also provide a lower bound on the number of logical operations and bits required to simulate the entire Universe on a (quantum) computer. He notes correctly, however, that whether it is also equal to the \emph{minimum} amount necessary to run such a simulation is a controversial question which cannot be decided on the basis of only these physical principles.

\

I agree with Lloyd and stick with the conclusion of section~\ref{sec:problem-randomness-and-indeterminism} that it cannot be decided, in particular not by the arguments discussed in this section, whether the real numbers can be used to accurately represent the values of physical quantities, but that it is interesting to explore the possibility of finite-precision and finite-information physics.

As a final note, it is of course possible to argue against the real numbers from an epistemic or operationalist\footnote{\label{fn:operationalism}Operationalism is the view that a concept is only meaningful when we have a method of measurement for it; more abstractly, it views any concept as nothing more than a `set of operations' \cite{sep-operationalism}.} perspective, since the real numbers do not represent our knowledge of a physical system and there is no method to measure a quantity with infinite precision, nor a method to store a measurement result that contains infinite information. Most discussions in this thesis, however, assume an ontological perspective.
    \chapter{Gisin's alternative classical mechanics}\label{chap:gisin}

\lettrine[lines=3]{I}{n a series of} publications from 2017 to 2020, Nicolas Gisin and Flavio Del Santo, motivated by the arguments given in Chapter~\ref{chap:whats-the-problem}, propose a candidate alternative interpretation of classical mechanics which is indeterministic and uses only finitely much information for each physical quantity \cite{gisin2017time, gisin2018, gisin2019real, gisin2020comment, dSG19, dS20}. In section~\ref{sec:gisin-fiqs}, we summarise this theory and discuss two more aspects of their publications; in section~\ref{sec:gisin-reaction}, we discuss some apparent problems arising from their approach.


\

\section{Finite-information quantities}\label{sec:gisin-fiqs}
The idea of indeterministic classical mechanics as first outlined in \textcite{gisin2018} was already discussed in section~\ref{sec:problem-randomness-and-indeterminism}. The theory is worked out in greater detail in \textcite{dSG19}. Instead of the decimal expansion as in section~\ref{sec:problem-randomness-and-indeterminism}, the authors consider the binary expansion of a real number $\g$, without loss of generality situated in the interval $[0,1]$, which represents some (dimensionless) physical quantity (e.g.\ a ratio of distances between particles):
\[ \g = 0.\g_1\g_2\g_3\dots, \]
where $\g_j\in\{0,1\}$ for each $j > 0$. Instead of assuming, as is done in classical mathematics and orthodox physics, that all binary digits of $\g$ are given at once, i.e.~at each point in time all digits are either 0 or 1, they propose that at each point in time only finitely many digits are 0 or 1. The other digits take a value between 0 and 1, defined as the \defn{propensity} of that digit at time $t$. This propensity can be seen as the tendency of the digit to take the value 1 at a later stage. More specifically, the authors define:
\begin{definition}
	A \defn{finite-information quantity} is an infinite sequence of propensities $(q_1,q_2,\dots)$ such that:
	\begin{enumerate}[\quad (i)]
		\item $q_j \in \Q\cap [0,1]$ for all $j>0$;
		\item (\emph{necessary condition}) $\sum_{j=0}^\infty (1-H(q_j)) < \infty$, where \[ H(q_j) = -q_j\log_2 q_j - (1-q_j)\log_2(1-q_j) \] is the base-2 entropy of the probability distribution corresponding to $q_j$.
	\end{enumerate}
\end{definition}

Both conditions (i) and (ii) are imposed to make sure that FIQs contain only finitely much information. Gisin and Del Santo also give a sufficient condition for (ii):
\begin{quote}
	(\emph{sufficient condition}) For each time $t$, there exists $M(t) \in \N$ such that $q_j = \inv2$ for all $j>M(t)$.
\end{quote}

Let us for the moment restrict our attention to finite-information quantities $\g$ satisfying this additional constraint, as Del Santo and Gisin also mostly do. For each $t$, let $N(t)$ be the largest $n$ such that at time $t$, $q_j \in \{0,1\}$ for all $j$ with $0 < j \leq N(t)$. Then $N(t) \leq M(t)$, and the sequence of propensities associated with $\g$ can be divided into three sections.

In the first section, $0 < j \leq N(t)$, all propensities $q_j$ are either 0 or 1. This means that the corresponding digits $\g_j$ have a well-determined value, equal to the propensity.

In the second section, $N(t) < j \leq M(t)$, the propensities $q_j$ take a rational value between 0 and 1.\footnote{In \textcite{dSG19} and \textcite{dS20} the additional assumption seems to be made that $q_j$ cannot be equal to 0 or 1 for $j > N(t)$.} These propensities are taken to be objective, ontological properties of the physical quantity. Over time, they undergo a dynamical evolution which moves them closer to either 0 or 1. When one of these numbers is reached, the bit $\g_j$ changes from \emph{potential} to \emph{actual}.

The third group of propensities, $j > M(t)$, satisfy $q_j  = \inv2$. According to the authors, this means that the outcome of the bit $\g_j$ is totally random.

\


\subsection{The classical measurement problem}\label{sec:gisin-classical-measurement-problem}
In section \textsc{v}, \textcite{dSG19} discuss the question of under what circumstances a bit value is changed from potential to actual, i.e.~a propensity becomes either 0 or 1. The authors present two possible answers: (i) The actualisation happens spontaneously as time passes, i.e.\ the process of actualisation (but not the outcome) only depends on the theory itself, and is not influenced by e.g.\ strong emergence.\footnote{A high-level phenomenon is said to be \defn{strongly emergent} from a lower-level domain if the phenomenon arises from the lower level, but not all truths about the phenomenon can be deduced, even in principle, from the lower level. This opposes \defn{reductionism}, which is roughly the view that a higher-level object is nothing more than its constituent parts.} (ii) The actualisation occurs when a higher level requires it. This means that a (strongly) emergent process influences a lower-level one: this is referred to by \emph{top-down causation}. The higher level process can be a macroscopic measurement apparatus, for example, which might require a finite-information quantity to take on a more definite value. 

The situation can be compared to that of the \emph{measurement problem} in quantum mechanics, which has been described as the problem of ``explaining why a certain outcome, as opposed to its alternatives, occurs in a particular run of an experiment'' \cite{brukner2017quantum} and, in particular, why, when, how and \emph{whether} collapse of the wave function to an eigenstate occurs \cite{sep-qt-issues}.
The question of why, when, and how actualisation of digits occurs in finite-precision classical mechanics and how this process relates to measurements could accordingly be called the `classical measurement problem' \cite{dSG19,dS20}.
\textcite{dSG19} suggest that of the approaches to the classical measurement problem discussed above, option (i) is reminiscent of the \emph{objective} or \emph{spontaneous collapse models} of quantum mechanics such as the continuous spontaneous localisation (CSL) model \cite{sep-qm-collapse}, which describe a process of wave function collapse which is integrated into quantum theory itself; while option (ii) can be compared to the Copenhagen interpretation, according to which it is the act of measurement itself that induces wave function collapse \cite{sep-qt-issues,dSG19}. The classical measurement problem will remain unresolved, however, just as its quantum counterpart.

The classical measurement problem is also closely related to the nature of the relation between Parmenides time and Heraclitus time (cf.\ Schrödinger propagation and wave function collapse, respectively). We will return to this in section~\ref{sec:gisin-reaction-hamiltonian-evolution} and later in this thesis.

\section{Discussion}\label{sec:gisin-reaction}
Gisin and Del Santo’s proposal sets the stage for an interesting discussion on the relation between real numbers and indeterminism and their role in classical physics. While the formalism presented in \textcite{dSG19} succeeds in describing intuitively what it would mean for physical quantities to be inherently uncertain, the theory is still in its infancy and there seem to be some problems that limit its potential to become a more complete mathematical theory which represents objective reality.

\subsection{Base-2 dependence and interdependence of propensities}
First of all, if the indeterminacies of physical quantities are indeed ontological, objective properties of the system, it is unnatural to describe them in terms of the base-2 expansion of the reals, and doing so would lead to an incomplete theory, as we will see now.

Gisin and Del Santo do not go into the question whether the propensities associated with a FIQ are dependent or independent random variables.\footnote{The usual theory of random variables can perhaps not be applied to propensities, as Gisin notes that propensities are not probabilities as in the usual sense of the word; in particular, they do not satisfy Kolmogorov’s axioms of probability theory \cite{gisin1991}. However, it seems reasonable that also for propensities there must be some notion of dependence or independence. Two propensities can be said to be dependent, for example, if the sole process of the transition of the value of $\g_j$ from potential to actual causes a change in the value of the propensity $q_i$, with $i \neq j$.} However, it follows from a reasonable argument that they must in general be dependent.
Namely, note that if the propensities are assumed independent, then every FIQ induces a probability density function on the continuum via the joint probability of the propensities, as exemplified in Figure~\ref{fig:bar-charts}(a).
However, not all probability distributions can be reconstructed by taking the joint probability distribution in this way, as shown in Figure~\ref{fig:bar-charts}(b).\footnote{In particular, if $f(x)$ is a probability density function on the continuum induced by independent propensities $(q_1,q_2,\dots)$, then for all $x\in (0,\inv2)$, we must have $f(x+\inv2) = \frac{q_1}{1-q_1} f(x)$.}
This would mean that the set of possible probability distributions that appear in FIQ-theory depends on properties inherent to the description of the system, such as the chosen unit and coordinate system; as a result, the theory cannot describe objective ontological reality.

\begin{figure}
    \centering
    \begin{tikzpicture}[x=0.65cm, y=0.65cm]
        \begin{scope}
            \draw (-.5,0) -- (4.5,0);
            \draw (0,0) -- (0,1) -- (1,1) -- (1,2) -- (2,2) -- (2,1) -- (3,1) -- (3,2) -- (4,2) -- (4,0); 
            \draw (1,0) -- (1,1); \draw (2,0) -- (2,1); \draw (3,0) -- (3,1); 
            \draw (0,0) -- (0,-0.07) node[anchor=north] {0};
            \draw (4,0) -- (4,-0.07) node[anchor=north] {1};
            
            \node[anchor=center] at (-0.8,2) {(a)};
            \node[anchor=center] at (2,-1cm) {$(q_i)_{i=1}^\infty = (\inv2,\inv3,\inv2,\inv2,\ldots)$};
        \end{scope}
        \begin{scope}[shift={(7,0)}]
            \draw (-.5,0) -- (4.5,0);
            \draw (0,0) -- (0,1) -- (1,1) -- (1,2) -- (2,2) -- (3,2) -- (3,1) -- (4,1) -- (4,0); 
            \draw (1,0) -- (1,1); \draw (2,0) -- (2,2); \draw (3,0) -- (3,1); 
            \draw (0,0) -- (0,-0.07) node[anchor=north] {0};
            \draw (4,0) -- (4,-0.07) node[anchor=north] {1};
            
            \node[anchor=center] at (-0.8,2) {(b)};
            \node[anchor=center, align=left] at (2,-1cm) {$P(\g_i = 1) = \inv2$ \ for all $i$};
        \end{scope}
        \begin{scope}[shift={(14,0)}]
            \draw (-.5,0) -- (4.5,0);
            \draw (0,0) -- (0,1.5) -- (4,1.5) -- (4,0); 
            \draw (1,0) -- (1,1.5); \draw (2,0) -- (2,1.5); \draw (3,0) -- (3,1.5); 
            \draw (0,0) -- (0,-0.07) node[anchor=north] {0};
            \draw (4,0) -- (4,-0.07) node[anchor=north] {1};
            
            \node[anchor=center] at (-0.8,2) {(c)};
            \node[anchor=center, align=left] at (2,-1cm) {$(q_i)_{i=1}^\infty = (\inv2,\inv2,\inv2,\ldots)$};
        \end{scope}
    \end{tikzpicture}
    \caption{(a) Example of a probability distribution on $[0,1]$ arising from a sequence of independent propensities $(q_i)_{i=1}^\infty$.
    (b) Example of a probability distribution on $[0,1]$ which does not arise from a sequence of independent propensities. A random variable $\g\in [0,1]$ distributed according to this distribution has the property that for all bits in its binary expansion $0.\g_1\g_2\ldots$, $P(\g_i = 1) = \inv2$; but these propensities are dependent.
    (c) The distribution of (b) cannot be reconstructed from a sequence of independent propensities, since this would yield another probability distribution. We see that propensities do not completely describe the probability distribution and hence the ontology associated with finite-information quantities.}
    \label{fig:bar-charts}
\end{figure}
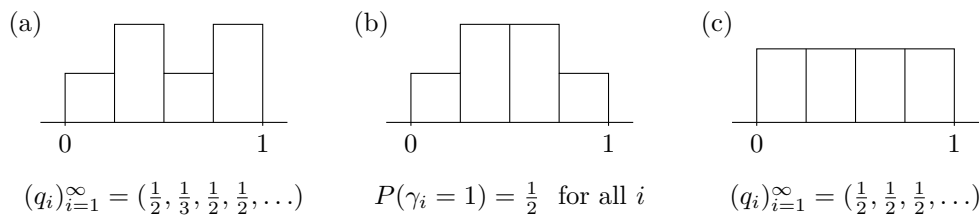

Therefore, the propensities must in general be dependent. This means, however, that not all information present in the physical system is encoded in the values of the propensities, so that FIQs do not provide a complete description of reality. Figures~\ref{fig:bar-charts}(b) and (c), for example, show examples of differing probability distributions associated with the same sequence of (dependent) propensities.

In fact, the conclusion that the binary expansion of reals cannot be used to formulate a \emph{complete} theory of finite-precision quantities is exactly analogous to the problem that prohibits defining the real numbers in intuitionism on the basis of their binary expansion (or expansion in any other base), as discussed in section~\ref{sec:int-reals-other-constructions} on page~\pageref{sec:int-reals-other-constructions}. A similar result also follows from considerations in computable analysis (see the discussion below Definition~\ref{def:some-representations} on page~\pageref{computability-decimal-expansion-problem}).

Indeed, it is arguably not the indeterminacy of the binary digits of real numbers that matters physically, but the indeterminacy of the location of the physical quantity on the continuum line itself.
This suggests that the theory would be improved if not the binary representations of real numbers were taken as a starting point in representing the indeterminacy in quantities, but instead more geometrical properties of the continuum were used.\footnote{Perhaps, the mathematical idea that the continuum is built up of individual points (which can then be expressed by their binary expansion) has led Gisin and Del Santo to their current formulation. As I remarked in section~\ref{sec:continuum}, however, there might be no physical motivation for viewing the continuum as consisting of points.}
In section~\ref{sec:intuitionistic-reals-physics}, we will take a first look at an approach to finite-precision physics that uses shrinking rational intervals instead of the binary expansion of the reals; in section~\ref{sec:mathematics-development}, we will consider a more consistent formulation of finite-precision classical mechanics.

\subsection{Measuring information}\label{sec:measuring-information}
To ensure that the information contained in (or `encoded by') the physical system per unit volume is finite, Gisin and Del Santo require that the propensities $q_j$ of a FIQ be rational numbers, and that the sum of their negentropies is finite: $\sum_j (1-H(q_j)) < \infty$ (the \emph{necessary condition}). These are two different notions of information: the former is concerned with the algorithmic information in a description of the perfect-information state of the system, while the latter is associated with the amount of information required to communicate the outcome of an actualisation of a digit of the FIQ. It is not clear whether the sum of negentropies provides the correct measure of information in a FIQ. We can see, for instance, that this choice does not guarantee a bound on \emph{algorithmic} information: it is not difficult to construct a sequence $(q_1,q_2,\dots)$ of rational propensities for which $\sum_j (1-H(q_j))$ converges, but which is uncomputable.\footnote{Namely, construct such a sequence $(q_j)_j$ that converges to $\inv2$ fast enough such that $\sum_j(1-H(q_j))$ converges, while making sure that the sequence is not computable; the latter is possible because the set of uncomputable rational sequences is dense in the set of all rational sequences.}

A first attempt to a solution could be to replace the \emph{necessary condition} by the \emph{sufficient condition}, as all propensity sequences that satisfy the \emph{sufficient condition} are computable. However, the probability distributions on the continuum induced by such sequences are necessarily discrete and show discontinuities only at dyadic numbers (as in Figure~\ref{fig:bar-charts}), which makes the theory even more dependent on the choice of unit and coordinate system. Another potential solution could be to replace the \emph{necessary condition} by the requirement that the sequence of propensities associated with a FIQ is computable.

In addition, the measure of information proposed in the formulation of the necessary condition, namely a sum of negentropies of individual propensities, seems to require that the propensities are independent, which they, as discussed in the previous section, are most likely not.

\

Finally, also the requirement that propensities are rational is not free of problems. \textcite[section~III-B]{dSG19} remark that replacing the reals by rationals in physics leads to what they call `Pythagorean no-go theorems’: for example, three particles cannot be placed on the vertices of a right-angled triangle, since the distance between the two particles on the hypotenuse would be irrational. However, this also holds for propensities: if the distance between two particles A and B is represented by a FIQ with rational propensities, and the same holds for the distance between A and another particle C, then the distance between A and C is in general not expressible in terms of rational propensities.

A potential solution could be to let propensities take values in the set of computable numbers. Together with our previous suggestion, this would mean that FIQs are defined as computable sequences of computable numbers. However, as noted earlier, the approaches developed in the next sections will be based on the geometry of the continuum as a whole, rather than on the base-2 expansion of individual points.

\subsection{Connection to Hamiltonian time evolution}\label{sec:gisin-reaction-hamiltonian-evolution}
While Gisin and Del Santo do discuss the possibilities for the mechanism behind the evolution of propensities $q_j$ which is involved in the transition of bit values $\g_j$ from potential to actual, they do not discuss how this evolution is incorporated in the dynamical evolution of FIQs through the Hamiltonian equations of classical mechanics. In the terms of section~\ref{sec:heraclitus-parmenides}, they do not discuss how Heraclitus time and Parmenides time are linked. Here, the rationals again seem problematic: when FIQs undergo Hamiltonian evolution, rational propensities do in general not stay rational.
Understanding the connection between Parmenides time and Heraclitus time turns out to be a difficult problem, which we will revisit later in this thesis.

    \chapter{Intuitionistic physics?}\label{chap:intuitionistic-physics}

\lettrine[lines=3]{T}{he main aim} of this chapter is to explore the motivations for and problems of using intuitionistic or constructive mathematics for physics. The discussion in section~\ref{sec:constructivising-physics} applies to physics in general, while in sections~\ref{sec:intuitionistic-reals-physics} and~\ref{sec:doubts-again}, an intuitionistic interpretation of finite-precision physics in particular is attempted and debated. Our conclusion is that the usefulness of constructivism and intuitionism in describing an ontological physical theory is questionable. Finally, section~\ref{sec:lawless-indeterminism} attempts to investigate the relation between physical indeterminism and Kreisel and Troelstra's definition of lawless sequences (introduced in section~\ref{sec:lawless-sequences-mathematics}). This section has a similar negative conclusion.

\section{Constructivising physics}\label{sec:constructivising-physics}
While constructive mathematics was originally developed for pure mathematics, the past century has seen multiple debates on the justification of using nonconstructive or constructive methods in applied mathematics and physics in particular. One important debate in the 1990s was between philosopher Geoffrey Hellman and constructivist Douglas Bridges \cite{hellman1993, richman1999, hellman1998, bridges1999}. More recently, alternative quantum logics have been proposed that are intuitionistic \cite{landsman2017}. Also Gisin has suggested that his alternative classical mechanics might best be expressed in the language of intuitionism \cite{gisin2020comment}. In this section, we review and analyse a number of arguments or motivations to use either constructive or nonconstructive mathematics in physics that can be found in the literature. These motivations can be roughly divided into three categories: purely mathematical motivation, technical considerations, and physical motivation, the latter of which can be subdivided into epistemic and ontological motivations. Of course, there is some overlap between these categories.\footnote{Another topic which touches on the applicability of constructive mathematics to physical sciences, but which I do not discuss here, is on the physical meaning of classical undecidability and incompleteness results. See e.g.~\textcite{svozil1995}.} The discussion in this section is not specific to finite-precision theories but applies to physics in general. Also note that this section focuses on (mainly Bishop's) constructive mathematics, and less so on intuitionistic mathematics, which are not the same thing (see section~\ref{sec:constructive-mathematics}).

\subsection{Purely mathematical motivation}
Ask any ‘radical constructivist’ whether to use classical or constructive mathematics in physics, and they will most likely answer that constructive mathematics is best to use in all cases. They might give the same arguments as they would for defending pure constructive mathematics: ‘What purpose does it serve to say that something exists, when it cannot be constructed? How do you know that the time axis is totally ordered, when the relation $<$ on $\R$ is not decidable?’ To me, these arguments are not convincing, for the simple reason that mathematics is not physics. The largest part of the debate between constructive and nonconstructive mathematics takes place entirely within mathematics. The BHK (Brouwer-Heyting-Kolmogorov) interpretation, for example, is an interpretation of what it means to have a constructive proof of a mathematical statement (see section~\ref{sec:constructive-mathematics}), and has little to do with physics. While pure mathematics is \emph{practised}, the goal of theoretical physics is to \emph{describe} physical reality (be it empirical or ontological reality);\footnote{According to Brouwer: ``Het gebouw der intuitieve wiskunde [is] zonder meer een \emph{daad}, en geen \emph{wetenschap}'' \cite[p98]{brouwer1907} (``The construct of intuitive mathematics is simply a \emph{deed}, and not a \emph{science}'').} hence, motivations to use constructive mathematics for physics should take into account which mathematics has the best representational capacity. Hellman draws the same conclusion:
\begin{quote}
	Why should there be any restrictions a priori on the character of the mathematics that may be used to describe real or idealized physical systems?
	\textelp{}
	In general, in scientific applications of mathematics, the goal of explaining and understanding natural phenomena is paramount, not achieving a constructive interpretation of results. \cite{hellman1998}
\end{quote}

\subsection{Technical considerations}
When it comes to the power of proving results, constructive mathematics is usually thought of as lagging behind classical mathematics. But is that a bad thing? It might be, if this means that certain mathematical results that seem essential to the development of physics cannot be proven constructively. An example of such a result is Gleason’s theorem, which lies at the foundations of quantum mechanics. It also has deep physical significance, as it rules out a certain class of hidden variable theories. As was shown by Hellman in 1993 \cite{hellman1993}, Gleason’s theorem is not constructively provable; in fact, it implies LLPO \cite{bridges1999} (a principle which, similarly to LPO, does not hold constructively, and is even false in intuitionistic mathematics; see section~\ref{sec:NN}).\footnote{Gleason concluded that this was a profound shortcoming of constructive mathematics: ``The work of Bishop and others \textelp{} can be said to have breathed new life into constructivist mathematics: it shows that a great deal of applicable mathematics can indeed be constructivized. A great deal, however, is not all, and, if our assessment is sound, it is in any case not enough.’’} However, as with many classical analytic theorems, alternative formulations which are classically equivalent to Gleason's theorem \emph{can} be proven constructively. Helen Billinge \cite{billinge1997} proved a number of such alternatives to Gleason’s theorem. In addition, Fred Richman and Douglas Bridges noted that Gleason’s theorem as formulated in Hellman’s 1993 paper was classically but not constructively equivalent to Gleason’s original formulation, upon which Richman proved that Gleason’s original formulation is in fact constructively provable \cite{richman1999}.\footnote{There were similar discussions about unbounded and uncomputable operators (see \textcite{bridges1999} for an overview) and the singularity theorems of Hawking and Penrose, which also have profound physical significance because they prove the big bang (from assumptions) \cite{hellman1998}. See also \cite[section I]{cattaneo1995}.}

More generally, the discussion of whether constructive mathematics has the required technical capacity to formulate modern mathematical theories of physics seems inconclusive: many classical theorems remain constructively unproven,\footnote{\label{fn:weyl}Hermann Weyl, in whose opinion ``it is the function of mathematics to be at the service of the natural sciences’’ \cite[p61]{weyl1949}, was initially fascinated by Brouwer’s intuitionism but later realised its impractical nature: ``Mathematics with Brouwer gains its highest intuitive clarity. \textelp{} It cannot be denied, however, that in advancing to higher and more general theories the inapplicability of the simple laws of classical logic eventually results in an almost unbearable awkwardness. And the mathematician watches with pain the greater part of his towering edifice which he believed to be built of concrete blocks dissolve into mist before his eyes.’’ \cite[p54]{weyl1949} 
This was before Bishop published his \emph{Foundations of Constructive Analysis}.} but history (and, in particular, Bishop’s \emph{Foundations of Constructive Analysis} \cite{bishop1967}) have shown that many useful and classically equivalent alternatives to these theorems can be proven constructively.\footnote{Constructive mathematics is indeed more versatile than classical mathematics, in the sense that it distinguishes between statements that are classically equivalent (such as classical and approximate variants of theorems like the intermediate value theorem discussed in section~\ref{sec:int-real-functions}).}\textsuperscript{,}\footnote{Note that the discussion here is on constructive mathematics, not intuitionistic mathematics; the case for intuitionistic mathematics is more sophisticated as it also has theorems that do not hold classically.}

An interesting note in the discussion on constructivism is that while classical mathematicians usually see the restrictiveness of intuitionistic logic as a weakness, constructive mathematicians tend to stress that their theories admit more models. That is, every theorem proved in Bishop’s constructive mathematics (i.e.~proved with intuitionistic logic) is also true within a plethora of other theories, such as recursive function theory, Weihrauch’s type-two effectivity theory (Appendix~\ref{sec:computable-analysis-preliminaries}), intuitionistic mathematics and classical mathematics \cite{bridges1999}. Hence, in this sense constructive mathematics is not more restrictive at all; instead, it leaves open more possibilities than classical mathematics. However, it is questionable whether this property is also to the benefit of physical theories, rather than purely mathematical and recursive theories such as the ones listed above.

\

Finally, an observation sometimes made by constructivists is that in a way, physicists already use constructive mathematics in everyday practice without knowing it \cite{bauer2013intuitionistic}. They refer to the theory of Smooth Infinitesimal Analysis (SIA), which formalises the use of infinitesimals in analysis (replacing the $\e,\d$-formalism introduced by Cauchy and Weierstraß). Here a quantity $dx$ is called an infinitesimal (of second degree) if its square $(dx)^2$ is zero. While in classical mathematics all such infinitesimals would be zero, using intuitionistic logic we can only prove that $\neg\neg (dx = 0)$ for any infinitesimal $dx$, that is, they are \emph{potentially} zero.
An axiom of SIA which physicists tend to use on a daily basis is the \emph{principle of micro-affinity}, stating that an infinitesimal change in the variable of a function causes a linear change in the value of the function.

However, the simple observation that physicists sometimes use infinitesimals in their derivations cannot be used as an argument in favour of using constructive mathematics for physics, for these infinitesimals could easily be replaced by $\e,\d$-arguments in all cases, as Hilbert \cite{hilbert1926} has already argued. (Alternatively, classical theories of infinitesimal analysis can be constructed by using nonstandard models of the real numbers \cite{goldblatt2012lectures}.) Furthermore, to decide which mathematics to use for physics, we should not focus on the everyday work of the typical physicist, but instead on the more formal formulations of physical theories.

\subsection{Physical motivation}
While the arguments discussed above mostly focus on the mathematical practices of the physicist, the philosophical aspect of physical theories should also be taken into consideration. When doing this, the question arises which type of mathematics has the largest representational capacity, i.e.\ the largest capacity to represent physical reality. Attempts to answer this question can be divided into roughly two camps: those that are concerned with \emph{epistemic} physical reality and with \emph{ontological} reality, respectively.

\subsubsection{Epistemic perspective}
Some convincing arguments can be given that seem to show that constructivism reflects our knowledge about the physical world better than nonconstructivism. The general invalidity of the principle of the excluded middle (PEM) in constructive mathematics, for instance, could reflect our uncertainty about the existence of physical objects or values of physical quantities.

Consider the statement $\phi$ given by\footnote{Example taken from \textcite{bauer2008blog}.}
\[ \phi = \text{`There exists a particle which does not interact with anything in the Universe.’} \]

The statement cannot be falsified, viz.\ negative evidence would lead to a contradiction, which means that $\neg\neg\phi$ holds. On the other hand, no evidence of the particle exists, nor can it ever be given, which means we will never be able to conclude $\phi$. The distinction between $\neg\neg\phi$ and $\phi$, which seems important to an epistemic approach to physics, can only be made by using constructive mathematics.

The argument for constructivism is particularly strong from the stance of operationalism.\footnote{See footnote~\ref{fn:operationalism} on page~\pageref{fn:operationalism}.} 
See \textcite{cattaneo1995} for an example. The same can be said about intuitionism: \textcite{bauer2008blog}, for instance, gives an operationalist motivation for Brouwer’s continuity principle in physics, based on the fact that we can only use a finite amount of resources to arrive at a conclusion about the physical world (we will see in section~\ref{sec:intuitionistic-reals-physics} that there is also an ontological motivation for the continuity principle, when accepting a finite-precision interpretation).

In these epistemic approaches, the existential quantifier is probably interpreted as saying that ‘we have physical (positive) evidence’, or at least that `we must be able to obtain physical evidence using a finite amount of resources'. This sounds much like the constructive (BHK) interpretation of $\exists$ in mathematics (section~\ref{sec:constructive-mathematics}). However, as a modest scientific realist myself, I think we should at least be able to consider the possibility that a theory represents physical reality, and using constructive mathematics for physics on purely epistemic grounds either deprives us of that possibility, or, in the best case, it leaves open the question whether ontological theories should follow the same approach and adopt constructive mathematics as their foundation.

\subsubsection{Ontological perspective}\label{sec:ontological-perspective}
PEM can also be rejected from an ontological perspective of physics, in this case not because of our lack of knowledge about a physical system, but because of either physical indeterminism (for statements about the future) or indeterminacy (for statements about the present). Indeterminacies can arise from, for example, superpositions in quantum mechanics (‘Is Schrödinger’s cat alive?’) \cite{landsman2017} or from the (ontological) property of finite precision of physical quantities in Gisin’s alternative classical mechanics---so that statements about, for example, the relation between two physical quantities need not have a well-determined truth value. Gisin himself has proposed that intuitionism might indeed better reflect physical ontological reality \cite{gisin2020comment}. Indeterminism of the physical world is also closely related to PEM, because the truth of statements about the future need not be decided in the present.\footnote{One might argue that PEM should be associated with \emph{fatalism} rather than determinism. We will not make this subtle distinction here.}\textsuperscript{,}\footnote{The relation between PEM and statements about the future was already recognised by Aristotle in \emph{De Interpretatione}.} Intuitionistic mathematics in particular could be seen as a promising candidate to formulate indeterministic theories of physics, because of the central importance given to the notion of time (see section~\ref{sec:intuitionism-time}).\footnote{See \textcite{trzesicki1994} for some connections between intuitionism and indeterminism in terms of tense logic.} It is tempting to somehow ‘couple’ the physical and intuitionistic notions of time. We will investigate an approach which is based on this idea in the following sections. However, the fact that intuitionistic time is an `intuitive' or `subjective' notion while time in an ontological theory of physics should ideally be objective indicates that these concepts of time are fundamentally different and that any attempt to equate them might be fruitless. This is discussed in more detail in sections~\ref{sec:intuitionism-time} and~\ref{sec:equating-time}.

A similar problem emerges when using \emph{constructive} mathematics in general (not intuitionism specifically) to describe an ontological theory of physics. Namely, instead of a clash between `intuitive' and `objective' time, we see a clash between constructivism and scientific realism. From a realistic perspective, for example, it should (sometimes) be possible to say that something (physical) either exists or does not exist, independent of our knowledge---the statement $\phi$ from the previous section, for example, should satisfy $\phi\lor\neg\phi$. This is most likely the reason that the ontological motivation for using \emph{constructive} mathematics in physics is heavily underrepresented in the literature compared to the technical and epistemic considerations.
Perhaps one would ideally want to have a theory in which PEM does not hold for certain statements about the future (in order to express indeterminism), but does hold for those statements for which PEM is physically justified. Because the latter class contains statements, such as $\phi$, which are not constructively decidable, it seems unlikely that constructive mathematics can be consistently used to describe such a theory.

Indeed, it might be that in using constructive mathematics one is always forced to have an empirical attitude to physics. And trying, alternatively, to formulate yet another mathematics which respects the different physical motivations for decidability would be cumbersome, if not impossible.\footnote{Such an approach would be considerably complicated by the fact that it is not always clear whether a mathematical statement represents a statement about physical reality or not.} A similar conclusion is drawn by \textcite{hellman1998}:
\begin{quote}
	I believe \textelp{} that radical constructivism which grants the coherence of classical reasoning about the physical is inherently unstable.
\end{quote}
However, Hellman does not discuss the possibility that also an ontological interpretation can have aspects, like indeterminism or indeterminacy, for which classical principles like PEM are not desirable. In section~\ref{sec:intuitionistic-reals-physics}, we will see that the intuitionistic philosophy of the continuum and time can at least \emph{inspire} us to change our view of physical quantities. In section~\ref{sec:doubts-again} we revisit the incompatibility of intuitionistic and physical time, as well as the status of PEM in physics.

\subsection{What does physics say about constructivism in pure mathematics?}
Until now, this chapter has been concerned with reasons to use constructive mathematics to formulate physical theories. Can we, however, also reason the other way around? Can physical arguments be used to promote constructivism in pure mathematics?

One point of view, which seems to be taken by Hellman \cite[p202]{hellman1993} and perhaps also Weyl (see footnote~\ref{fn:weyl}), is that the main purpose of mathematics lies in its application to the natural sciences. This would imply that motivations for using constructive mathematics in physics should always take into account the representational capacity of mathematical theories. While this is true in part for fields such as mathematical physics, it does not hold for mathematics in its purest form. Although important research directions in mathematics like analysis have been inspired by the physical sciences, they have come to lead a life of their own and can now be considered separately from the physical sciences. Therefore, the motivation for using constructivism in pure mathematics should come from a different angle than in the case of mathematical physics. This has historically indeed been the case, as the development of constructive and intuitionistic mathematics was mostly carried out without physical applications in mind.\footnote{This holds to a lesser extent for Brouwer, who used some principles of physics in his thesis on intuitionistic mathematics from 1907 \cite{brouwer1907}, with which he obtained a PhD title in both mathematics and physics.}\textsuperscript{,}\footnote{However, the fact that the line between pure mathematics and physics is often blurred and that there is a constant exchange of ideas between the two subjects definitely complicates matters and, taken together with the fact that the overwhelming majority of mathematics developed during the previous century is classical, is one of the reasons that attempts to constructivise physics have not been taken very seriously by the wider scientific community.}

That is not to say that this different angle has nothing to do with physics at all: indeed, a significant factor in the motivation for constructivism in pure mathematics, and especially intuitionism, is the human condition, e.g.~the limitations on our ability to mentally construct mathematical objects,\footnote{Oscar Becker even notes that ``the very possibility of intuiting mathematical objects rests on our ability to realise them concretely in the world'' \cite{roubach2005being}.} and the fact that we can only perform finitely many operations per unit time. These arguments find their origin in the physical realm---but they are nevertheless manifestly different from the arguments based on representational capacity discussed above.

So what is the human condition, and how does the BHK interpretation follow from it? One way to talk about the BHK interpretation in a more formal way is in the framework of \emph{realisability}: In realisability, it is made explicit what it means that a statement can be proven or a mathematical object can be constructed. The notion of ‘proof’ of a statement $\phi$ is replaced by the notion of a \emph{realiser} $r$ for $\phi$, written as $r \Vdash \phi$. $r$ can be thought of as a computer program or other method witnessing the truth of $\phi$. In this way, constructive mathematics can be brought into relation with various models of computation. Usually these are mathematical models; \textcite{bauer2013intuitionistic}, however, explores the possibility of using physical processes or objects as realisers.

His approach is closely connected to the so-called \emph{physical Church-Turing thesis}. This is the statement that Turing’s mathematical notion of computation captures precisely those computations that can be performed using physical devices.\footnote{This thesis differs from the more widely-known \emph{Church-Turing thesis}, which states that Turing’s computability notion (as well as most other computability notions such as recursion theory, which are equivalent to Turing's) captures precisely the computations that can be performed by an (idealised) mathematician using only a pen and (an arbitrary amount of) paper. See also section~\ref{sec:church-turing-computability}.} Evidence against this thesis has been given from different fields of physics: some suggest that quantum computers may be able to venture ‘beyond the Turing barrier’ \cite{arrighi2012physical}, and a general relativistic model of ‘hypercomputation’ has been proposed that makes use of black holes to perform supertasks \cite{andreka2009hypercomputing}, which would mean it would be possible to decide the halting problem.

Bauer uses an informal notion of ‘realisability in the physical world’ to arrive at conclusions such as ‘Decidability of reals is real-world realized iff we can solve the halting problem’. More generally, this would mean that it might be possible to ‘constructively’ decide LPO, if a physical notion of constructivism is adopted and e.g.\ hypercomputation is indeed physically possible. Bauer also gives conditions on when Brouwer’s continuity principle is real-world realised \cite{bauer2013intuitionistic}.\footnote{This is closely related to the debate on the meaning of ‘lawlike’: while some constructivists accept Church’s thesis, which states that lawlike is synonymous to computable in Turing’s sense, others suggest that a broader notion should be adopted, which captures all processes that can be physically computed by humans, or even laws which depend on, but are fully determined by, \emph{indeterministic} processes (like BKS sequences, discussed in section~\ref{sec:lawlike-sequences}).}

Although this approach is not formal and is closely related to unsolved problems in physics, it is interesting to speculate how physics can in fact influence the constructivism debate in pure mathematics. It is important to note, however, that these results only impact decidability of equality on \emph{mathematically} defined reals; they cannot be used for the discussion of the previous subsections, as they have no influence on, for example, the determinacy of real numbers \emph{representing physical quantities}.

Finally, whatever the reasons for adopting constructivism, the fact remains that constructive, intuitionistic and classical mathematics are all (as far as we know) consistent mathematical frameworks. Hence, the foundational `debate' is about not much more than a difference in interpretation or personal preference---which is perfectly acceptable in pure mathematics, but when applied to physics could bring along a difference in representational capacity for the objective world. As we have seen in the preceding sections, whether constructive mathematics can be successfully used to this end remains questionable. In section~\ref{sec:doubts-again}, we will revisit this question for intuitionistic mathematics specifically.

\section{Intuitionistic reals in classical physics}\label{sec:intuitionistic-reals-physics}

We will now investigate what intuitively happens when we apply some of the \emph{ideas of} the intuitionistic philosophy of the continuum and of lawless sequences to physics. \Textcite{gisin2020comment} has suggested that this could introduce a notion of ‘creative time’ to physics and that it could provide a solution to the problems introduced in Chapter~\ref{chap:whats-the-problem}, in a way similar to his own alternative formulation of classical mechanics, discussed in Chapter~\ref{chap:gisin}.

Note that Del Santo and Gisin's theory is not intuitionistic: for example, it assumes the existence of fully determined, `finished' infinite sequences of propensities, contrary to the intuitionistic view of infinite sequences. Moreover, it makes use of probabilities, which are not involved in the intuitionistic definition of real numbers. Finally, we have seen that using the binary expansion to define reals is not possible intuitionistically (section~\ref{sec:int-reals-other-constructions}, page~\pageref{sec:int-reals-other-constructions}). Motivated by these considerations, we will instead use the intuitionistic definition of real numbers based on infinite sequences of shrinking and dwindling rational intervals (see section~\ref{sec:int-reals}), of which at each point in time only a finite initial segment is known.

However, it will turn out that in many respects, also this approach does not conform to the intuitionistic spirit, in part because of the challenges discussed in section~\ref{sec:ontological-perspective}, and that it is also very naive from a physical perspective (see sections~\ref{sec:equating-time} and~\ref{sec:doubts-again} respectively). However, intuitionistic reals provide an intuitive way to think of finite-precision quantities, based on geometrical aspects instead of binary expansions, and they set the stage for a more consistent formulation, which we will present in Chapter~\ref{chap:further-development}. In this section we focus on some conceptual consequences only.

We focus on the Hamiltonian formulation of classical point mechanics. The idea is to concretise the intuitionistic notion of a course of time using the time parameter of physics. Using intuitionistic reals to describe physical quantities in this way would mean that at each moment in time, each (one-dimensional) physical quantity is only determined up to a rational interval of nonzero length.\footnote{\Textcite[p435]{hellman1998} finds that a constructive ‘rule for generation’ (a sequence of rationals approaching a real) should not be identified with a physical object (the real). Here, however, we do exactly that.} Once again, this is meant in an ontological way: Nature has simply not yet revealed a more specific value for the physical quantity. As time passes, the rational interval can at some points suddenly shrink, which we sometimes refer to by \defn{collapse} of the interval, meaning that the indeterminacy of the value of the physical quantity decreases. In the terminology introduced in section~\ref{sec:heraclitus-parmenides}, this manifests a step in Heraclitus time. In contrast to the orthodox interpretation of classical mechanics, where all information about the physical quantities is already present in the initial conditions, in this interpretation, new information about the real numbers is generated as time passes.

The rational intervals associated with physical quantities take on a quite important role in the ontology of our new physics,\footnote{This is a problematic fact, just as the use of rational numbers in \textcite{dSG19} (see section~\ref{sec:gisin-reaction}); we will ignore this problem for now and focus on the conceptual consequences only, but we return to the matter in section~\ref{sec:technical-problems}.} but their role can also be generalised to other formulations of finite-precision theories (such as the one given in Chapter~\ref{chap:further-development}). We use the general term \defn{domain of indeterminacy (DOI)} to refer to the set of real numbers (or rather: the region of the continuum) that the physical quantity is determined to lie in at some time $n$ (in Heraclitus time). In this section, we denote the DOI of a physical quantity $x$ at time $n$ by $x_n$, and $x_n$ is a rational interval. In the simple case that $x$ is not subject to Hamiltonian evolution (i.e.\ does not evolve over Parmenides time), DOIs are subject to the requirement that $x_{n+1} \subseteq x_n$ (a more rigorous definition will have to wait until Chapter~\ref{chap:further-development}).

\

Let us investigate how some fundamental results from intuitionistic mathematics translate to physics. As an example, consider a quantity $x$ that represents the position of a stationary particle in a one-dimensional system, such that by the above requirements, $(x_0,x_1,\ldots)$ is a sequence of shrinking rational intervals. If $n$ is such that $x_n$ contains zero, then the statement
\[ x > 0 \]
is `vermetel' or `reckless' (in Brouwer's terminology; see Appendix~\ref{app:intuitionism}) at time $n$, which is to say: the statement ``the particle is to the right of zero’’ does not have a well-defined truth value at time $n$.\footnote{In Kreisel’s formalism of the creating subject, this reads $\neg\,\square_n (x > 0)$.} The statement can, however, become either true or false in the future.\footnote{\label{fn:triangle}This example is perhaps too simple, because it is not invariant under shifts and reflections of the coordinate system, so it may be difficult to say that it can have an inherent truth value at all. As an alternative, consider three particles in two- or three-dimensional space, and the statement ‘the particles form the vertices of an obtuse triangle’.}

The physical time parameter itself also becomes an inherently imprecise notion. Consider a particle that moves freely and bounces off a wall at time $t_0$. Its position is given by
\[ x(t) = \begin{cases}
	y(t) & \text{ if } t \leq t_0, \\
	z(t) & \text{ if } t \geq t_0,
\end{cases} \]
for some functions $y,z$ with $y(t_0) = z(t_0)$ (example taken from \textcite{bauer2013intuitionistic}).\footnote{This example is again naive, because the momentum of the particle is discontinuous at $t_0$, which is only possible intuitionistically if we accept that the particle's momentum is a partial function of time (i.e.\ not defined for all times $t$). Indeed, it is questionable whether elastic collisions are possible in a finite-precision physics; this agrees corresponds to the observation that in regular physics, elastic collisions are often seen as only \emph{limiting cases} of real physical phenomena. As an alternative example, avoiding walls and elastic collisions, consider the situation in footnote~\ref{fn:triangle} with the particles now moving relative to each other, and denote by $t_0$ the time at which the triangle becomes obtuse.} If the initial position of the particle is not infinitely precisely determined, then neither is the time $t_0$. We see that the domain of definition of $x$ is $(-\infty,t_0] \cup [t_0, \infty)$. Interestingly, in intuitionistic mathematics this set is not equal to the whole continuum $\R$, because that would mean that $\forall t\in\R\,(t \leq t_0 \lor t \geq t_0)$, which is not the case in intuitionism (see Equation~\eqref{eq:llpo-reals}). Hence also the time continuum is `viscous’ and cannot be sharply cut into two: this is what Gisin calls `thick time’ \cite{gisin2020comment}.

\

Moreover, the real number version of Brouwer’s continuity principle BCP$_\R$ \eqref{eq:bcp-r} makes sense in finite-precision theories, since any relation between physical quantities and natural numbers that is well-determined at some point in time should depend only on finitely much information about the physical quantity in question, analogously to BCP$_\R$. It also means that any total function that takes only physical quantities as arguments should be continuous, which is analogous to the statement that all real functions in intuitionism are continuous (which is equivalent to BCP$_\R$). Indeed, continuity (and even smoothness) of functions is often implicitly assumed in physical practice.\footnote{The momentum of the particle in the previous example is a function of time that is not continuous but is partial (defined only on $(-\infty,t_0]\cup [t_0,\infty)$).}

As another example, consider the mean value theorem and its constructive approximate variant. Suppose $f$ is a function of space, for which $f(x) = \a$ and $f(y) = \b$, where $x\apart y$ and $\a,\b$ are physical quantities with $\a \apart \b$ (that is $\neg(\a_n \approx_\S \b_n)$, where $n$ is the current time, in the notation of Appendix~\ref{app:intuitionism}). Because $\a$ and $\b$ are inherently determined only up to a finite precision, we see that the classical mean value theorem does not hold: there is no $z$ such that $x < z < y$ and $f(z) = \frac{\a+\b}{2}$ with certainty. The constructive approximate variant is valid, however, in the following way: for any $m\in\N$, if we wait sufficiently long, the indeterminacy in $\a,\b$ and $f$ will become sufficiently small to pinpoint a $z$ such that $|f(z) - \frac{\a+\b}{2}| < \frac{1}{m}$ with certainty (again, the present approach limits us to speaking in informal terms; see Chapter~\ref{chap:further-development} for a more formal approach).

Finally, using the intuitionistic philosophy of the reals sheds light on the nature of collisions in a finite-precision point-particle\footnote{Point-particle seems to suggest that the location of each particle is known to infinite precision, but all that I mean here is that the particles have zero diameter, so that the positions can be determined to \emph{arbitrary} accuracy.} physics. Consider a system of two particles moving in three-dimensional space approaching each other along the $z$-axis,\footnote{That is, \emph{more or less} along the $z$-axis, as their direction of movement is not defined with infinite precision.} each of the particles' position being associated a certain domain of determinacy. When the distance between the particles decreases and becomes on the order of magnitude of the indeterminacy in the particles' positions, then the positions become more physically relevant (as is certainly the case if, for instance, the particles exchange an inverse-square force). This causes (one of the) DOIs to collapse to a smaller interval, upon which the particles approach each other closer. This process continues until the particles pass each other; however, if collapse of the DOIs is indeed indeterministic, then the particles avoid a collision and thus pass by each other \emph{with probability 1}, as the probability is zero that the DOIs pertaining to their $x$- and $y$-positions converge to the same point. In orthodox point mechanics, on the other hand, there are initial conditions which are guaranteed to lead to point-particle collisions.

\

We see that using the intuitionistic philosophy of real numbers to describe physical quantities intuitively provides an alternative approach to a finite-precision theory of classical mechanics, and that doing so reveals many interesting features of this finite-precision theory. Moreover, in contrast to Gisin and Del Santo's theory, the approach given above does not depend on the base 2 representation of real numbers, which solves one of the problems discussed in section~\ref{sec:gisin-reaction}. As we will see in section~\ref{sec:technical-problems}, however, the approach sketched above still suffers from some major technical problems. First, however, let us make some more remarks on the interpretative consequences of using intuitionism for physics.

    

	
	
	
	
	
	

\section{Problems of the intuitionistic approach}\label{sec:doubts-again}
In section~\ref{sec:constructivising-physics} we already concluded that a constructive approach to physics might not be compatible with an ontological (realist) worldview. In section~ \ref{sec:intuitionistic-reals-physics} and in \textcite{gisin2020comment}, on the other hand, we have seen that the way time is treated in intuitionism might better represent the flow of time in physics than classical mathematics does. This suggests to ‘couple’ the intuitionistic and physical notions of time. As we will see below, however, such an attempt cannot be taken too seriously.

\subsection{Equating intuitionistic and physical time}\label{sec:equating-time}
The main reason for this is that time forms an important aspect of intuitionistic mathematics precisely because intuitionism is formulated from the perspective of the (idealised) practising mathematician, or, in Brouwer’s terminology, the \emph{creating subject} (see also section~\ref{sec:intuitionism-time}). Intuitionistic time refers to the time experienced by the creating subject, and can therefore be seen as subjective, whereas time in an (ontological) theory of physics should, ideally, be an objective\footnote{Though not necessarily global or absolute (considering relativity theory).} phenomenon.

As discussed in section~\ref{sec:intuitionism-time}, time in intuitionism relates, in the first place, to (i) the time necessary (for the idealised mathematician) to prove a theorem or to perform computable operations. In particular, the creating subject can only perform finitely many computations per unit of intuitionistic time. Apart from that, time appears (ii) in the concept of choice sequences, which are defined in a `step-by-step manner’ (such that the amount of new information considered or `revealed' per unit of intuitionistic time is finite).

Therefore, if intuitionistic time were literally `equated’ to physical time, then not only would it (ii) take time for physical quantities to become more precisely determined, but also (i) trivial calculations like Parmenides time evolution (which involves computing the solution to Hamilton’s equations) could only be performed with finitely many computations at a time. This would mean that even the finite-precision state of the Universe would only be determined (`exist’) at a discrete set of timepoints,\footnote{Namely, we cannot let the creating subject calculate the state of the Universe at every time $t\in\R$, nor at every time $t\in\Q$, because both options would require infinitely much computation per unit time, even if the state at every time is only determined with finite precision.} which is a profound and undesirable consequence.\footnote{For example, if (Parmenides) time is discrete, then the system cannot always `sense’ when a quantity becomes physically relevant and thus when its domain of indeterminacy should collapse. While there are indeed theories that assume that time is discrete, in the current setting, discrete time is not desirable.}

Indeed, for our ontological finite-precision theory of physics, \emph{only the lawless aspects} of intuitionism are relevant, which correspond to the generation of truly new information and hence to indeterminism, rather than the time needed to perform trivial computations of which the outcome is fixed given the present configuration of the Universe.
Because it is difficult (or as some would argue, impossible) to separate lawlikeness and lawlessness in intuitionism (see section~\ref{sec:separating-lawlike-lawless}), and because both notions are subject to the same `intuitive time' experienced by the creating subject, it is questionable whether the philosophy behind finite-precision physics can truly be called intuitionistic. This is in line with the conclusion of section~\ref{sec:ontological-perspective}; the above arguments also apply to the debate of using intuitionism or constructivism for indeterministic theories in general.

\

To support 
these arguments, let me make some more remarks relating to the principle of the excluded middle. First of all, the motivation to use intuitionism for indeterministic theories because the invalidity of the principle of the excluded middle expresses indeterminism well pales by the observation that lawless aspects are not necessary to see that PEM is intuitionistically invalid. Take for example the statement $G$ defined in section~\ref{sec:NN}; it does not satisfy $G\lor\neg G$, because the Goldbach conjecture has not been proven nor disproven; but the truth value of $G$ is, in a certain sense, still fixed.\footnote{This example might actually not say very much about physics, since its truth value is not directly related to a statement about physical objects; however, it is in my opinion self-evident that also truth values of statements about physical objects which are fixed by the current state of the Universe should not depend on some `creating subject’ having proven or disproven the statement.} So in general, the mathematical future (of the creating subject) is open for a different reason than why the future of the Universe is open in an indeterministic theory of physics.\footnote{Indeed, although the admission of lawless sequences distinguishes intuitionism from other forms of constructive mathematics, lawless sequences are not the most important part of the intuitionistic philosophy; Brouwer himself, for instance, did not accept lawless sequences at first, and even when he did, he never used an explicit example of a lawless sequence in his proofs. Lawless sequences were only explicitly defined by Kreisel (see section~\ref{sec:lawless-kreisel-troelstra}).}

Secondly, the present-day formulations of indeterministic theories (in classical mathematics, that is) show that the invalidity of PEM is not \emph{necessary} to express indeterminism. Instead of saying that certain statements about physical objects do not have a well-determined truth-value, we can also say that these statements are ill-defined themselves. For example, when two domains of indeterminacy (DOIs, recall the terminology of section~\ref{sec:intuitionistic-reals-physics}), pertaining to the $x$-coordinates of two particles, overlap, then the statement ``particle 1 is to the left of particle 2’’ can be seen as ill-defined. In this respect, therefore, theories formulated in classical logic might be equally able to represent indeterministic physics as theories formulated in intuitionistic logic.

A constructivist could still argue that it is \emph{easier} to use intuitionistic logic, because then one need not worry about what statements are well-defined and which are not. While this reasoning is sometimes used in the context of computability theory (see \textcite[p441]{bridges1999}), it is questionable whether it can be applied to physics, considering the cases where PEM holds physically but not constructively, as discussed previously.

\subsection{Technical problems}\label{sec:technical-problems}
Apart from the conceptual issues of the previous section, using shrinking sequences of rational intervals as proposed in section~\ref{sec:intuitionistic-reals-physics} brings along severe technical problems from the physical side. These are similar to the problems of Del Santo and Gisin’s approach discussed in section~\ref{sec:gisin-reaction}.

First of all, rationality of a number is conserved neither under many reasonable coordinate transformations, nor under Hamiltonian time evolution (even for rational lapses of time). Hence, rational intervals do in general not transform to rational intervals, and moreover, hyperrectangles representing the DOIs of multidimensional quantities do not transform into hyperrectangles. This is a problem if we want to assign ontological meaning to these rational intervals and rectangles.

Moreover, the problem of section~\ref{sec:gisin-reaction-hamiltonian-evolution} remains, i.e.\ it is unclear how Heraclitus time and Parmenides time are linked. One of the reasons for this is precisely that the set of rational intervals is not invariant under Hamiltonian time evolution.

Finally, it is still unclear in what way the processes of collapse of DOIs of different physical quantities depend on each other. Two quantities that are inextricably linked, like the velocity and position of a particle, can in general not both collapse randomly and independently, because this could lead to physically impossible world lines.

In the next chapter, we will solve these issues by using recursively enumerable open sets instead of rational intervals and by considering regions of phase space instead of DOIs of individual quantities.

\section{Lawless sequences and indeterminism}\label{sec:lawless-indeterminism}
As indicated above, the most important aspect of the intuitionistic philosophy that might be of use to finite-precision physics, or perhaps to any indeterministic theory of physics, is formed by choice sequences, and in particular their \emph{lawless} aspects. In this section, we focus on the formalisation and philosophy of lawless sequences proposed by Kreisel and Troelstra, summarised in section~\ref{sec:lawless-kreisel-troelstra}, and relate it to indeterminism in physics. In particular, we will try to investigate whether or not lawlessness follows automatically from the fact that the laws of an indeterministic physical theory admit multiple possible futures.\footnote{Unfortunately, to assume the well-posedness of this question, we have to ignore the strong arguments against the meaningfulness of Kreisel and Troelstra's definition and against the well-definedness of `lawless sequence' in general (see section~\ref{sec:separating-lawlike-lawless}). 
A promising idea for further research is to focus on extensional notions of lawlessness in intuitionism instead (section~\ref{sec:lawless-extensional}).}

One thing to get out of the way is that physical indeterminism of course does not imply \emph{absolute} lawlessness: indeterministic theories like quantum mechanics and finite-precision classical physics have laws of motion, but these do not \emph{completely} fix the future given the present.\footnote{As Popper indicates, “Indeterminism merely asserts that there exists at least one event (or perhaps, one kind of events \textelp{}) which is not predetermined” \cite{popper1950indeterminism}.} Moreover, even wave function collapse in quantum mechanics is not completely lawless, as it subject to \emph{probabilistic laws} (which lead to e.g.\ the law of large numbers). Unfortunately (from this point of view), however, probabilities play no role in the intuitionistic concepts of choice and lawless sequences.

In this section, therefore, we will only consider the simple case of an indeterministic theory which does not have probabilistic laws. We will further assume, for simplicity, that we can `extract’ one \emph{completely} indeterministic process from the evolution of a physical system, parameterisable by an infinite sequence of natural numbers which is not described by any (probabilistic) law (except perhaps a spread law, see section~\ref{sec:spreads}) and is independent of other physical processes; and which together with the physical laws completely determines the evolution of the system. Let us call such a sequence an \defn{evolution sequence} of a world. (We might associate this sequence with the sequence of throws of ‘God’s die’, but only if we divest this die of all probabilistic properties that dice are normally associated with.)\footnote{Asserting the existence of such a sequence does not refute indeterminism; remember that we view infinite sequences as projects that develop in time.} An example of such a sequence in finite-precision classical physics could be a sequence of natural numbers, each element describing the DOIs of all physical quantities.

\

If we assume the existence of such an ‘evolution sequence’, then does it make sense to impose the condition that it must be a \emph{lawless} sequence? In particular, let us consider the following two types of laws of physics:
\begin{itemize}
    \item[(a)] “Evolution of nature is given by (differential equations and) an (arbitrary) sequence”;
    \item[(b)] “Evolution of nature is given by (differential equations and) a \emph{lawless} sequence”.
\end{itemize}
The question arises: \textit{Are these formulations of the physical laws equivalent, and if not, how do they relate to each other}? Let us explore two possible views on this question, one of an `orthodox physicist’, who is used to using classical (nonconstructive) mathematical reasoning, and one of a ‘time extremist’, who thinks, analogously to the intuitionistic mathematician, that we should see physical time as a forever unfinished process.\footnote{In literature on philosophy of mathematics, one often finds discussions between ‘the classical mathematician’ and ‘the constructive mathematician’, but the present discussion is on philosophy of physics, and it would be inaccurate to call the time extremist a ‘constructivist’.}

Being asked the italicised question, the orthodox physicist might immediately respond that ``these two formulations are clearly not equivalent: because I can describe a world which has a lawlike evolution sequence. This world is possible according to the laws specified in (a), but not according to the laws specified in (b). On the other hand, every world that satisfies (b) also satisfies (a); hence (b) is strictly stronger than (a).''

This view follows from a common philosophical way to talk about physical laws, where laws are related to the set of \defn{possible worlds} at which they obtain---see section~\ref{sec:determinism}. In our simplified case, we can identify the set of possible worlds $\mathcal W$ with the set of all sequences of natural numbers (when (a) is adopted) or the set of all \emph{lawless} sequences of natural numbers (when (b) is adopted). For a world $W\in\mathcal W$, we denote by $W_t$ the $t$-th element of the evolution sequence of $W$, where $t\in\N$. Using this notation, we employ Earman's definition of determinism as given in Equation~\eqref{eq:earman-determinism}.

Note that the world $W$ described by the orthodox physicist, namely one given by a \emph{lawlike} evolution sequence, is not deterministic, as for determinism to hold at $W$, the future should be fixed given only a time slice $W_t$, which does not include information about the specific lawlike sequence that represents $W$.

\

Reasoning in terms of a (possibly infinite) set of \emph{finished} possible worlds, such as the orthodox physicist does, requires a quite platonistic view. The time extremist would argue that we should not, or only under very limited circumstances, talk about finished worlds. They would therefore give a completely different answer to the italicised question, drawing inspiration from Kreisel and Troelstra’s philosophy of lawless sequences: “Suppose that the physical laws are given by (a), and consider the sequence $\gamma$ defined as the evolution sequence of the actual world, i.e.~\emph{the sequence that my world experiences}. $\gamma$ is defined intensionally and is forever unfinished, and because formulation (a) imposes \emph{no} law on the continuation of $\gamma$ \emph{at any stage} of its development, the sequence $\gamma$ is a lawless sequence, in Kreisel and Troelstra’s sense. In other words, the absence of a specified law on the sequence in (a) makes the definition equivalent to definition (b). This means that (a) being obtained at the actual world is equivalent to (b) being obtained at the actual world.”

Confused, the orthodox physicist replies: “You are forgetting that Troelstra and Kreisel’s axioms are actually about the \emph{set of} lawless sequences and the relations between them; indeed to make sense of indeterminism, we have to consider not only the actual world, but also other possible futures. Even if we are constrained to being in a time slice of the actual world, we must at least be able to \emph{talk} about those other worlds. How would that match with your view?”

The time extremist might compromise: “I can still define other sequences $\gamma'$ intensionally by letting them denote the time evolution of given possible worlds, which are of course forever unfinished. In this way we are able to compare different worlds, and if the laws of nature are as formulated in (a), then the same argument as I used earlier can be used to justify that those worlds satisfy (b) as well.”

Here, however, the time extremist’s view deviates much from the traditional possible worlds view, because possible worlds are usually seen as finished objects. Indeterminism in physics is usually interpreted in the sense that one cannot infer which possible world is the actual world on the basis of only one timeslice of the actual world. This is a purely \emph{extensional} consideration (i.e.~based solely on the present \emph{configuration} of the actual world and the future \emph{configurations} of possible worlds, analogous to the use of the term \emph{extensional} in mathematics; see section~\ref{sec:lawless-kreisel-troelstra}). If we do want to speak about indeterminism using unfinished worlds in the intuitionistic sense, therefore, we have to relinquish this traditional possible worlds view.

\

Nevertheless, when we define the physical laws as in (b), Kreisel and Troelstra’s lawless sequences \emph{prima facie} seem to provide a good alternative approach to possible worlds, because when the set $\mathcal W$ of possible worlds contains only \emph{unfinished}, \emph{lawless} evolution sequences, then one can constructively prove from LS1~\eqref{eq:ls1} a positive version of Earman’s indeterminism~\eqref{eq:earman-determinism}. Namely, let us define \defn{positive indeterminism at} $W\in\mathcal W$ by
\[ \exists W'\in \mathcal W\,\exists t\,(W_t = W'_t \land W \apart W') \]
(where $W\apart W'$ means $\exists s\,(W_s \neq W'_s)$), i.e.\ we can find a possible world $W'$ and a timepoint $s$ such that $W'$ agrees with $W$ at time $t$ but is distinct from $W$ at $s$.
Assuming, now, that the timepoint $t$ lies in the actual world $W$’s past, viz.~$W_t$ is determined, then LS1 lets us find worlds $U, V\in\mathcal W$ such that $W_t = U_t = V_t$, but such that $U_{t+1} \neq V_{t+1}$. Then $U \apart V$ (i.e.~they are \emph{positively} unequal), and by cotransitivity of $\apart$, we have ${W\apart U \lor W\apart V}$. Hence, 
\[ (W_t = U_t \land W \apart U) \lor (W_t = V_t \land W \apart V) ,\]
which constructively proves positive indeterminism as defined above.

\

On a closer look, however, it is questionable whether Kreisel and Troelstra’s formalism provides an appropriate framework to talk about indeterminism. The axiom LS2 (see Equation~\eqref{eq:ls2}) is warranted mathematically by heavily relying on the intensional properties of lawless sequences: the mathematician knows the `identity’ of the sequences $\a$ and $\b$ and can therefore decide with certainty whether they are equal or not. Translated to possible worlds in physics, however, this axiom leaves us with the unusual fact that we can distinguish worlds purely on the basis of some intensional property, not pertaining to their spatio-temporal configuration. The essential point of indeterminism (in the current setting of a set $\mathcal W$ of unfinished possible worlds) is, on the other hand, that we \emph{cannot} distinguish two worlds which have not (yet) diverged.

Also LS3 poses problems to the time extremist’s view, because its consequence, Equation~\eqref{eq:lawless-neq-lawlike}, implies that (a) is not at all equivalent to (b).

Finally, if we adopt the view suggested by (b) that the set of possible worlds is given by Kreisel and Troelstra’s set of lawless sequences, then we have to somehow deal with the fact that this set seems to be language-dependent, which we discussed in section~\ref{sec:lawless-kreisel-troelstra}. In particular, if three evolution sequences $\a,\b,\g$ are related by some lawlike relation, for instance $\forall n(\g(n) = \a(n)+\b(n))$, then according to Troelstra, they cannot all be lawless, but their lawlessness ‘depends on the context’ \cite{troelstra77}. Hence, $\a$, $\b$ and $\g$ cannot all represent possible worlds, but which ones do and which one does not is a question that cannot be objectively answered.

\

Trying to make sense of Kreisel and Troelstra's philosophy of lawless sequences in the context of indeterminism in physics has turned out very difficult, if not impossible.\footnote{Having now fully understood Kreisel and Troelstra's concept and the controversy surrounding it, I think it might not even have been worth the attempt.}
By viewing infinite sequences and time as processes that are never finished, the time extremist’s first instinct was that definitions (a) and (b) of the physical laws are equivalent. Whether this is true is questionable, since we have seen that defining the physical laws as in (b) using Kreisel and Troelstra’s sequences brings along many problems.
Most of these follow from the intensional nature of Kreisel and Troelstra's definition, which already gives rise to much debate within intuitionistic mathematics (see section~\ref{sec:separating-lawlike-lawless}). However, while intensional definitions can be considered acceptable in (intuitionistic) mathematics, using them in physics leads to even more extreme situations. What matters in physics are arguably only the extensional properties of objects, for two worlds which have the same spatio-temporal content should probably be considered identical.\footnote{And finally, of course, the evolution of the Universe is in our view not determined by free choice of an idealised mathematician. One could involve deities or human free choice in discussions about indeterminism, but that is a story for another time.}

If one really prefers using definition (b) over (a), then it would be wise to use an \emph{extensional} definition of lawlessness such as the ones discussed in section~\ref{sec:lawless-extensional}. The relation between these lawlessness definitions and indeterminism in physics, and whether they are satisfied automatically when the physical laws are formulated as in (a), should be investigated in more detail. However, these extensional notions are classical and do there not reflect the way that we would ideally like to view time in physics.

It seems best to formulate the physical laws as in (a) (as we will do in Definition~\ref{def:blurred-world}), and to live with the fact that the actual world might be ‘extensionally lawlike’ when viewed from the end of time. As noted earlier, this does not contradict indeterminism because the law specifying a sequence cannot be inferred from an initial segment only.

    \chapter{Formalism for finite-precision classical mechanics}\label{chap:further-development}

\lettrine[lines=3]{U}{ntil now, we} have studied two approaches to formulating finite-precision theories. In Chapter~\ref{chap:gisin}, we discussed the first attempt at such a formulation, proposed by \textcite{dSG19} in 2019 and based on the binary expansion of real numbers. We concluded, amongst other things, that the central role of the binary expansion precludes the theory from describing complete objective reality. Inspired by the idea that intuitionistic mathematics might be a natural language for finite-precision physics, we attempted in section~\ref{sec:intuitionistic-reals-physics} to use the intuitionistic real number system and notion of time to describe finite-precision quantities, but failed, as we had anticipated in section~\ref{sec:ontological-perspective} and as became clear in section~\ref{sec:doubts-again}.

In this chapter, I will outline yet another approach which might provide a suitable formalism for finite-precision theories of classical physics. It is inspired by the intuitionistic philosophy of the continuum, but, building on previous conclusions, is itself entirely classical (as regards the mathematics) and, in particular, does not try to equate physical and intuitionistic time as in section~\ref{sec:intuitionistic-reals-physics}. In section~\ref{sec:ontology-of-indeterminacy} I discuss some final necessary physical considerations, and in section~\ref{sec:mathematics-development} I present the mathematical formalism.

Before doing this, however, let us briefly recap the motivation for developing a finite-precision theory in the first place, which we discussed in section~\ref{sec:problem-randomness-and-indeterminism}. The main aim of developing this kind of theory is to show that there is no physical evidence for the usual view that physical quantities are given by infinitely precise real numbers. I do not intend to question the empirical predictions of existing theories, but only suggest that different but empirically equivalent theories are possible, which correspond to an ontology in which physical quantities do not have point-like values. In particular, we can conclude from these alternative theories that not \emph{points} but \emph{regions} are central to the physical continuum, and perhaps even that points play no role in physical ontology (see also section~\ref{sec:continuum}).

This purpose justifies limiting ourselves, for now, to classical mechanics. This has several advantages: apart from it being a simple (but still quite general) theory, the absence of indeterministic aspects in the usual interpretation of classical mechanics allows us to investigate the nature of the indeterminism which arises naturally in finite-precision theories. By giving a precise and consistent mathematical formalism for finite-precision classical mechanics, I hope to show that finite-precision interpretations of theories in general have the potential to be consistent. This might inspire others (and myself) to develop finite-precision formalisms for other theories such as quantum mechanics and general relativity theory.

Although I do not buy the arguments discussed in section~\ref{sec:problem-infinite-information} for the supposed principle that a physical system should be describable by only finitely much information, I do think that such theories are elegant, for the reasons already outlined in section~\ref{sec:randomness-indeterminism}. Namely, mathematical uncomputability, which is the mathematical interpretation of ‘infinite information’, often comes along with some degree of randomness or unpredictability, as is e.g.\ the case for most real numbers; and finite-information physical theories, which assume that this information is generated over time instead of being given in the initial condition only, elevate this mathematical notion of randomness to the physical notion of indeterminism. It indeed makes sense to associate the mathematically random objects in physical theories with physical indeterminism (recall that mathematical 1-randomness is a rigorous formulation of e.g.\ patternlessness and unpredictability).

If one is only interested in finite precision and not in finite information per se, then the results of the following sections are still useful (just skip all results about r.e.\ openness).

\

Finally, \emph{if} the finite-precision interpretation were true and classical chaotic systems were indeed indeterministic, then this would (surprisingly) have practical implications, as we would not have to rely on quantum processes to generate truly random numbers. Quantum random number generators today have practical use cases in cryptography, statistics and lotteries; in addition, the existence of true (algorithmic) randomness in classical or quantum physics impacts the validity of probabilistic algorithms \cite{chaitin1978note}.

\section{The ontology of indeterminacy}\label{sec:ontology-of-indeterminacy}
There are multiple possibilities to describe `indeterminacy' or `finite precision' of physical quantities, some of which are based on probability distributions and others of which are based on converging sequences. As these different descriptions result in different ontologies, before presenting our new theory, let us study some of these possibilities in more detail.

It is natural and tempting to associate indeterminacy with probability distributions, especially when comparing indeterminacy to `uncertainty' in quantum mechanics. We have seen that e.g.\ Del Santo and Gisin’s approach to finite-precision physics uses probabilities.\footnote{As do some other suggested reformulations of physics, such as those presented by \textcite{caticha2019information} (who proposes a `blurred spacetime’ based on the information metric) or \textcite{friedman1999fuzzy} (who apply t-norm fuzzy logic to physics) or \textcite{ydri2001fuzzy} (who proposes quantisation or `fuzzification’ of the manifold underlying QFT). We do not discuss these approaches in more detail because they are not motivated by the considerations in Chapter~\ref{chap:whats-the-problem} and differ from our conception of finite-precision theories in many ways.} However, for a number of reasons, I feel that probabilities do not have a place in finite-precision theories (in my way of thinking about those theories, indicated in the introduction to this chapter). The main reason is that I am not sure what should be the interpretation of these probability distributions, as a frequentist interpretation is not possible. In quantum mechanics, one can prepare an ensemble of particles in the same state and perform repeated measurements, in which case the frequency diagrams approximate the probability distribution given by the wave function via the Born rule. In finite-precision physics, however, one cannot prepare an ensemble of physical quantities to have the same the state of indeterminacy (i.e.\ to be associated with the same probability distribution). This is necessarily true if we require that the finite-precision theory is empirically equivalent to the corresponding orthodox theory in which the physical quantities under consideration are infinitely precise. Suppose, namely, that there is a process to induce a physical quantity, like the position of a particle, to be determined up to some probability distribution, and that this process can be applied to multiple quantities in order to arrange them into the same probability distribution. Then measurement of these quantities in general yields different results, while according to the orthodox interpretation, all measurements should yield the same results, as the quantities were all prepared equally.

Moreover, if one values the arguments against infinite information in real numbers, it should be noted that while describing physical quantities by probability distributions removes the point-like nature of the quantity, probability distributions still require infinitely much information to describe in general.\footnote{One possibility to describe probability distributions with finitely much information is to assume that they are given by the maximum entropy principle \cite{caticha2008} and to only specify the constraints, making sure to use finitely much information (see e.g. the approach in \textcite{caticha2019information}, where the constraints are the expected values and the variance, so that the resulting distribution is Gaussian).} See also the comments given in section~\ref{sec:measuring-information}. We also see this in the fact that probabilities do not play a role in intuitionism.

\

A second class of approaches to indeterminacies which we have already encountered uses sequences of regions of the continuum (which can be seen as subsets of the real number line), which we called `domains of indeterminacy’ in Chapter~\ref{chap:intuitionistic-physics}. The collapse of a physical quantity to a smaller DOI (i.e.\ one that is contained in the previous DOI) represents an increase in the determinacy of the quantity. We will use this approach in the theory proposed in section~\ref{sec:mathematics-development}.

It might be tempting to view physical quantities as being uniformly distributed over their domain of indeterminacy, but 
the fact remains that probability distributions have no place in finite-precision theories that are empirically equivalent to their orthodox counterpart (at least not in the frequentist interpretation of probabilities). Instead, every collapse to a smaller DOI is possible, and there is no rule stating what new DOI is more likely than others.

\

Finally, one might represent finite-precision or finite-information physical quantities by Cauchy sequences of e.g.\ rationals or computable numbers. One such approach is outlined in \textcite[Box~1]{gisin2020comment}. The definition of a Cauchy sequence $\a\in\Q^\N$ in (constructive) mathematics involves a \emph{Cauchy modulus}, which is, for example, a sequence $\mu\in\N^\N$ such that
\[ \forall m,n > \mu(k) \,( |\a(n) - \a(m)| < 1/k )\]
(see section~\ref{sec:int-reals-other-constructions}). If the Cauchy modulus belongs to the physical ontology of the described quantity, then this approach is a special case of the one discussed previously, as $\a$ and $\mu$ together define a shrinking sequence of real intervals (or balls).

If the Cauchy modulus is not part of the ontology, on the other hand, then this can lead to classically impossible worldlines and, again, empirical inequivalence, as a physical quantity is not bounded to a well-defined area and can therefore change its value after being measured.

Gisin’s approach \cite{gisin2020comment} describes quantities as rational sequences given by a computable function in conjunction with a single random big generator. Depending on the computable function, this indirectly defines a Cauchy modulus and therefore a domain of indeterminacy. Although Gisin’s approach has the advantage that it explicitly expresses the idea that information is generated bit by bit (through the random big generator) as time passes, I prefer to explicitly use domains of indeterminacy because it is simpler and more general (viz.\ DOIs can take more general shapes).

\subsection{Domains of indeterminacy}\label{sec:DOIs}
If the ontology of indeterminacies is indeed best described by shrinking regions of the continuum, then the question remains which types of region are allowed, and how DOIs of different physical quantities relate to each other.

As to the second question, to solve the problem of interdependency of physical quantities discussed in section~\ref{sec:technical-problems}, I consider only one `fundamental’ physical quantity, which is the phase space vector. That is, the (perfect-information) state of the system is given by the domain of indeterminacy of this vector, which is a subset of phase space.

This also somewhat clarifies the relationship between Parmenides time and Heraclitus time (which I described as still being unclear in sections~\ref{sec:gisin-reaction-hamiltonian-evolution} and~\ref{sec:technical-problems}): the state of the system, given by a subset of phase space (the DOI), evolves according to the Hamiltonian differential equations (that is, the Hamiltonian flow is applied to the whole DOI; this corresponds to the flow of Parmenides time), and at certain points of time the DOI (suddenly) collapses, i.e.\ is replaced by a smaller DOI contained in the previous one. (This idea, here sketched only conceptually, will probably be clearer when I express it in mathematical form in the next section.) The question of what exactly causes DOI collapse, however, is related to the classical measurement problem discussed in section~\ref{sec:gisin-classical-measurement-problem} and remains unanswered.

In particular, we see that under Hamiltonian evolution, a hyperrectangle in phase space does in general not transform into a hyperrectangle, which means that the indeterminate state of the system can indeed not be expressed by giving the DOIs of individual (one-dimensional) quantities such as the coordinates of phase space.\footnote{More precisely, the DOI of the phase space coordinates defines a DOI for each individual position and momentum coordinate (by projecting the phase space DOI onto the relevant axis), but the DOIs of all these quantities together do not completely describe the DOI of the phase space coordinates.}

The set of possible DOIs should be invariant under Hamiltonian evolution, but also under a large set of coordinate transformations, so that the ontology of the theory does not depend on the chosen coordinate system. In addition, by the motivation in the introduction to this chapter, the DOIs should satisfy some notion of computability (i.e.\ they should be describable by a finite algorithm).\footnote{A looser condition which also fulfills the desire that DOIs should `contain finitely much information’, is perhaps that the set of possible DOIs is countable (since fixing an enumeration enables one to uniquely identify each DOI by a natural number).} It should be clear by now (from e.g.\ section~\ref{sec:technical-problems}) that using rational intervals or hyperrectangles does not work.
One possible requirement that does satisfy these restrictions is that a DOI should be a recursively enumerable open (r.e.\ open) subset of phase space. This is a notion from computable analysis; the necessary background is given in Appendix~\ref{sec:computable-analysis-preliminaries}. In brief, a subset $E \subseteq\R^n$ is r.e.\ open if and only if it is the union of a computable sequence of rational open balls.

R.e.\ open sets are in general not effectively \emph{decidable}; that is, for a r.e.\ open set $E$, in general there does not exist an algorithm which is guaranteed to halt on input $x\in\R^k$ and outputs a 0 if $x\notin E$ and a 1 if $x\in E$. However, the arguments in section~\ref{sec:equating-time} suggest that it should not `take the Universe time’ to perform trivial computations; what I mean by that is that the statement $x\in E$ should have a physically well-determined truth value, given the algorithm that describes $E$.\footnote{Note that requiring the DOIs to be effectively decidable does not work since the only effectively decidable subsets of $\R^n$ are $\emptyset$ and $\R^k$.}

\nieuw{In the formalism introduced below, I define phase space to be a subset of the (classical) real number field $\R^k$. One might object that a theory that uses the real numbers cannot solve the problems introduced in Chapter~\ref{chap:whats-the-problem} (and section~\ref{sec:problem-infinite-information} specifically). This is not true, however. The complete set $\R^k$ is in my view not problematic when it is simply used to formalise the continuum as a whole; the problems of the orthodox interpretation only arise because it assigns a physical interpretation to \emph{individual} real numbers. I do not attribute physical significance to the mathematical idea that $\R^k$ is a set of points; instead, the state of the system is defined as a r.e.\ open subset of phase space, which can be described using finite information, even though it is, in the purely \emph{mathematical} sense, a set of points.}

Finally, although r.e.\ openness of a set is preserved under various transformations (e.g.\ translations along computable vectors, rotations along computable angles, etc.), it is in general not preserved under many others, like translations along uncomputable vectors, and under Hamiltonian evolution by an uncomputable time interval. Hence, in assuming that all DOIs that can represent the state of a physical system are r.e.\ open, we still seem to require that there is some restricted set of coordinate systems and set of timepoints that is `preferred’ by nature. However, it might be argued that once a coordinate system is specified in which the DOI is r.e.\ open, then any `physically useful' coordinate transformation (e.g.\ taking the distance between the DOIs of the positions of two particles as the new unit) is of an effective (computable) nature and hence preserves r.e.\ openness. Anyway, r.e.\ openness being the best option I have thought of so far,\footnote{The \emph{recursive closed sets} as defined in \textcite[§2.4]{weihrauch2000} deserve closer inspection! Yet another possibility is to take \emph{infinite-time decidable} sets, defined through Turing machines (with a finite set of instructions and program states) which can run for $\omega$ steps (see e.g.\ \textcite{hamkins2000infinite}). According to our previous arguments about computation time, this should not be a problem. However, the notion of \emph{decidable} sets is based on a continuum that consists of points, and not regions, in constrast to our preferred view.} let us now try to work out a possible mathematical formalism of finite-precision theories based on r.e.\ openness.

\section{A mathematical formalism for finite-precision classical mechanics}\label{sec:mathematics-development}
By the motivation in the preceding section, we will generalise the Hamiltonian formalism presented in section~\ref{sec:hamiltonian-formalism-orthodox} from points in phase space $\M$ to r.e.\ open subsets of $\M$, and in doing so we will use a \emph{classical} (i.e.\ nonconstructive, non-intuitionistic) approach. We will make use of notions from computable analysis and results from ODE theory. The necessary background on computable analysis is given in Appendix~\ref{sec:computable-analysis-preliminaries}; the necessary results from ODE theory will be recalled when used.

\begin{assumption}
    We assume that the \defn{phase space} $\M$ is r.e.\ open submanifold of $\R^k$ (where e.g.\ $k = 6N$ when the system is three dimensional and contains exactly $N$ particles). Perhaps the results of this section can be generalised to arbitrary Poisson manifolds; however, computability structures on manifolds have only recently been introduced \cite{aguilar2017computable}.
    
    In many cases, we expect the restrictions on the physical system defining $\M$ to be of an effective nature, so that $\M$ is most likely indeed r.e.\ open. Often, for example, the Hamiltonian is smooth everywhere except on the collision set $\mathcal C$, as discussed in section~\ref{sec:hamiltonian-formalism-orthodox}. In that case we take $\M = \R^{6N}\setminus\mathcal C$ as the phase space, which is r.e.\ open as subset of $\R^{6N}$.
    
    Alternatively, if $\M$ is not r.e.\ open, we can replace all mentions of `r.e.\ open' below by `r.e.\ open in $\M\subseteq\R^k$'.
\end{assumption}
\begin{assumption}
    The Hamiltonian $h:\subseteq \R^k\to\R$ is a $(\rho^k,\rho)$-computable function with domain $\dom h = \M$ (notations as in Appendix~\ref{app:computability-theory}). 
    The Poisson bracket is also assumed to be computable, as is, consequently, the Hamiltonian vector field $X_h:\subseteq\R^k\to\R^k$, also having the domain $\dom X_h = \M$.
    
    This can be assumed because all algebraic operations like addition, multiplication, division, exponentiation etc.~are computable.
\end{assumption}

Contrary to the orthodox interpretation, we do not view the state of the physical system as being given by a single point $x$ in phase space $\M$, but rather by a subset of $\M$.
\begin{definition}\label{def:blurred-state}
    A \defn{(blurred) state}, or \defn{domain of indeterminacy}, is a r.e.\ open subset $E\subseteq\M$ with compact closure. We denote the set of all blurred states by $\mathcal E$.
\end{definition}
In our interpretation, the \emph{perfect-information} state of the system at any computable timepoint is completely specified by such a r.e.\ open set with compact closure. `Blurred state' and `domain of indeterminacy' are synonymous; however, I prefer to move away from the terms `indeterminacy' and `blurred' whenever possible, because they suggest that the actual state of the system is more precise than described by this theory, and that points still play a more important role than regions. Therefore, I will sometimes refer to domains of indeterminacy simply as `states'.

Compact closure is \emph{necessary} because of Proposition~\ref{prop:beta-E}. It is \emph{justified} physically because we can require that the domain of indeterminacy is always bounded in phase space, and that it has a positive distance from the \defn{singularity set} $\Sing = \R^k\setminus\M$ (so that the closure of $E$ in $\R^k$ is contained in $M$). The latter can be made slightly more plausible if we assume that the singularity set is precisely the set to which a smooth extension of the Hamiltonian $h$ is not possible (e.g.\ if $h$ contains the gravitational potential energy~\eqref{eq:newton-gravity} and $\Sing = \mathcal C$, the collision set); in that case, points close to the singularity set correspond to `extreme' scenarios, and we can assume that a domain of indeterminacy collapses before it reaches such a singularity.\footnote{If we even assume that $h$ is unbounded when it approaches the singularity set (which is the case for the gravity potential), then we can say that, because the energy of a physical system is a well-behaved physical quantity and therefore should have a bounded domain of indeterminacy, the interval $h(E)$ induced by the domain of indeterminacy $E\subseteq\M$ of the phase space coordinates should be bounded, which can only be the case if the distance $d(E, \Sing)$ is indeed positive.}

\

We use the following theorem adapted from \textcite{graca2018} about computability of ODE solutions.
\begin{theorem}\label{thm:graca-ivp}
	Assume $W$ is a r.e.\ open subset of $\R^k$ and $x\in W$. Consider the initial-value problem
    \begin{equation}\label{eq:graca-ivp}
        \begin{split}
            y'(t) &= f(y) \\
		    y(0) &= x
        \end{split}
    \end{equation}
    where $f: W\to \R^k$ is $C^1$ and $(\rho^k,\rho^k)$-computable on $W$. There exists a unique \emph{maximal interval of existence and uniqueness} $(\a(x),\b(x))\subseteq\R$ of the solution $y(\cdot)$ of \eqref{eq:graca-ivp} on $W$. The functions $\a$ and $\b$ are continuous and $\a(x)<0<\b(x)$. Furthermore, we have:
	\begin{enumerate}
		\item The operator $(x,t) \mapsto y(t)$ is computable; 
		\item The operator $x \mapsto (\alpha(x),\beta(x))$ is semi-computable.
	\end{enumerate}
\end{theorem}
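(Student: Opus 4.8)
The plan is to separate the two purely classical ODE claims (existence, uniqueness, and the qualitative behaviour of $\a$ and $\b$) from the two computability claims, proving the former by standard means and obtaining the latter by adapting the main effectivity result of \textcite{graca2018} to the r.e.\ open setting.

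First I would establish the classical skeleton. Since $f$ is $C^1$ on the open set $W$ it is locally Lipschitz there, so the Picard--Lindel\"of theorem gives, through each $x\in W$, a unique solution on some open time interval. Patching these local solutions together along their overlaps (using uniqueness) produces a unique \emph{maximal} solution $y(\cdot)$ defined on a maximal open interval, which I call $(\a(x),\b(x))$; openness of this interval, together with the fact that $x$ is interior to $W$, forces $\a(x)<0<\b(x)$. The flow domain $D=\{(t,x):\a(x)<t<\b(x)\}$ is open in $\R\times W$ by continuous dependence on initial data, which immediately yields that $\b$ is lower-semicontinuous and $\a$ upper-semicontinuous; the full continuity of $\a$ and $\b$ asserted in the statement I would take from \textcite{graca2018} rather than reprove it.

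For part~1, the computability of $(x,t)\mapsto y(t)$, I would invoke the effective existence theorem of \textcite{graca2018} essentially verbatim, since its hypotheses ask exactly that $f$ be $C^1$ and $(\rho^k,\rho^k)$-computable, which we have. The mechanism is a computable Picard iteration: on any rational ball whose closure is contained in $W$ one obtains, from the computable modulus and Lipschitz data of $f$, effective error bounds under which the iterates converge to the solution uniformly and computably. Since $W$ is r.e.\ open it is a computable union of rational balls (as recalled in section~\ref{sec:DOIs}), so given $(x,t)\in D$ one can search for a finite chain of such balls covering the trajectory up to time $t$ and thereby compute $y(t)$ to arbitrary precision.

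The main obstacle, and the content of part~2, is the \emph{semi}-computability of $x\mapsto(\a(x),\b(x))$ --- the fact that the maximal interval can be approximated only from the inside. Concretely, I would show that $\{(x,t):0\le t<\b(x)\}$ is r.e.\ (and symmetrically for $\a$, by reversing time via $f\mapsto -f$): to certify a rational bound $t<\b(x)$ it is enough to verify that the computable, compact trajectory $\{y(s):0\le s\le t\}$ lies in $W$, and because $W$ is r.e.\ open this containment is \emph{semi}-decidable, so exactly the confirmable bounds can be enumerated. I expect the crux to be arguing that this asymmetry is genuinely unavoidable: confirming that the solution \emph{remains} in $W$ is semi-decidable, but confirming that it \emph{leaves} $W$ (or blows up) would amount to deciding non-membership in an r.e.\ open set, which is not computable in general. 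This is precisely why the endpoint map is semi-computable rather than computable, and I would pin the remaining effectivity bookkeeping on the corresponding statement in \textcite{graca2018}.
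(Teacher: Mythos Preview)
The paper does not prove this theorem at all: it is stated as a result ``adapted from \textcite{graca2018}'' and used as a black box, with no proof or proof sketch given. Your proposal is a reasonable outline of how the result is actually obtained in that reference (classical Picard--Lindel\"of for the skeleton, effective Picard iteration for part~1, and semi-decidability of containment in the r.e.\ open domain for part~2), but there is nothing in the paper itself to compare it against.
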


The latter means that the mapping $\beta:\subseteq \R^k\to \R, x \mapsto \beta(x)$ is lower-semicompu\-table in the sense of Definition~\ref{def:some-representations}, and $x \mapsto \alpha(x)$ is upper-semicomputable.

\

We need the property of compact closure in Definition~\ref{def:blurred-state} because this ensures states $E$ can be subjected to Hamiltonian evolution by at least a small time interval:
\begin{proposition}\label{prop:beta-E}
    Let $E\subseteq\M$ have compact closure. Then there exists a (unique) maximal number $\b(E) \in (0,+\infty]$ such that $E\subseteq\D_h^{\b(E)}$, with $\D_h^{\b(E)}$ as defined in~\eqref{eq:dht}. Likewise, there exists a minimal $\a(E) \in [-\infty,0)$ such that $E\subseteq\D_h^{\a(E)}$.
\end{proposition}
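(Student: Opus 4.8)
The plan is to read off the quantities $\b(E)$ and $\a(E)$ from the blow-up functions of the Hamiltonian flow supplied by Theorem~\ref{thm:graca-ivp}, and then to use compactness of the closure of $E$ to convert the \emph{pointwise} existence times of individual trajectories into a single \emph{uniform} existence time for the whole state. First I would apply Theorem~\ref{thm:graca-ivp} with $W=\M$ and $f=X_h$: by the Assumptions, $\M$ is r.e.\ open and $X_h$ is $(\rho^k,\rho^k)$-computable with domain $\M$, and since $h$ is smooth on $\M$ the field $X_h$ is in particular $C^1$, so the hypotheses are met. This yields continuous functions $\a\colon\M\to[-\infty,0)$ and $\b\colon\M\to(0,+\infty]$, with $\a(x)<0<\b(x)$ for every $x$, such that the solution through $x$ exists exactly on $(\a(x),\b(x))$. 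Recalling the definition~\eqref{eq:dht}, this means $\D_h^t=\{x\in\M:\b(x)>t\}$ for $t>0$ and $\D_h^t=\{x\in\M:\a(x)<t\}$ for $t<0$, so that $E\subseteq\D_h^t$ (for $t>0$) is equivalent to the requirement that $\b(x)>t$ hold for all $x\in E$.

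The crux is then the following compactness argument, which is exactly why Definition~\ref{def:blurred-state} demands compact closure. Since $E$ is a blurred state, its closure $\overline{E}$ is compact and contained in $\M$ (it has positive distance from the singularity set $\Sing$). The function $\b$ is continuous and strictly positive on $\M$, so on the compact set $\overline{E}$ it attains a minimum, and this minimum is positive (possibly $+\infty$, in the case where the forward flow through every point of $\overline{E}$ exists for all time). I would therefore \emph{define}
\[ \b(E):=\min_{x\in\overline{E}}\b(x)\in(0,+\infty], \]
and symmetrically $\a(E):=\max_{x\in\overline{E}}\a(x)\in[-\infty,0)$. Compact closure is essential here: without it the infimum of $\b$ over $E$ could drop to $0$ (for instance if $E$ accumulates on $\Sing$), leaving no admissible positive existence time at all.

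It remains to check that $\b(E)$ as defined is indeed the maximal time with $E\subseteq\D_h^{\b(E)}$. For any $t<\b(E)$ we have $\b(x)\ge\b(E)>t$ for all $x\in\overline{E}$, hence $\overline{E}\subseteq\D_h^t$ and a fortiori $E\subseteq\D_h^t$; conversely, for $t>\b(E)$ the minimiser $x_0\in\overline{E}$ satisfies $\b(x_0)<t$, and since $E$ is dense in $\overline{E}$ and $\{\b>t\}$ is open, some point of $E$ lies outside $\D_h^t$, so that $E\not\subseteq\D_h^t$. Uniqueness of the maximal value is immediate from the total order on $\R$, and the argument for $\a(E)$ is the mirror image. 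The step I expect to be the most delicate is the behaviour exactly at the endpoint $t=\b(E)$: because the maximal interval of existence is \emph{open}, one must confirm against the precise form of~\eqref{eq:dht} that the containment still holds at the extremal time itself (equivalently, that the minimum of $\b$ over $\overline{E}$ is not realised at an interior point of $E$ in a way that would force a strict supremum rather than an attained maximum), and one must read the continuity of $\b$ in the extended half-line $(0,+\infty]$ so that the minimum is genuinely attained even in the globally existing case.
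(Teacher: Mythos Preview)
Your argument is correct and rests on the same compactness idea as the paper, but executed differently: the paper covers $\M$ by the nested open sets $\{x\in\M:\b(x)>1/n\}$ and extracts a finite subcover of $\overline{E}$ to produce some $n$ with $E\subseteq\D_h^{1/n}$, then simply declares $\b(E)$ to be the maximal such time (or $\infty$), without further comment on whether or how this maximum is attained. You instead apply the extreme value theorem to the continuous positive function $\b$ on the compact set $\overline{E}$, obtaining the explicit formula $\b(E)=\min_{\overline{E}}\b$, and then verify the extremality by hand. Your version is more informative (it yields a formula) and more scrupulous about the endpoint; the paper's is terser. Two minor remarks: the Proposition assumes only that $E\subseteq\M$ has compact closure, not that $E$ is a blurred state, so your side reference to Definition~\ref{def:blurred-state} and the positive distance to $\Sing$ is not strictly needed (though the implicit assumption $\overline{E}\subseteq\M$ is used in the paper's proof as well); and your caution about the behaviour exactly at $t=\b(E)$ is well placed---the paper's proof glosses over this point too, so you are not missing anything it supplies.
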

\begin{proof}
    The function $x\mapsto \b(x)$ from the previous Theorem is defined on $\M$ and maps to strictly positive values.
    This means that $\M$ is covered by the sets $\{ x\in\M \mid \b(x) > 1/n \}$ for $n\in\N_{>0}$; since $E$ has compact closure, it admits a finite subcover, so there is an $n\in\N$ with $E\subseteq\D_h^{1/n}$. We define $\b(E)$ to be the maximal number such that $E\subseteq\D_h^{\b(E)}$, or $\infty$ if such a maximal number does not exist. The proof for $\a(E)$ is similar.
\end{proof}


The following results show that blurred states transform into blurred states under computable time evolution, so that Definition~\ref{def:blurred-state} makes sense physically.

\begin{proposition}\label{prop:dht-r-e-open}
	For any computable $t\in\R$, the set $\D_h^t$ is r.e.\ open.
\end{proposition}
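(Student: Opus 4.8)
The plan is to unwind the definition \eqref{eq:dht} of $\D_h^t$ as the domain of the time-$t$ Hamiltonian flow and to read off r.e.\ openness from the semicomputability of the exit-time functions $\a,\b$ supplied by Theorem~\ref{thm:graca-ivp} (applied to $W=\M$ and $f=X_h$). Concretely, $\D_h^t=\{x\in\M : t\in(\a(x),\b(x))\}$. The case $t=0$ is immediate, since $\D_h^0=\M$, which is r.e.\ open by assumption. Because $\a(x)<0<\b(x)$ for every $x\in\M$, when $t>0$ the constraint $\a(x)<t$ is automatic and $\D_h^t=\{x\in\M:\b(x)>t\}$, while for $t<0$ we have $\D_h^t=\{x\in\M:\a(x)<t\}$; since applying the flow of $-h$ interchanges the roles of $\a$ and $\b$, it suffices to treat the case $t>0$.

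First I would verify openness: by Theorem~\ref{thm:graca-ivp} the function $\b$ is continuous, hence in particular lower semicontinuous, so $\{x:\b(x)>t\}$ is open, and intersecting with the open set $\M$ keeps it open. For r.e.\ openness I would use the standard characterisation (Appendix~\ref{sec:computable-analysis-preliminaries}) that an open set is r.e.\ open precisely when its membership predicate is $\rho^k$-semidecidable, i.e.\ when some Turing machine halts on a $\rho^k$-name of $x$ exactly when $x$ lies in the set. I would then exhibit such a machine for $\D_h^t$. Given a $\rho^k$-name of $x$, dovetail two subprocesses: (i) a semidecision of $x\in\M$, available because $\M$ is r.e.\ open; and (ii) the lower-semicomputation of $\b(x)$ from Theorem~\ref{thm:graca-ivp}(2), which outputs a nondecreasing sequence of rationals $q_0\le q_1\le\cdots$ with $\sup_i q_i=\b(x)$. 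The machine accepts as soon as (i) has confirmed $x\in\M$ and some $q_i$ has been certified to satisfy $q_i>t$; the latter is itself semidecidable because $t$ is computable (compute $t$ to increasing precision). This halts iff $x\in\M$ and $\b(x)>t$: if $\b(x)>t$, pick a rational $r\in(t,\b(x))$, so eventually $q_i\ge r>t$ and the test succeeds; if $\b(x)\le t$, every $q_i<\b(x)\le t$, so $q_i>t$ is never certified. Hence membership in $\D_h^t$ is semidecidable, and a computable enumeration of rational balls whose union is $\D_h^t$ can be extracted from this procedure by the usual dovetailing.

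The one point needing care, and the crux of the argument, is that only the \emph{lower} semicomputability of $\b$ is available, which is exactly what lets us confirm the strict inequality $\b(x)>t$ without being able to refute it; this is why $\D_h^t$ comes out r.e.\ open rather than decidable, matching the remark in section~\ref{sec:ontology-of-indeterminacy} that these domains are in general not effectively decidable. Two further subtleties I would flag. First, the lower-semicomputation of $\b$ is only guaranteed to behave on names of points of $\M$, so its output must not be trusted before subprocess (i) has confirmed $x\in\M$; running the two in parallel and conjoining their verdicts handles this cleanly. Second, the boundary points where $\b(x)=t$ are correctly excluded by the strict inequality, consistent with $\D_h^t$ being open. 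The $t<0$ case is identical after the time-reversal reduction, using the upper-semicomputability of $\a$ to certify $\a(x)<t$, which completes the proof.
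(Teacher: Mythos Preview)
Your proof is correct and follows essentially the same approach as the paper: both reduce to the case $t>0$, identify $\D_h^t=\{x\in\M:\b(x)>t\}$, and use lower-semicomputability of $\b$ together with r.e.\ openness of $\M$ to build a machine that halts exactly on $\rho^k$-names of points in $\D_h^t$. The only cosmetic difference is that the paper approximates $t$ from above by rationals $t_i$ and checks whether the $\rho_<$-enumeration of rationals below $\b(x)$ ever exceeds some $t_i$ (a rational--rational comparison), whereas you compare the enumerated rationals directly against the computable real $t$; both are valid, and your handling of the domain issue (conjoining with the semidecision of $x\in\M$) is exactly the content of the paper's appeal to Proposition~\ref{prop:intersection-r-e-open}.
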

\begin{proof}
	For $t=0$ we have $\D_h^t = \M$, which is r.e.\ open by assumption.

	Now assume $t>0$. Then $\D_h^t = \{ x\in \M : \beta(x) > t \}$.
	Because $t$ is computable, we can approximate it from above by a rapidly converging computable sequence of rationals $(t_i)_{i=1}^\infty\subseteq\Q$ such that $0 < t_i - t < 2^{-i}$ for all $i$. Moreover, because $x\mapsto \beta(x)$ is lower-semicomputable (that is, $(\rho^k,\rho_<)$-computable) by Theorem~\ref{thm:graca-ivp} and is defined on $\M$, there exists a machine $M$ which computes a function $F:\subseteq\Som\to\Som$ which, on input $p\in\Som$ such that 
	$\rho^k(p) \in \M$, enumerates all rationals $r<\beta(\rho^k(p))$. 
	
	For all $i$, we can derive from $M$ a machine $M'_i$ which on input $p\in\Som$ simulates $M$ on $p$ and stops once $M$ outputs a rational $r > t_i$, but otherwise keeps running forever. Then for $p\in\Som$ such that $\rho^k(p) \in \M$, since $M$ enumerates all rationals below $\beta(\rho^k(p))$, $M'$ halts on $p$ if and only if $\beta(\rho^k(p)) > t_i$.
	
	Now construct a third machine $M''$ which on input $p$ runs all $M'_i$ on $p$ in parallel, in a diagonal manner: i.e.\ it first executes a step of $M'_1$, then one of $M'_2$, then one of $M'_1$ again, then $M'_2$, $M'_3$, $M'_1$, $M'_2$, $M'_3$, $M'_4$, $M'_1$, and so on. 
	$M''$ is made to halt whenever one of its $M'_i$ halts, and otherwise keeps running forever.
	
	It follows that for $p\in\Som$ such that $\rho^k(p)\in\M$, $M''$ halts on $p$ if and only if there exists an $i$ such that $M'_i$ halts on $p$, which is the case iff $\beta(\rho^k(p)) > t$, which in turn holds iff $p\in(\rho^k)^{-1}(\D_h^t)$. In other words, we have, for all $p\in\Som$,
	\[ p\in\dom M'' \cap (\rho^k)^{-1}(\M) \iff p\in (\rho^k)^{-1}(\D_h^t) \cap (\rho^k)^{-1}(\M) = (\rho^k)^{-1}(\D_h^t) , \]
	and hence $(\rho^k)^{-1}(\D_h^t)$ is r.e.\ open in $(\rho^k)^{-1}(\M)$. Because by assumption $(\rho^k)^{-1}(\M)$ is r.e.\ open, $(\rho^k)^{-1}(\D_h^t)$ is r.e.\ open by the final comment in Proposition~\ref{prop:intersection-r-e-open}, and therefore $\D_h^t$ itself is.
	The proof for $t<0$ is similar.
\end{proof}

\begin{theorem}\label{thm:flow-preserves-r-e-open}
    Let $E\subseteq\M$ be r.e.\ open and $t\in\R$ computable with $\a(E) < t < \b(E)$. Then $\Phi_t(E)$ is again r.e.\ open.
\end{theorem}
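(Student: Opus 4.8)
The plan is to realise $\Phi_t(E)$ as a \emph{preimage} rather than an image, exploiting that the inverse of the time-$t$ flow is the time-$(-t)$ flow, which is again computable because $-t$ is computable. First I would collect the geometric input from ODE theory. Since $\a(E) < t < \b(E)$, monotonicity of the domains $\D_h^s$ together with Proposition~\ref{prop:beta-E} gives $E\subseteq\D_h^t$, so $\Phi_t$ is defined on all of $E$. By the flow (group) property and smooth dependence on initial conditions, $\Phi_t$ restricts to a diffeomorphism $\D_h^t\to\D_h^{-t}$ whose inverse is $\Phi_{-t}$; in particular it is a homeomorphism on its domain, so $\Phi_t(E)$ is open. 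Using that $\Phi_t\colon\D_h^t\to\D_h^{-t}$ is a bijection with inverse $\Phi_{-t}$ and that $E\subseteq\D_h^t$, I would then record the key set identity
\[ \Phi_t(E) = \{\,y\in\D_h^{-t} : \Phi_{-t}(y)\in E\,\} = \Phi_{-t}^{-1}(E), \]
where the preimage is taken along the partial map $\Phi_{-t}\colon\subseteq\R^k\to\R^k$ whose domain is exactly $\D_h^{-t}$.

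Next I would bring in computability. Because $t$ is computable, so is $-t$, and hence by part~1 of Theorem~\ref{thm:graca-ivp} the map $\Phi_{-t}$ is $(\rho^k,\rho^k)$-computable on $\D_h^{-t}$: there is a Type-2 machine sending any $\rho^k$-name of a point $y\in\D_h^{-t}$ to a $\rho^k$-name of $\Phi_{-t}(y)$. The core step is that the preimage of an r.e.\ open set under a computable function is again r.e.\ open, which I would argue in the style of Proposition~\ref{prop:dht-r-e-open}. As $E$ is r.e.\ open there is a machine $M_E$ that, on a name of $z$, halts iff $z\in E$. Composing the flow machine with $M_E$ produces a machine $M$ that, on a name $p$ of $y$, computes a name of $\Phi_{-t}(y)$ and feeds it to $M_E$; then $M$ halts on $p$ precisely when $\Phi_{-t}(y)$ is defined (i.e.\ $y\in\D_h^{-t}$) and lands in $E$, that is, precisely when $y\in\Phi_t(E)$. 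This shows that $\Phi_t(E)$ is r.e.\ open \emph{relative to} $\D_h^{-t}$. Since $\D_h^{-t}$ is itself r.e.\ open by Proposition~\ref{prop:dht-r-e-open}, the closing step of that proof—combining relative r.e.\ openness with r.e.\ openness of the ambient set via Proposition~\ref{prop:intersection-r-e-open}—yields that $\Phi_t(E)$ is r.e.\ open absolutely.

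The point I expect to require the most care is the bookkeeping forced by the \emph{partiality} of the flow. Both the identity $\Phi_t(E)=\Phi_{-t}^{-1}(E)$ and the composition argument rely on $\Phi_{-t}$ being defined on exactly $\D_h^{-t}$, so one must ensure that $M$ does not halt spuriously at inputs $y\notin\D_h^{-t}$, where the flow machine need not deliver a genuine output name. The clean way around this is precisely the relativisation above: rather than trying to describe off-domain behaviour directly, I establish r.e.\ openness \emph{relative to} the domain $\D_h^{-t}$ (on which the flow machine is guaranteed correct), and only afterwards intersect with the r.e.\ open set $\D_h^{-t}$ to pass to an absolute statement—mirroring the $(\rho^k)^{-1}(\M)$ manoeuvre in Proposition~\ref{prop:dht-r-e-open}. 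Finally, the case $t=0$ is immediate since $\Phi_0=\mathrm{id}$ and $\D_h^0=\M$, and the case $t<0$ is symmetric, flowing forward by $-t>0$ instead.
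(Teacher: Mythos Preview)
Your proposal is correct and follows essentially the same route as the paper: write $\Phi_t(E)=\Phi_{-t}^{-1}(E)$, use computability of $\Phi_{-t}$ (via Theorem~\ref{thm:graca-ivp}) to obtain r.e.\ openness relative to the domain of the inverse flow, and then upgrade to absolute r.e.\ openness using Proposition~\ref{prop:dht-r-e-open} together with Proposition~\ref{prop:intersection-r-e-open}. The only cosmetic difference is that the paper packages your machine-composition step into Proposition~\ref{prop:preimage-r-e-open} and carries out the relativisation at the $\Som$ level (relative to $\dom F_{-t}$ rather than directly to $\D_h^{-t}$), before transferring to $(\rho^k)^{-1}(\D_h^{-t})$ exactly as you anticipate in your discussion of the partiality bookkeeping.
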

\begin{proof}
    Note that $\Phi_t(E) = (\Phi_{-t})^{-1}(E)$. Let $F_{-t}: \Som\to\Som$ be a $(\rho^k,\rho^k)$-realiser of $\Phi_{-t}$. Because $\D_h^{-t} \subseteq \dom \Phi_{-t}$, we have $(\rho^k)^{-1}(\D_h^{-t}) \subseteq \dom F_{-t}$.
    
    It follows from Theorem~\ref{prop:preimage-r-e-open} that $(\rho^k)^{-1}(\Phi_t(E))$ is r.e.\ open in $\dom F_{-t}$, to wit 
    \[ (\rho^k)^{-1}(\Phi_t(E)) = \dom h \cap \dom F_{-t} \]
    for some computable function $h:\subseteq\Som\to\Sst$.
    We also have $\Phi_t(E) \subseteq \D_h^{-t}$, which implies $(\rho^k)^{-1}(\Phi_t(E)) \subseteq (\rho^k)^{-1}(\D_h^{-t}) \subseteq \dom F_{-t}$. This means
    \[ (\rho^k)^{-1}(\Phi_t(E)) = \dom h \cap \dom F_{-t} \cap (\rho^k)^{-1}(\D_h^{-t}) = \dom h \cap (\rho^k)^{-1}(\D_h^{-t}), \]
    and hence $(\rho^k)^{-1}(\Phi_t(E))$ is also r.e.\ open in $(\rho^k)^{-1}(\D_h^{-t})$. Because $(\rho^k)^{-1}(\D_h^{-t})$ is r.e.\ open by Proposition~\ref{prop:dht-r-e-open}, $(\rho^k)^{-1}(\Phi_t(E))$ and hence $\Phi_t(E)$ itself are r.e.\ open.
\end{proof}

\begin{corollary}
    Let $E\subseteq\M$ be a state, and let $t\in\R$ be computable with $\a(E)<t<\b(E)$. Then $\Phi_t(E)$ is again a state.
\end{corollary}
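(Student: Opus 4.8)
The plan is to verify the two defining properties of a state (Definition~\ref{def:blurred-state}) for $\Phi_t(E)$, namely r.e.\ openness and compactness of the closure. The first is immediate: since $E$ is in particular r.e.\ open and $\a(E) < t < \b(E)$ with $t$ computable, Theorem~\ref{thm:flow-preserves-r-e-open} already gives that $\Phi_t(E)$ is r.e.\ open. So the entire content of the corollary beyond that theorem is to show that $\Phi_t(E)$ has compact closure, and this is where the real work lies.

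For the compactness part, I would first argue that the flow $\Phi_t$ is defined and continuous not merely on $E$ but on the whole closure $\overline{E}$. The key is that the maximal-existence-time function $x\mapsto\b(x)$ is continuous on $\M$ by Theorem~\ref{thm:graca-ivp}. Since $E\subseteq\D_h^{\b(E)}$ (Proposition~\ref{prop:beta-E}) unfolds to $\b(x) > \b(E)$ for every $x\in E$, passing to limits and using continuity yields $\b(x)\ge\b(E)$ for every $x\in\overline{E}$. Combined with the strict inequality $t<\b(E)$, this gives $\b(x) > t$ throughout $\overline{E}$, i.e.\ $\overline{E}\subseteq\D_h^t$; the symmetric argument using the continuity of $\a$ and $\a(E)<t$ handles negative $t$, while $t=0$ is trivial. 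Hence $\Phi_t$ restricts to a continuous map on the compact set $\overline{E}$.

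The conclusion then follows from elementary point-set topology. The continuous image $\Phi_t(\overline{E})$ of a compact set is compact, hence closed in $\R^k$; and since $\overline{E}\subseteq\D_h^t$ forces its image into $\D_h^{-t}\subseteq\M$, this compact set sits inside $\M$. Because $\Phi_t(E)\subseteq\Phi_t(\overline{E})$ and the latter is closed, we get $\overline{\Phi_t(E)}\subseteq\Phi_t(\overline{E})$, so $\overline{\Phi_t(E)}$ is a closed subset of a compact set and is therefore itself compact (and contained in $\M$). Together with r.e.\ openness this shows $\Phi_t(E)$ is a state.

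I expect the only genuinely delicate point to be the passage from $E$ to $\overline{E}$, i.e.\ verifying that the whole closure can still be flowed by time $t$. This rests entirely on the strictness of the hypothesis $t<\b(E)$ (resp.\ $\a(E)<t$) together with the continuity of $\b$ (resp.\ $\a$) guaranteed by Theorem~\ref{thm:graca-ivp}: without the strict inequality the closure could touch the boundary of $\D_h^t$, where $\b$ drops to $t$, and the image of $\overline{E}$ could fail to be bounded. Everything else is routine.
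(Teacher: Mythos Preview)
Your proposal is correct and follows essentially the same route as the paper: use continuity of $\b$ (and $\a$) from Theorem~\ref{thm:graca-ivp} together with the strict inequality $t<\b(E)$ to push $\overline{E}$ into $\D_h^t$, then take the image under $\Phi_t$ and invoke Theorem~\ref{thm:flow-preserves-r-e-open} for r.e.\ openness. The only cosmetic difference is that the paper exploits that $\Phi_t:\D_h^t\to\D_h^{-t}$ is a diffeomorphism to obtain the equality $\overline{\Phi_t(E)}=\Phi_t(\overline{E})$ directly, whereas you use mere continuity to get the inclusion $\overline{\Phi_t(E)}\subseteq\Phi_t(\overline{E})$; either suffices.
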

\begin{proof}
    $\Phi_t:\D_h^{t} \to \D_h^{-t}$ is a diffeomorphism. Because $\forall x\in E: \b(x) \geq \b(E) > t$ and $x\mapsto \b(x)$ is continuous and $E\subseteq\D_h^t$, we have $\o E\subseteq\D_h^t$. We have $\o{\Phi_t(E)} = \Phi_t(\o E)$ because $\Phi_t$ is a diffeomorphism and $\Phi_t(\o E) \subseteq\D_h^{-t}$. Since $\o E$ is compact, $\Phi_t(\o E)$ is compact, so that $\Phi_t(E)$ has compact closure. By Theorem~\ref{thm:flow-preserves-r-e-open}, $\Phi_t(E)$ is also r.e.\ open.
\end{proof}

We have seen in Proposition~\ref{prop:beta-E} that a state $E$ can evolve through time for at least a small time interval. However, if $\b(E) < \infty$ then $E$ contains initial conditions which lead to singularities or escape to infinity in finite time. 
If we assume as before that (neighbourhoods of) singularities (i.e.~points $x\in\R^k\setminus\M$) correspond to extreme physical situations, then it is plausible that the domain of indeterminacy $E$ collapses before the timepoint $\b(E)$ is reached.
For example, when $E$ contains initial conditions which escape to infinity at time $\b(E)$, or in any other way shows chaotic behaviour, then the diameter of the set $\Phi_t(E)$ blows up as $t\to\b(E)$ from below. This means that the domains of indeterminacy of certain physical quantities become very large, which is a possible mechanism that triggers a collapse of the domain of indeterminacy.

Therefore, we assume that a state $E$ always collapses at a timepoint before $\b(E)$, that is, at some computable $0<t<\b(E)$, the state $\Phi_t(E)$ collapses to a smaller $E'\subseteq\Phi_t(E)$.\footnote{One could also remove the requirement that $t$ be computable; then $\Phi_t(E)$ would in general not be r.e.\ open, but the function $t\mapsto \Phi_t(E)$ would be (i.e.\ there is a Turing machine \emph{supplemented with an oracle that gives a $\rho$-name for $t$}, which halts on $p\in\Som$ iff $\rho^k(p)\in \Phi_t(E)$). We focus on the case of computable $t$ here.} This manifests a step in Heraclitus time. More precisely, we define:
\begin{definition}\label{def:blurred-world}
    A \defn{(blurred) world} is a sequence $\{(E_n,t_n)\}_{n\in\N}\subseteq \mathcal E\times\R_c$ such that for all $n\in\N$, we have
    \[ 0 < t_{n+1} - t_{n} < \b(E_n) \quad\text{ and }\quad     E_{n+1} \subseteq \Phi_{t_{n+1} - t_n} (E_n). \]
    Here $\R_c$ is the set of $\rho$-computable reals and $\mathcal E$ is the set of blurred states on $\M$.
    
    For $t\in (\inf_n t_n, \sup_n t_n)$, the \defn{(blurred) state at time $t$} is given by $\Phi_{t-t_m}(E_m)$, where $m$ is the greatest integer $k$ such that $t_k \leq t$.
    
    If $E$ is a blurred state, then a world \defn{through $E$} is any world $W = \{(E_n,t_n)\}_{n\in\N}$ such that $E_n = E$ for some $n$. If this $n=0$, $E$ is an \defn{initial condition} for the world $W$.
\end{definition}

It follows directly from Proposition~\ref{prop:beta-E} that for any blurred state $E\in\mathcal E$ and computable $t\in\R_c$, a blurred world $\{(E_n,t_n)\}_{n\in\N}$ with $E_0 = E$ and $t_0 = t$ exists.

\nieuw{Note that by Liouville's theorem, $\Phi_t$ preserves the volume of phase space regions; hence, the volumes of the states are non-increasing, i.e.\ $\vol(E_{n+1}) \leq \vol(E_n)$ for all $n$. In this sense, the indeterminacy in the state of the system cannot increase with time; however, the diameter of a phase space region can in general increase with Hamiltonian evolution, in particular when the system is chaotic.}

\subsection{Completeness}\label{sec:completeness}
An apparent shortcoming of Definition~\ref{def:blurred-world} is that the state of the system might not be defined for all time values $t\in\R$, namely if $\sup_n t_n < \infty$. Let us call a world $\{(E_n,t_n)\}_n$ that satisfies $\sup_n t_n = \infty$ \defn{complete}. We cannot simply add the condition to Definition~\ref{def:blurred-world} that all worlds are complete, because although every initial condition $(E_0,t_0)\in\E\times\R_c$ appears in a blurred world as defined above, not every initial condition might appear in a complete blurred world. (For example, $\b(E_n)$ might shrink fast enough so that no sequence $t_n$ with $t_n\to\infty$ is possible.)

As a result, we have to accept that not all worlds might be complete (the definite answer to this question depends on $\M$ and the Hamiltonian $h$). This is exactly analogous to the orthodox interpretation of classical mechanics, where we accept that some initial conditions might lead to a singularity in finite time and are consequently not defined beyond a certain timepoint.

Let us try to identify the cases in which a complete blurred world \emph{is} possible.

\begin{definition}
    For any $t>0$, the \defn{pre-singular set} $\Sing_t$ is defined as
    \[ \Sing_t := \M\setminus\D_h^t, \]
    i.e., it is the set of point-like initial conditions (i.e.\ in the orthodox sense) that lead to a singularity or escape to infinity before or at time $t$.
\end{definition}

\begin{theorem}\label{thm:complete-worlds-nowhere-dense}
    \begin{enumerate}[(i)]
        \item Let $E\subseteq\M$ be a state. A complete world through $E$ exists if and only if for all $t>0$, $\Sing_t$ is not dense in $E$.
        \item A complete world exists through any state $E$ if and only if for all $t>0$, $\Sing_t$ is nowhere dense in $\M$ (i.e.\ the closure of $\Sing_t$ has empty interior).
    \end{enumerate}
\end{theorem}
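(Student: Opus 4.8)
The plan is to reduce both density conditions to elementary statements about the maximal forward-existence time $\b$, prove part (i) directly, and obtain part (ii) as a corollary.

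First I would record that $\Sing_t = \M\setminus\D_h^t = \{x\in\M : \b(x)\le t\}$ is relatively closed in $\M$, since $\D_h^t=\{x\in\M:\b(x)>t\}$ is open (indeed r.e.\ open by Proposition~\ref{prop:dht-r-e-open} when $t$ is computable, and open for arbitrary $t$ because $\b$ is continuous by Theorem~\ref{thm:graca-ivp}). A relatively closed subset that is dense in the open set $E$ must contain $E$, so $\Sing_t$ is dense in $E$ precisely when $E\subseteq\Sing_t$, i.e.\ when $\b(x)\le t$ for every $x\in E$. The right-hand condition of (i) therefore unwinds to the single requirement that $E\cap\D_h^t\neq\emptyset$ for all $t>0$, equivalently $\sup_{x\in E}\b(x)=\infty$, bearing in mind that $\b$ may take the value $+\infty$ on never-escaping initial conditions. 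Throughout I assume states are nonempty; the empty set vacuously satisfies $\emptyset\subseteq\D_h^t$, so $\b(\emptyset)=+\infty$ and would make completeness trivial.

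For the forward (``only if'') direction of (i) I argue contrapositively. Suppose $E\subseteq\Sing_{t^*}$ for some $t^*>0$, so $\b(x)\le t^*$ for all $x\in E$, and let $\{(E_n,t_n)\}_{n}$ be any world with $E_m=E$. Unwinding Definition~\ref{def:blurred-world}, every point of a later (nonempty) state $E_{m+j}$ is of the form $\Phi_{t_{m+j}-t_m}(x)$ with $x\in E$, and for this flow to be defined one needs $t_{m+j}-t_m<\b(x)\le t^*$. Hence $t_{m+j}<t_m+t^*$ for all $j$, so $\sup_n t_n<\infty$ and the world is not complete; thus a complete world through $E$ forces $E\not\subseteq\Sing_t$ for every $t$.

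For the converse (``if'') direction I construct a complete world with $E_0=E$, assuming $E\cap\D_h^t\neq\emptyset$ for all $t$. The tempting move is to collapse $E$ to a small rational ball and flow, and this is the step I expect to be the main obstacle: it fails, because a ball has compact closure, so $\b$ is bounded on it and one can advance only for a bounded total time, creating a finite accumulation point of the $t_n$. The remedy is to retain a reservoir of arbitrarily long-lived points. Concretely, I pick a computable $\tau_0\in(0,\b(E))$ (possible since $\b(E)>0$ by Proposition~\ref{prop:beta-E}), set $t_1=\tau_0$ and $E_1:=\Phi_{\tau_0}(E)\cap\D_h^2$, and thereafter take unit steps $\tau_n=1$ with $E_{n+1}:=\Phi_1(E_n)\cap\D_h^2$. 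I would then check the invariant $\b(E_n)\ge 2$ and $\sup_{x\in E_n}\b(x)=\infty$: since the Hamiltonian flow is a bijection that merely translates $\b$ down by the elapsed time, it preserves an infinite supremum, so each intersection with $\D_h^2$ is nonempty, while $E_n\subseteq\D_h^2$ yields $\b(E_n)\ge 2>\tau_n$ and licenses the next step. Each $E_n$ is a genuine state by Proposition~\ref{prop:dht-r-e-open}, Theorem~\ref{thm:flow-preserves-r-e-open} and its corollary, together with closure of r.e.\ open sets under intersection (Proposition~\ref{prop:intersection-r-e-open}); and $t_n=\tau_0+(n-1)\to\infty$, so the world is complete. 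Part (ii) then follows: by (i) a complete world through a state $E$ exists iff $E\not\subseteq\Sing_t$ for all $t$, so such a world exists through every state iff no state is contained in any $\Sing_t$. As every nonempty open subset of $\M$ contains a rational-ball state with closure in $\M$, and conversely every state is open, this is equivalent to no $\Sing_t$ having nonempty interior, i.e.\ (since $\Sing_t$ is relatively closed and hence equal to its own relative closure) to $\Sing_t$ being nowhere dense in $\M$ for every $t>0$.
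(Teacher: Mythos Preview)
Your argument is correct and follows essentially the same route as the paper: both directions of (i) hinge on the equivalence (via closedness of $\Sing_t$) between ``$\Sing_t$ not dense in $E$'' and ``$E\cap\D_h^t\neq\emptyset$'', and your iterative construction $E_{n+1}=\Phi_1(E_n)\cap\D_h^2$ is just a step-by-step variant of the paper's one-line formula $E_n=\Phi_{t_n}(E\cap\D_h^{t_{n+1}})$, with the same idea of intersecting with a $\D_h^s$ to retain a reservoir of long-lived points. Your version is in fact slightly more careful, since it explicitly secures $E_0=E$ and flags the nonemptiness assumption that the paper leaves implicit; part (ii) is handled identically in both.
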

\begin{proof}
    (i) Suppose that $\Sing_t$ is dense in $E$ and that $\{(E_n,t_n)\}$ is a world through $(E,t_0=0)$. Then for all $n\geq 0$ with $t_n < t$, $\Phi_{-t_n}(E_n) \subseteq E$ so $\Sing_t$ is dense in $\Phi_{-t_n}(E_n)$, so that $\b(E_n) < t-t_n$. Hence, a world through $E$ cannot pass beyond time $t$.
    
    Conversely, suppose that for all $t>0$, $\Sing_t$ is not dense in $E$. Then for all $t>0$, the set $E\cap\D_h^t$ is non-empty. Because $E$ is r.e.\ open by Definition~\ref{def:blurred-state} and $\D_h^t$ is r.e.\ open by Proposition~\ref{prop:dht-r-e-open}, $E\cap\D_h^t$ is also r.e.\ open (see Proposition~\ref{prop:intersection-r-e-open}). Take any increasing sequence $(t_n)_{n\in\N}$ such that $t_n\to\infty$ as $n\to\infty$. Then $\{(\Phi_{t_n}(E\cap\D_h^{t_n+1}), t_n)\}_{n\in\N}$ defines a complete world through $E$.
    
    (ii) follows from (i) by noting that a set is nowhere dense iff it is not dense in any basic open set, and that $\E$ forms a basis for the topology on $\M$.
\end{proof}

Unfortunately, whether the conditions in Theorem~\ref{thm:complete-worlds-nowhere-dense} are satisfied is an open problem even in, for example, Newtonian gravitational systems. In these systems, for $N\geq 2$, there are initial conditions (in the orthodox sense) which lead to singularities in the form of particle collisions in finite time. For $N\geq 4$, however, there are also initial conditions which do not lead to a collision, but in which particles escape to infinity in finite time (see \textcite{earman1986} for more discussion). In the case of 4 particles, it is known that the set of initial conditions leading to (collision or noncollision) singularities in finite time, i.e.\ the set $\bigcup_{t>0} \Sing_t$, has Lebesgue measure zero. But this does not imply that the sets $\Sing_t$ are nowhere dense, and cases for $N>4$ remain to be settled.\footnote{Even in classical mathematics, that is; recall that this entire section uses classical mathematics!}

\

However, if the conditions of Theorem~\ref{thm:complete-worlds-nowhere-dense}(ii) \emph{are} satisfied, then the situation is slightly more satisfying than in the orthodox case: while in our theory a world through a given initial condition $E$ extending beyond each timepoint always exists, in orthodox classical physics there are still initial conditions which are guaranteed to lead to a singularity in finite time.
Additionally, if the conditions of Theorem~\ref{thm:complete-worlds-nowhere-dense} are satisfied, and if we assume that on approaching the singularities $\Sing_t$ the state keeps collapsing to ever smaller diameters, then we can even (informally) say that the probability that the system `avoids' the singularities is one.\footnote{Here I mean only a mathematical interpretation of the word `probability': the set of worlds which can pass beyond the singularity has full measure in the set of all possible worlds through a given initial condition.}

\subsection{Time and indeterminism}
Intuitively, we speak of $n\in\N$ as parameterising Heraclitus time; the fact that it is a discrete parameter corresponds to the idea that only a finite amount of new information is generated in a finite amount of time. (Once again, this parallels a concept from intuitionism, namely the fact that the time experienced by the creating subject is modelled by a discrete parameter (see section~\ref{sec:intuitionism-time}).) The parameter $t$ and evolution $E\mapsto \Phi_t(E)$ for $\a(E) < t < \b(E)$ correspond to Parmenides time evolution. The sequence $(t_n)_{n\in\N}$ couples Heraclitus time to Parmenides time.

In the current interpretation, the set of possible worlds $\mathcal W$ is precisely given by the set of blurred worlds on $\M$ as defined above.
It follows directly from Proposition~\ref{prop:beta-E} that for any blurred state $E\in\mathcal E$ and computable $t\in\R_c$, a blurred world $W = \{(E_n,t_n)\}_{n\in\N}$ with $E_0 = E$ and $t_0 = t$ exists; it is also evident that multiple such blurred worlds exist. Therefore, we see that our theory of classical physics is indeterministic in Earman's sense (see section~\ref{sec:determinism}): given any initial condition $(E_0,t_0)$, we may provide two differing worlds $W,W'\in\mathcal W$, both through $(E_0,t_0)$.

\subsection{Relation to intuitionistic reals}
As discussed in the the introduction to this chapter and in section~\ref{sec:DOIs}, the present approach to finite-precision physics is entirely classical and therefore differs conceptually from the approach in section~\ref{sec:intuitionistic-reals-physics}. However, there are still various similarities between the two theories on the technical side. For example, in the present theory, any one-dimensional physical quantity that can be expressed as a continuous function $f:\M\to\R$ of phase space, such as the position or momentum of a particle or the energy of the configuration, inherits a domain of indeterminacy from the blurred state $E$. When $E$ is connected (which we consider another physically justifiable assumption on blurred states), this domain of indeterminacy is an interval. However, it is not a rational interval, as the most literal application of the intuitionistic philosophy of real numbers to physics would suggest.\footnote{I would intuitively expect that if $f$ were $(\rho^k,\rho)$-computable, then $f(E)$ would be r.e.\ open; however, it appears that only taking pre-images preserves r.e.\ openness (see Proposition~\ref{prop:preimage-r-e-open}).}

A difference between the theory of this chapter and the sequences of rational intervals in the definition of the intuitionistic reals is the property that the latter \emph{dwindle}, i.e.\ shrink to arbitrarily small lengths (see Definition~\ref{def:intuitionistic-reals}). In our theory, this would translate to adding the condition to Definition~\ref{def:blurred-world} that
\begin{equation}\label{eq:diam-dwindles}
    \forall m \exists n: \diam(\Phi_{t_0-t_n}(E_n)) < 2^{-m},
\end{equation}
such that when waiting long enough, the state at time $t_0$ becomes determined with arbitrary precision.\footnote{As well as the state at any other time $t$ between $\inf_n t_n$ and $\sup_n t_n$, since $\Phi_{t-t_0}$ is continuous.} I myself do not see direct physical motivation for adding this condition (see, however, the next section).

Finally, note that we can see the two conditions in Definition~\ref{def:blurred-world} as a spread law on the sequences $\{(E_n,t_n)\}\subseteq\mathcal E\times\R_c$, defining the spread of possible worlds (see section~\ref{sec:spreads} for the definition of spreads). This spread law is, however, not decidable and it does therefore not make sense to speak of intuitionistic spread laws in this context.

\subsection{The orthodox theory as limit case}
If we do make the requirement~\eqref{eq:diam-dwindles}, then we recover the orthodox theory of classical physics in the limit $n\to\infty$, because the sequence $\{\Phi_{t_0-t_n}(E_n)\}_n$ encloses a unique real number,\footnote{And it defines a unique world line through $t\mapsto \bigcap_n \{\Phi_{t-t_n}(E_n)\}_n$.} and conversely, every real number can be defined by an infinite shrinking and dwindling sequence of r.e.\ open sets. We see that the orthodox view corresponds to a view back on the initial conditions `from the end of time', as was already suggested in section~\ref{sec:how-have-we-come-to-orthodox}. Moreover, because the infinite sequence $\{\Phi_{t_0-t_n}(E_n)\}_n$ is not necessarily computable, we again see that it in general takes infinitely much information to specify the initial condition of the orthodox interpretation (while every finite initial segment of the sequence requires only finitely much information, by virtue of the sets $E_n$ being r.e.\ open).


    \chapter{Conclusion and prospects}\label{chap:open-questions}

\lettrine[lines=3]{W}{e have discussed} three attempts (each of which either more or less successful) at formulating an indeterministic classical physics based on the assumption that physical quantities do not have infinitely precise values: a theory proposed by \textcite{dSG19} and based on the binary expansion of reals (Chapter~\ref{chap:gisin}); an attempt to use the intuitionistic philosophy of the continuum to describe finitely precise quantities (Chapter~\ref{chap:intuitionistic-physics}); and a theory based on the Hamiltonian formalism, replacing points by regions and using notions from computable analysis (Chapter~\ref{chap:further-development}). Putting the specific technical and conceptual differences between these theories aside, they show us, amongst other things, that contrary to the orthodox interpretation, one can also think of physical quantities as being only finitely precise at each point in time; that the physical continuum can also been as revolving around \emph{regions}, instead of points; that information about physical quantities might only come into existence when it becomes physically relevant, or at least when it is measured; that this naturally introduces indeterminism, in particular in chaotic systems and even in theories such as classical mechanics; and that it is possible that the state of a physical system is describable using only finitely much information. However, we have also seen that the questions of whether our real world is deterministic or indeterministic, whether its quantities are finitely or infinitely precise and whether its state is describable by finite or infinite information are empirically underdetermined and cannot be settled by physical argument only.

Moreover, in Chapter~\ref{chap:intuitionistic-physics} we discussed the debate of using constructive and intuitionistic mathematics to describe physics, and concluded that every attempt to do so would have to deal with the apparent collision between the objectivity of physics and the anthropocentrism of constructive and intuitionistic mathematics, and with the fact that the motivations for indeterminacy in physics and mathematics are completely different. In addition, the rational numbers involved in the intuitionistic approach to real numbers have some shortcomings from the technical side, suggesting that they, while being essential to the human construction of the intuitive continuum, play no significant role in the physical continuum.

However, the intuitionistic way of approaching the continuum through ever shrinking regions instead of points has inspired us in our final formulation of finite-precision physics in Chapter~\ref{chap:further-development}. The approach presented there is only of numerous possible ways to formulate finite-precision classical physics, and it still suffers from some problems. One of these, which is related to the invariance of r.e.\ open sets under coordinate transformations, was discussed in the last paragraph of section~\ref{sec:DOIs}. In this chapter, we discuss some more problems and questions arising in the context of the formalism of Chapter~\ref{chap:further-development} as well as finite-precision theories in general. We also discuss the possibility of finite-precision interpretations of other physical theories than classical mechanics.

\section{Universal constants}
Until now, we have implicitly assumed that the universal physical constants have infinitely precise values. The arguments from Chapter~\ref{chap:whats-the-problem}, however, might suggest that also these physical constants should only be determined up to finite precision at each point in time. This would add another level of indeterminacy to the theory, related to the differential equations themselves instead of the initial conditions. Indeed, the assumption that the Hamiltonian flow $\Phi:\D_h\to\M$ is a well-defined, single-valued operator hinges on the assumption that all constants appearing in Hamilton’s equations are infinitely precise. In the presence of indeterminacy in these constants, even time evolution of the DOI with respect to the differential equations (i.e.\ Parmenides time evolution) would become time-irreversible, since $\Phi_t$ and $\Phi_{-t}$, when applied to subsets of phase space, would not be each other’s inverse.

Furthermore, while quantities like positions and momenta of particles can be said to be localised in space, the value of universal constants is relevant to all regions of space. This might form a problem if new information about the universal constants is generated over time, because this new information would have to be communicated faster than light across the Universe (if this were not the case, it would be interesting to think what would happen when two isolated systems with differing physical constants came into contact). This is similar to the EPR paradox and nonlocality in quantum mechanics.\footnote{Of course, this problem also applies to ‘localised’ quantities such as positions and momenta, in the presence of forces that act at a distance.}

Of course, many universal constants can be set to 1 by choosing a suitable unit system, but in general, not all constants can be set to 1 simultaneously. The arguments from Chapter~\ref{chap:whats-the-problem} might, however, not necessarily apply to physical constants, as physical constants form part of the physical \emph{laws}, whereas Chapter~\ref{chap:whats-the-problem} was mainly aimed at quantities which describe the \emph{configuration} of a physical system.\footnote{Another (but ad hoc) way out of this problem is to postulate that ratios between universal constants of the same dimension are given by rational or computable numbers and are therefore not subject to the arguments against infinite information.} In this case, the formalism of Chapter~\ref{chap:further-development} could still be applied.


\section{The past; the thermodynamic arrow of time}\label{sec:past-thermo}
A world is defined in Definition~\ref{def:blurred-world} as a one-way infinite sequence; hence, Heraclitus time extends in one direction only. One could also try to let the index $n$ range over $\Z$ instead of $\N$. However, if $\a(E_0) > -\infty$, i.e.\ a singularity lies in the past of $E_0$, then it would be impossible for the past to extend beyond this point in time (contrary to the future, for singularities that lie in the future of $E_0$ might be `avoided’ by collapse to a smaller DOI, as discussed in section~\ref{sec:completeness}).

Let me briefly note that there are two ways to view the past in a finite-precision physics; either the past stays fixed, i.e.\ the initial condition is always given by $E_0$, or the past becomes more determined as time progresses, i.e.\ the initial condition is given by $\Phi_{t_0-t_n}(E_n) \subseteq E_0$ at time $t_n$. These views might be compared to the difference between the \emph{growing block universe} and \emph{presentism} in the philosophy of time \cite{sep-time}, respectively.

Moreover, let me note that the assumption that DOIs always collapse to subsets (i.e.\ $\Phi_{t_{n+1}-t_n}(E_n) \subseteq E_{n+1}$) might be unwarranted. It is also possible to imagine a theory which sets a certain minimal resolution to phase space (which perhaps varies over time). As Hamiltonian evolution of a region of phase space is often ergodic (so that the diameter of the region increases), but preserves the volume of the region by Liouville’s theorem, the shape of the region generally becomes increasingly intricate over time; if the level of detail in this shape requires a higher resolution than the minimum allowed resolution, then collapse of the region to a new DOI might require the new DOI to tread outside of the previous one. Such a theory would, however, allow classically impossible worldlines and it is questionable whether it would be empirically equivalent to the orthodox interpretation.

\

It would also be interesting to work out the implications of finite-precision physics for statistical mechanics. These theories are closely related, but at the same time radically different: statistical mechanics deals with uncertainties over the set of microstates, while finite-precision physics assumes that microstates themselves are inherently indeterminate. Indeed, the probability distributions in statistical mechanics are often interpreted as epistemic \cite{FrW11}, while the indeterminacies in finite-precision physics should be thought of as ontological.

In particular, it would be insightful to explore the relation with the second law of thermodynamics in more detail. Boltzmann’s proof of the $H$-theorem, which derives this law from the classical kinetic theory of gases, has historically led to much debate because of its seeming incompatibility with (the orthodox interpretation of) classical mechanics \cite{sep-time-thermo}. Two notable objections to the $H$-theorem were brought forward by Johann Loschmidt in 1877 and Ernst Zermelo in 1896. Loschmidt remarked that since classical mechanics is time-reversal invariant, the $H$-theorem would imply that entropy increases towards the past. This paradox might be resolved if we accept the finite-precision interpretation, which is not time-reversal invariant. Zermelo, on the other hand, based his argument on Poincaré’s recurrence theorem, which states that a bounded isolated classical system always returns arbitrarily closely to its initial state. However, as pointed out by \textcite{dSG19}, the recurrence theorem depends on the necessary condition that this state is defined with infinite precision.\footnote{An interesting related remark is that the (usual) proof of the recurrence theorem is not constructive; it guarantees recurrence, but does not tell us when it will happen \cite{zhang2017witnessing}.}

Finally, \textcite{drossel2015} argues that deriving macroscopic time-irreversible laws such as the second law of thermodynamics is only possible when the underlying microscopic theory is time-reversal invariant itself, and that dropping the assumption of infinite precision would make such derivations possible.

\section{Beyond classical mechanics}
Another prospect for future research is how to formally apply the idea of finite precision to theories beyond Newtonian classical physics. Most modern physical theories are field theories in which time evolution is given by partial differential equations instead of ordinary differential equations. It is therefore important to consider how to define a finite-precision field (i.e.\ a function of space or spacetime $\M$) of which the precision increases over time, and whether it makes sense to impose the requirement that these fields should be describable by only finitely much information. There are multiple possibilities to do this: one could think, for example, of two computable functions $\M\to\R$ defining upper and lower bounds for the value of a one-dimensional field on $\M$ at each point in $\M$; alternatively, one could use the method of approximation of measurable functions by simple functions $\sum_{i=1}^N a_{i}\chi_{A_i}$, where $a_i\in\R$ is e.g.\ computable and $A_i\subseteq\M$ is r.e.\ open (or satisfies another computability property) for each $i$. In intuitionistic mathematics, there is also a way to code continuous functions $\phi:\NN\to\NN$, where $\NN = \N^{\N}$, and hence continuous functions $\R^n\to\R^m$, by sequences of natural numbers, so that it could be possible to see fields as special cases of choice sequences of natural numbers (see e.g.\ \textcite[§1.3.4]{veldman2008borel}), just as we can identify real numbers with choice sequences.

In quantum mechanics, a ‘blurred state’ could alternatively be modelled by a subset of Hilbert space with some computability notion; however, the fact that this Hilbert space is in general not separable could pose problems. Using finite-precision wave functions would add a level of indeterminism to the one already present in quantum mechanics, and it would be interesting to find out what would be the relation between these two types of indeterminism. Moreover, Stone's theorem prohibits the existence of singularities in quantum mechanics such as those that exist in classical mechanics \cite{landsman2017}, which means that a finite-precision quantum mechanics could be drastically different from the one presented in Chapter~\ref{chap:further-development}. (The absence of singularities in quantum mechanics is the reason that quantum mechanics is sometimes called `more deterministic' than classical mechanics, apart from wave function collapse.)

Finally, in the special and general theories of relativity, the relativity of time complicates talking about physical quantities becoming more precisely determined over time. Perhaps we could just see the universe as a four-dimensional `blurred block' of spacetime (although singularities might complicate this), the contents of which become more precise over Heraclitus time, which could therefore be seen as a fifth dimension. In this case the connection between Heraclitus and Parmenides time would be less clear; however, researching this could perhaps even shed light on the problem of time in GR. Alternatively, in GR one could fix a spacelike foliation (i.e.\ set of time slices) and model the solution of the Einstein equations as evolving along that foliation, possibly using the ADM formalism (based on a Hamiltonian) \cite{arnowitt2008republication} to arrive at a similar formulation as in Chapter~\ref{chap:further-development}. Of course, the question remains what foliation (or what Cauchy surface) should be chosen to accurately model the connection with Heraclitus time, and whether Heraclitus time should perhaps also be a relative notion.

    \appendix
    
    \chapter{Classical mechanics and determinism}\label{app:preliminaries-physics}

\section{Hamiltonian mechanics: the orthodox interpretation}\label{sec:hamiltonian-formalism-orthodox}
Although the discussions in this thesis apply to a wide range of physical theories (indeed, all theories that make use of the real numbers), a simple and useful example which we focus on throughout is Newtonian classical physics. In this appendix we will briefly walk through the mathematical formalism of this theory (or rather, family of theories) and discuss its usual interpretation. Some of the notation introduced here is used in section~\ref{sec:mathematics-development}.
We focus on the formalism of Hamiltonian point mechanics, which describes classical systems consisting of $N$ point particles. We will first sketch the purely mathematical context and then discuss the physical interpretation. (The mathematical details are not that important and can be skipped on a first reading. A more detailed exposition can be found in~\textcite[Ch.~3]{landsman2017}.)

We define \defn{phase space} to be an arbitrary Poisson manifold $\M$, i.e.\ a manifold with Poisson bracket $\{\cdot,\cdot\}:C^\infty(\M)\to C^\infty(\M)$.\footnote{A Poisson bracket is a mapping $\{\cdot,\cdot\}:C^\infty(\M)\to C^\infty(\M)$ which is anti-symmetric and satisfies the Jacobi identity, and has the property that the mapping $\d_h: f\mapsto \{h,f\}$ is a derivation of $C^\infty(\M)$ for each $h\in C^\infty(\M)$.}
A smooth function $h\in C^\infty(\M)$ is singled out and called the \emph{Hamiltonian}. Because $\M$ is a Poisson manifold, $h$ corresponds to a derivation $\d_h : f\mapsto \{h,f\}$ of $C^\infty(\M)$, which corresponds to a vector field $X_h$ on $\M$, called the \defn{Hamiltonian vector field}. A \defn{curve} through $x_0\in\M$ is a smooth function $\gamma: J\to \M$, where $J\subseteq\R$ is some open real interval containing $0$ such that $\gamma(0) = x_0$. A curve $\gamma:J\to\M$ \defn{integrates} $\d_h$ (or $X_h$) if for all $f\in C^\infty(\M)$ and $t\in J$, $\d_h f(\gamma(t)) = \frac{\dif}{\dif t} f(\gamma(t))$.
It follows from standard results of the theory of ordinary differential equations that $X_h$ can be integrated around every $x_0\in\M$ by a unique maximal curve $\gamma_{x_0}:J_{x_0} \to \M$, where $J_{x_0}$ is the \defn{maximal interval of existence and uniqueness}. Taken together, these intervals define the Hamiltonian \defn{flow domain}
\[ \D_h = \{ (t,x) \in \R\times\M \mid t\in J_x \}. \]
In section~\ref{sec:mathematics-development} we also use the notation
\begin{equation}\label{eq:dht}
    \D_h^t = \{ x\in\M \mid (t,x)\in D_h \}, \text{\quad for } t\in\R.
\end{equation}
The \defn{Hamiltonian flow} $\Phi:\D_h\to\M$ is a smooth function such that for all $x\in\M$, $\Phi(\cdot,x)$ is a curve through $x$ which integrates $X_h$. Finally, we denote
\[ \Phi_t: \D_h^t \to \M,\ \Phi_t(x) := \Phi(t,x) \text{\quad for all } t\in\R, x\in\D_h^t .\]

\

Physically, the above formalism can be used to describe classical systems consisting of $N$ (generalised) point particles, in which case phase space is parameterised by the (generalised) positions and (generalised) momenta of all particles, denoted by $\bm{q_i}\in\R^3$ and $\bm{p_i}\in\R^3$ respectively for $i\in \{1,2,\ldots,N\}$. (Hence, in the case of a three-dimensional system, $\M\subseteq\R^{6N}$.) The essence of the `orthodox' (i.e.\ usual) interpretation of classical physics, which will be a subject of debate in this thesis, is that a \defn{state} of the system is represented by an (`infinitely precise') point $x$ in phase space $\M$.

The Hamiltonian $h\in C^\infty(M)$ physically represents the energy of the system; in practise, it consists of kinetic energy terms,
\[ \sum_{i=1}^N \frac{|\bm{p_i}|^2}{2m_i}, \]
where $m_i$ is the mass of particle $i$, and potential energy terms like Newton’s gravitational potential
\begin{equation}\label{eq:newton-gravity}
- \sum_{i} \sum_{j\neq i} \frac{G m_i m_j}{|\bm q_i - \bm q_j|},
\end{equation}
where $G\in\R_{>0}$ is Newton’s gravitational constant. A term like the gravitational potential excludes the \defn{collision set}
\[ \mathcal C = \{(\bm{q_1},\bm{q_2},\ldots,\bm{q_N},\bm{p_1},\bm{p_2},\ldots,\bm{p_N}) \mid \bm{q_i} = \bm{q_j} \text{ for some } i\neq j\} \]
from the set of possible states, so that the phase space is in such cases often taken to be $\M = \R^{6N}\setminus\mathcal C$. (It should be respected, however, that the mathematical formalism sketched above captures a very broad set of classical systems. The precise forms of the phase space and the Hamiltonian depend on the particular classical theory under consideration.)

In the orthodox interpretation, a state $x_0\in\M$ uniquely determines a \defn{wordline}, i.e.\ a curve $\gamma_{x_0}: J_{x_0} \to \M$ through $x_0$ integrating the Hamiltonian vector field $X_h$. This represents the evolution of the system through time. With an appropriate choice of Poisson bracket on $\M$ and writing this curve as $\gamma_{x_0}(t) = (\bm{q_1}(t),\ldots,\bm{q_N}(t),\bm{p_1}(t),\ldots,\bm{p_N}(t))$, $\gamma_{x_0}$ is the unique maximal curve through $x_0$ that solves \defn{Hamilton's equations} (or the \emph{Hamiltonian equations}) of motion\footnote{More generally, it follows from a theorem by Darboux that every $2n$-dimensional Poisson manifold with invertible Poisson tensor admits charts around any point $x\in\M$ in which the Poisson bracket is of the desired form, leading to the form of Hamilton's equations shown. See~\textcite[Theorem~22.13]{lee2018manifolds}.}
\begin{align*}
	\ddt{\bm{p_i}}(t) &= -\nabla_{\bm{q_i}} h(\bm{q_1}(t),\ldots,\bm{q_N}(t),\bm{p_1}(t),\ldots,\bm{p_N}(t)); \\
	\ddt{\bm{q_i}}(t) &= \nabla_{\bm{p_i}} h(\bm{q_1}(t),\ldots,\bm{q_N}(t),\bm{p_1}(t),\ldots,\bm{p_N}(t)),
\end{align*}
for all $i\in\{1,2,\ldots,N\}$.

\section{Determinism in physics}\label{sec:determinism}
\emph{Determinism} is roughly the idea that given the configuration of a physical system (or ‘world’) at a time $t$, the future of the system is completely fixed by the natural laws. Although the idea of determinism is ancient, giving a precise and unambiguous definition of determinism has proven a very difficult task; moreover, the requirements for a definition of determinism to be satisfactory also partly depend on the theory under consideration (e.g.\ relativity of time in relativity theory impacts the meaning of determinism). A definition that we will consider in this thesis is formulated in \textcite{earman1986} (similar to the one given by \textcite{montague1974}) and is based on the common philosophical parlance of `possible worlds’. Here, a `world’ means a four-dimensional space-time world, i.e.\ a collection of all events that have happened, are happening or will happen in that world, and a world is a `possible world’ if it is possible according to the laws of nature. The collection of all possible worlds is denoted by $\mathcal W$. A world $W\in\mathcal W$ is called (Laplacian) \defn{deterministic} if for any $W'\in\mathcal W$, if $W$ and $W'$ agree at some time $t$, then $W$ and $W'$ agree at all times. We could mathematically express this as:
\begin{equation}\label{eq:earman-determinism}
    \forall W'\in\mathcal W\,\forall t : W_t = W'_t \implies W = W',
\end{equation}
where $t$ ranges over the timepoints defined by the theory, e.g.\ $t\in\R$ (or $t\in\N$ in section~\ref{sec:lawless-indeterminism}).

A particular difficulty in defining determinism is to separate it from a closely related notion called \defn{predictability}, which is the idea that one can \emph{in principle} predict the future evolution of the world (with arbitrary small error) on the basis of sufficient knowledge about the present configuration. Pierre-Simon de Laplace, one of the first modern philosophers on determinism, seemed to identify determinism with predictability (`Laplace’s demon’ is named after him)\footnote{``An intelligence knowing all the forces acting in nature at a given instant, as well as the momentary positions of all things in the Universe, would be able to comprehend in one single formula the motions of the largest bodies as well as the lightest atoms in the world, provided that its intellect were sufficiently powerful to subject all data to analysis; to it nothing would be uncertain, the future as well as the past would be present to its eyes.’’ \cite{laplace1820theorie}}, as well as Karl Popper \cite{bishop2003predictability}. Some maintain that predictability implies determinism or the other way around, but most agree that such an implication does not exist. For example, reasons to believe that determinism does not imply predictability can be based on the \emph{in principle} limits on measurement (e.g.\ a measurement has finite precision) and information storage capacity (e.g.\ one can store only finitely many measurement results). Also chaotic systems play a role in this distinction: one might be able to approximate the state at some given future timepoint arbitrarily closely by using sufficiently precise measurements of the present state, but might not be able to approximate the state at \emph{arbitrary} future timepoints to the required precision without requiring infinitely precise measurement results (for small changes in the initial condition might blow up over time) \cite{bishop2003predictability}.

However, we will chiefly be concerned with determinism, which is an ontological notion and has little to do with limits on measurement and knowledge, even \emph{in principle} ones. An example of a theory which is usually considered deterministic (i.e.\ a theory in which all possible worlds are deterministic) is classical mechanics: as discussed in section~\ref{sec:hamiltonian-formalism-orthodox}, the solution through a given initial condition $x_0\in\M$ is unique (however, the fact that the maximal interval of existence of a solution might only extend up to a finite time raises questions about the determinism of classical mechanics; see e.g.\ \textcite[Ch.~III]{earman1986}). An example of a theory that is widely regarded as \emph{indeterministic}, on the other hand, is quantum mechanics, in which the outcome of a collapse of the wave function is not predetermined---but does satisfy probabilistic laws (propagation of the wave function itself via the Schrödinger equation is, however, completely deterministic. The existence of at least one event that is not predetermined is sufficient for a world to be indeterministic \cite{popper1950indeterminism}).

It can be shown, however, that for most theories, determinism cannot be verified nor falsified by experiment: for each deterministic theory, there exists an \emph{empirically} (or \emph{observationally}) \emph{equivalent} indeterministic theory which has the same empirical predictions, and vice versa \cite{werndl2009deterministic}. (Quoting \textcite{born1926quantenmechanik}: ``I myself tend to relinquish
determinism in the atomic world. But this is a philosophical question, for which physical arguments alone are not decisive.''\footnote{Translation by \textcite{landsman2020randomness}.})
Multiple deterministic theories of quantum mechanics have been developed, such as Bohmian mechanics, which supplements quantum theory with `hidden variables’ that determine the outcome of wave function collapse, but which cannot be directly measured \cite{sep-bohmian-mechanics}. In this thesis we show, conversely, that indeterministic theories of classical physics are also possible.

    \chapter{Intuitionistic mathematics}\label{app:intuitionism}

\lettrine[lines=3]{I}{ntuitionistic mathematics} was founded by the Dutch mathematician Luitzen Egber\-tus Jan (“Bertus”) Brouwer (1881--1966) in the early twentieth century in response to the rapid formalisation of mathematics at that time, and had as its goal to approach mathematics in a more intuitive or psychological way than the usual approach to mathematics (called ‘classical mathematics’ by intuitionists). Instead of reducing all of mathematics to a formal logical system of symbols and manipulation rules, intuitionistic mathematics is formulated from the point of view of the mathematician. Intuitionists à la Brouwer see mathematics as a `languageless activity of the mind’; only when writing down or communicating mathematical statements and proofs to other people are mathematicians in need of a language to express their ideas in. The patterns that arise in this symbolic language can then be seen as the rules of logic.\footnote{In Brouwer’s view, logic is not a foundational science, but an `observational' science \cite{brouwer1907}.} The logic that arises when studying intuitionism is (unfortunately) called \emph{intuitionistic logic}; however, formal logic is not needed to build up intuitionistic mathematics, and intuitionism cannot be reconstructed from logic only, since also arguments about the nature of the (idealised) mathematician are used. We will see examples of this later on this is appendix.

\section{Constructive mathematics}\label{sec:constructive-mathematics}
Intuitionistic mathematics is a form of constructive mathematics, which primarily differs from classical mathematics in that it only accepts the existence of an object when it can be constructed explicitly, as opposed to accepting it on the basis of e.g.\ a proof by contradiction. In particular, most variants of constructive mathematics adopt the \defn{BHK interpretation} (named after Brouwer, Arend Heyting and Andrei Kolmogorov) of the logical connectives and quantifiers (whenever it is convenient to use logical symbols at all). In this interpretation, the logical constants are not defined through truth tables, but through their use in proofs. A few examples are:
\begin{itemize}
	\item To prove $A\lor B$ we must either have a proof of $A$ or a proof of $B$.
	\item To prove $A\to B$ we must provide a method that converts any proof of $A$ into a proof of $B$.
	\item To prove $\neg A$, we must prove $A\to 0=1$.
\item To prove $\exists x A(x)$ we must construct an object $x$, together with a proof of $A(x)$.
\end{itemize}
See e.g.\ \textcite{dummett2000elements} or \textcite{TrvD} for more on the constructive interpretations of logical constants, and for more comprehensive introductions to intuitionistic mathematics.

From the above examples we see that, for instance, $\neg\forall n\,\neg P(n)$ in general does not imply $\exists n\,P(n)$ in constructive mathematics, whereas this implication does hold in classical mathematics. Indeed, constructive mathematics can be said to require \emph{positive evidence}, and that impossibility of negative evidence is not enough.

From this, as well as from the interpretation of $\lor$, we see that the \emph{principle of the excluded middle} (PEM), which is the assertion that $A\lor\neg A$ holds for all statements $A$, is constructively invalid. We call statements $A$ for which $A\lor\neg A$ does hold \defn{decidable}. PEM, which is equivalent to the principle of double negation elimination ($\neg\neg A\to A$), is one of the basic axioms of classical mathematics. In fact, intuitionistic logic, which arises from the BHK interpretation, is precisely classical logic without double negation elimination.

\

There are many variants of constructive mathematics, all of which follow the rules of intuitionistic logic \cite{sep-mathematics-constructive}. Errett Bishop’s constructive mathematics, for example, is a formalistic approach to mathematics which uses intuitionistic logic instead of classical logic \cite{bishop1967}. As noted before, however, intuitionistic mathematics à la Brouwer is not formalistic and is therefore more than just intuitionistic logic. As a result, intuitionism is only a form of constructive mathematics in the wider sense, since some intuitionistic results are not accepted by all constructivists. The intuitionistic philosophy of time (as experienced by the mathematician) and infinities, in particular, forms a distinguishing feature of intuitionism, and also suggests that intuitionism might be of use in physics. To fully appreciate the difference between intuitionistic and classical mathematics, therefore, let us consider infinite sequences.

\section{Natural numbers, infinite sequences and LPO}\label{sec:NN}
    


While set-theorists like Frege, Dedekind, Russell and Von Neumann defined the set $\N$ using set-theoretical considerations, Brouwer deduced the existence of $\N$ from iteration of the `basic intuition of two-ity' (the `falling apart of a life moment into two distinct things, one of which gives way to the other, but is retained by memory' \cite{brouwer1981} -- see also section~\ref{sec:intuitionism-time}). That is to say, given any finite number of distinct natural numbers, the passage of time allows one to think of another one. Whatever the precise construction of $\N$, the important difference between classical and intuitionistic mathematics is that in classical mathematics $\N$ is seen as a finished, fully-determined, actually infinite set, whereas in intuitionism we see $\N$ as only \emph{potentially infinite}: at every moment of time the mathematician can have only finitely many elements in mind, but realises that more elements can be appended to them. The intuitionist views the natural numbers as a forever unfinished \emph{project}, rather than a finished \emph{object}.

The same holds for any infinite sequence $\a = (\a(0),\a(1),\a(2),\dots)$ of natural numbers. The set $\N^\N$ of infinite sequences of natural numbers (Baire space) is denoted by $\NN$. We use Greek letters $\a,\b,\g$ to denote elements of $\NN$, and Latin letters $m,n$ to denote natural numbers.

Let us study one particular example of such an infinite sequence. Define $\g\in\NN$ by
\begin{equation}\label{eq:goldbach}
	\g(n) = \begin{cases} 
        0 & \text{ if } n=0; \\
        0 & \text{ if $2n = p+q$ for some pair of prime numbers $p,q$}; \\
        1 & \text{ if such $p,q$ do not exist.}
\end{cases}
\end{equation}

Note that $\g:\N\to\N$ defined in this way is a total function: for all $n$, either $\g(n) = 0$ or $\g(n) = 1$. There is, after all, a known finite procedure to determine whether $2n$ can be expressed as the sum of two primes: simply iterate over all pairs of natural numbers $p,q \leq n$, check if they are prime and if they add up to $2n$.

However, the statement $G := \forall n(\g(n) = 0)$ is equivalent to the Goldbach conjecture, an age-old problem in number theory which has neither been proven nor disproven to this day. Since an intuitionistic proof of $G\lor\neg G$ requires a proof of either $G$ or $\neg G$, the statement $G\lor\neg G$ cannot be said to be true: in other words, $G$ is not decidable. The sequence $\g$ therefore forms a so-called \defn{weak} or \defn{Brouwerian counterexample} to the following special case of the principle of the excluded middle, which \textcite{bishop1967} dubbed the \defn{limited principle of omniscience} (LPO):\footnote{A perhaps more general formulation of LPO would be: $\forall n (P(n)\lor \neg P(n)) \to \exists n (P(n)) \lor \forall n(\neg P(x))$, seen as a schema over all 1-ary propositions P with argument in $\N$. In our formulation of LPO, the premise that $\a(n)=0$ be decidable is always true: equality on the natural numbers is decidable, and $\a(n)$ is understood to be defined in a constructive way, i.e.~there is a finite procedure for finding the natural number which it represents.}
\begin{equation}\label{eq:lpo}
	\forall\a(\forall n(\a(n)=0) \lor \exists n(\a(n)\neq 0)).
\end{equation}

Unproved statements like $G\lor\neg G$ are sometimes called  \emph{vermetel}, which, following \textcite{veldman2020treading}, we translate as \defn{reckless}. These weak counterexamples and reckless statements are often used to show that certain classically acceptable statements, such as LPO, are unacceptable in constructive mathematics. For if LPO were true, we would be able to either prove or disprove the Goldbach conjecture: LPO has \emph{reckless consequences}.

An important question to address is how classical mathematicians have come to accept PEM as a generally valid principle. Brouwer believes that this is caused by the fact that we tacitly generalise our experience of handling finite sets to infinite sets \cite{veldmannotes}.
Indeed, if $s = (s(0), \dots, s(k))$ is a finite sequence of natural numbers then we can prove
\[ \forall n\leq k (s(n)=0) \lor \exists n\leq k (s(n)\neq 0) \]
by simply checking whether $s(n)=0$ for $n\leq k$, since equality of natural numbers is decidable. This does not tell us, however, that we can conclude the same about infinite sequences.\footnote{A similar argument applies to infinite precision of physical quantities: see section~\ref{sec:how-have-we-come-to-orthodox}.}

\

LPO has many reckless consequences: for example, it implies decidability of many well-known unsolved mathematical problems, such as the Goldbach conjecture, the odd perfect number problem, the twin prime problem, the Riemann hypothesis and the $abc$ conjecture \cite{mandelkern1989}. Since it is possible that these will be proven at one day (decidability cannot be disproven), these individual weak counterexamples do not suffice to \emph{prove} $\neg$LPO;\footnote{Some historically significant works on constructivism base their discussion on Fermat's last principle, which was proven by Andrew Wiles in 1995.} however, the quantifier $\forall\a$ leads us to think that LPO is not intuitionistically acceptable. Indeed, there is no proof (read: finite procedure) that can be used to decide whether $\forall n(\a(n)=0)$ or $\exists n(\a(n) \neq 0)$ \emph{for arbitrary} $\a$ (by the insolubility of the \emph{Entscheidungsproblem}). Therefore, Brouwer set out to make explicit that LPO is inconsistent.

\section{Choice sequences and the continuity principle}\label{sec:choice-sequences-continuity}
While Brouwer's thesis from 1907 laid the foundations for intuitionistic mathematics, it was only in 1917 that he introduced to intuitionism one of its most defining features, the notion of a \emph{choice sequence}. The continuity principle, which follows naturally from this notion, can be used to prove $\neg$LPO.\footnote{In fact, the main motivation for Brouwer to introduce choice sequences originated in considerations about the real number continuum. We will focus on LPO first because it is more general. We will discuss the real numbers in section~\ref{sec:int-reals}.}

As stressed before, an infinite sequence of natural numbers is seen as a developing project: at each point in time, only finitely many digits are known to the mathematician. We have seen examples of sequences in which the elements follow a rule or law that is specified beforehand. For example, the $n$-th entry of the sequence $\N = \{0,1,2,\dots\}$ is obtained by adding one to zero $n-1$ times, and the entries of $\g = (\g(0),\g(1),\g(2),\dots)$ defined as in (\ref{eq:goldbach}) are found by determining whether $2n$ can be written as a sum of two primes. These sequences are called \defn{lawlike} sequences and are in some sense predetermined, even though, being infinite, they are projects that are never finished.

To prove $\neg$LPO, Brouwer realised that it is necessary to also consider \defn{choice sequences}, which can be seen as projects that are \emph{not necessarily predetermined} by a law or algorithm, but produce elements of the sequence in a step-by-step manner. There can be several ways in which the successive elements are chosen, but the mathematician may not know what process this is; at any given moment, all that needs to be known is a finite initial segment of the sequence under consideration. In section~\ref{sec:lawless-sequences-mathematics} We will discuss one particular type of choice sequence, the lawless sequence.

Now, suppose we are given a relation $R \subseteq\NN\times\N$ such that $\forall\a\exists n(\a R n)$ holds, i.e.\ given any $\a$ we can find an $n$ and prove that $\a R n$. Here $\a$ can also be a choice sequence, which is revealed to us in a step-by-step manner; hence, whenever we decide that $\a R n$, we must be able to do so on the basis of only the finite initial segment of $\a$ that is known to us at that time. This reasoning leads to \defn{Brouwer's continuity principle (BCP)}, which is a schema over all relations $R\subset\NN\times\N$:
\begin{equation}\label{eq:bcp}
	\forall\a\exists n (\a R n) \longrightarrow \forall\a\exists n\exists m\forall\b (\o\b m = \o\a m \to \b R n),
\end{equation}
where $\o\a m$ denotes the initial segment of length $m$ of the infinite sequence $\a$.

We already saw that LPO has reckless consequences, and hence cannot be intuitionistically true. But now, as promised, we can use BCP to show that LPO is even \emph{false}:
\begin{theorem}
	BCP $\implies \neg$LPO.
\end{theorem}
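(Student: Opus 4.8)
The goal is to prove that BCP implies $\neg$LPO. Since LPO is the statement $\forall\a(\forall n(\a(n)=0)\lor\exists n(\a(n)\neq 0))$, proving its negation means assuming LPO and deriving a contradiction ($0=1$, in the BHK sense). The plan is to apply BCP to a relation $R$ extracted from LPO, and then to exhibit a single sequence on which the continuity condition fails.

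First I would recast LPO as a statement of the form $\forall\a\exists n(\a R n)$. The natural choice is to let $\a R n$ mean: ``either $n=0$ and $\forall k(\a(k)=0)$, or $n>0$ and $\a(n-1)\neq 0$''. Then LPO asserts exactly $\forall\a\exists n(\a R n)$: the disjunct $\forall n(\a(n)=0)$ is witnessed by $n=0$, and the disjunct $\exists n(\a(n)\neq 0)$ is witnessed by taking $n$ to be (one more than) a place where $\a$ is nonzero. So assuming LPO gives us the antecedent of BCP for this $R$.

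Next I would feed this into BCP~\eqref{eq:bcp} to obtain $\forall\a\exists n\exists m\forall\b(\o\b m=\o\a m\to\b R n)$. I then instantiate $\a$ to be the constant zero sequence $\underline 0=(0,0,0,\dots)$, which satisfies $\underline 0\,R\,0$ and no other relation, so the witness must be $n=0$. BCP hands me an $m$ such that every $\b$ agreeing with $\underline 0$ on its first $m$ entries satisfies $\b\,R\,0$, i.e.\ $\forall k(\b(k)=0)$. The contradiction now comes from choosing $\b$ to agree with $\underline 0$ up to length $m$ but to be nonzero somewhere beyond: concretely, let $\b(k)=0$ for $k<m$ and $\b(m)=1$. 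Then $\o\b m=\o{\underline 0}m$, so by the continuity conclusion $\b\,R\,0$, which forces $\forall k(\b(k)=0)$ and in particular $\b(m)=0$, contradicting $\b(m)=1$. This yields $0=1$, completing the derivation of $\neg$LPO.

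The only genuinely delicate point, rather than a routine calculation, is making sure the relation $R$ is set up so that the constant zero sequence has $n=0$ as its \emph{only} witness under $R$; this is what pins the BCP witness to $n=0$ and lets the modified sequence $\b$ generate the contradiction. I expect this to be the main obstacle, since a carelessly chosen $R$ (for instance one also relating $\underline 0$ to some $n>0$) would let the witness depend on $\b$ and break the argument. Everything else, namely verifying that LPO supplies the antecedent of BCP and that the altered tail sequence violates the continuity conclusion, is straightforward once $R$ is chosen correctly.
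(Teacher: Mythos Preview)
Your proof is correct and follows essentially the same route as the paper's: encode the LPO disjunction as a relation $R$, apply BCP at the zero sequence, and contradict by perturbing beyond the continuity modulus $m$. The only cosmetic difference is that the paper's $R$ records merely which disjunct holds (so it does an explicit case split on the BCP witness $i=0$ versus $i\neq 0$), whereas your $R$ additionally encodes the position of the nonzero entry, which lets you argue directly that the witness at $\underline 0$ must be $n=0$ since $\underline 0$ itself lies in its own $m$-neighbourhood.
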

\begin{proof}
    \def\zeroes{\underline{0}}
    Assuming LPO~\eqref{eq:lpo}, we must be able to decide for any $\a$ whether $\forall n(\a(n)=0)$ or $\exists n(\a(n)\neq 0)$ (following the constructive interpretation of $\lor$). Intuitively, we see this leads to a contradiction because we should be able to decide between these two options on the basis of only a finite initial segment of $\a$, which is clearly not possible. Formally, note that
    \[ \forall \a\exists i[ (i=0 \land \forall n(\a(n)=0)) \lor (i\neq0 \land \exists n(\a(n)\neq0))] \]
    holds. We define the relation $R\subseteq \NN\times\N$ such that $\a R i$ holds if and only if the expression between square brackets above holds. Taking $\a = \zeroes \equiv (0,0,\ldots)$, an infinite sequence of 0s, and applying BCP~\eqref{eq:bcp}, find $i,m\in\N$ such that $\forall\b (\o\b m = \o{\zeroes} m \to \b R i)$. In particular, we have $\zeroes R i$. We have $i\neq0\lor i= 0$ since equality of natural numbers is decidable. $i\neq 0$ leads to $\exists n(\zeroes(n)\neq 0)$, which is clearly contradictory. If on the other hand $i=0$ then we can conclude that $\forall\b (\o\b m = \o{\zeroes} m \to \forall n(\b(n)=0))$. This is also not possible, as can be seen by e.g.\ taking the sequence $\b$ that is everywhere 0 except in its $m+1$-th element. Hence, LPO leads to a contradiction.
\end{proof}

Note that we have now proven that $\neg(\forall\a(\forall n(\a(n)=0) \lor \exists n(\a(n)\neq 0)))$, but that we can never give a `strong' counterexample: that is, we cannot provide an $\a$ such that $\neg(\forall n(\a(n)=0) \lor \exists n(\a(n)\neq 0))$, because this would lead to a contradiction. We see that intuitionistic mathematics including BCP is truly incompatible with classical mathematics, as LPO is classically valid.

\section{Real numbers}\label{sec:int-reals}
\paragraph{Construction of the reals} A significant part of Brouwer's motivation to develop intuitionism lied in creating a more intuitive notion of the continuum. The continuum is also of great importance to physics and in this thesis it forms the main reason to consider using intuitionism for physics. In this section, we will translate the results about $\NN = \N^\N$ in the previous section to results about the set of intuitionistic real numbers $\R$, which can be seen as a subset of $\NN$.

The construction of the integers and rationals goes in much the same way as in classical mathematics; still, it is important to present the entire procedure because we have to stick to the intuitionistic philosophy along the way. In constructing the intuitionistic reals, we follow \textcite{veldmannotes}.\footnote{This approach is akin to Brouwer's own formulation; see \textcite[p69]{brouwer1992intuitionismus}. See also e.g.\ \textcite{driessenintuitionistic} for a publicly available reference exhibiting the approach in \textcite{veldmannotes}.} We will later briefly discuss some other approaches to constructing the reals.

\

Given $\N$, we can construct $\N^2$, and from there any power $\N^d$, by using a bijection $K:\N\xrightarrow{\sim}\N\times\N$ (which, of course, should be defined in a constructive way, like its inverse). In this way, every natural number $n$ encodes a pair of naturals $K(n)$, which we denote by $(n',n'')$. We often identify $n$ with $K(n)$, i.e. $n = (n',n'')$.

Next, the integers and rationals are defined in the usual way. The integers $\Z$ are seen as pairs of natural numbers $n = (n',n'')$ subject to the equivalence relation $=_\Z$ defined by $(n',n'') =_\Z (m',m'')$ iff $m' + n'' = m'' + n'$. Arithmetic operations such as addition $+_\Z$ and multiplication $\cdot_\Z$ are defined as usual, as are relations such as $<_\Z$ and $\leq_\Z$. Next, the rationals are pairs of integers $p = (p',p'')$, with $p'' \neq 0$, subject to the equivalence relation $(p',p'') =_\Q (m',m'')$ iff $p' \cdot_\Z q'' = p'' \cdot_\Z q'$. We again define $+_\Q$, $\cdot_\Q$, $<_\Q$ and $\leq_\Q$ in the expected way.

Now, we define the set of \defn{rational segments} $\S$ as consisting of all pairs $s = (s',s'')$ of rational numbers such that $s' \leq_\Q s''$. We define the following relations for $s,t\in\S$:
\begin{itemize}
    \item $s \leq_\S t$ iff $s' \leq_\Q t''$;
    \item $s <_\S t$ iff $s'' <_\Q t'$;
    \item $s \sqsubseteq_\S t$ iff $t'\leq_\Q s' \leq_\Q s'' \leq_\Q t''$ ($t$ \defn{covers} $s$);
    \item $s \approx_\S t$ iff $s\leq_\S t \land t\leq_\S s$ ($s$ and $t$ \defn{overlap}).
\end{itemize}

Note that $\S\cong\N$, so that every rational segment $s$ can be encoded with a natural number; this means that any given rational segment (as well as any rational number and integer) can be seen as a \emph{finished object}. Next, we define the real numbers as infinite sequences of rational segments:

\begin{definition}\label{def:intuitionistic-reals}
	A sequence $\a\in\S^\N$ is called a \defn{real number} if
    \begin{enumerate}
    	\item[(i)] $\a$ \emph{shrinks}: $\forall n (\a(n+1) \sqsubseteq_\S \a(n))$;
    	\item[(ii)] $\a$ \emph{dwindles}: $\forall m\exists n(\a''(n)-\a'(n)) \leq_\Q \inv{m}$,
    \end{enumerate}
    where $\a'(n) := (\a(n))'$ and $\a''(n) := (\a(n))''$, such that $\a(n) = (\a'(n),\a''(n))$. We define the following relations on the real numbers:
    \begin{itemize}
    	\item $\a =_\R \b$ iff $\forall n(\a(n)\approx_\S\b(n))$ ($\a$ and $\b$ \defn{coincide});
    	\item $\a \leq_\R \b$ iff $\forall n(\a(n) \leq_\S \b(n))$;
    	\item $\a <_\R \b$ iff $\exists n(\a(n) <_\S \b(n))$;
    	\item $\a \neq_\R \b$ iff $\neg(\a =_\R \b)$
    	\item $\a \apart_\R \b$ iff $\exists n \neg(\a\approx_\S\b)$ ($\a$ and $\b$ are \defn{apart}).
    \end{itemize}
    The set of real numbers $\R$ is the set of equivalence classes of real numbers relative to $=_\R$.
\end{definition}

Some remarks are in order.
\begin{itemize}
    \item All relations defined above respect the relation of coincidence: that is, if $\a=_\R\a'$ and $\b=_\R\b'$, then $\a\leq_\R\b \leftrightarrow \a'\leq_\R\b'$.
    \item The rationals can be embedded into the reals by identifying $q\in\Q$ with the sequence $\a: n\mapsto (q,q)$, and in this embedding all defined relations on the rationals translate to the corresponding relations on the reals.
    \item Recall that in the intuitionistic philosophy, infinite sequences, including ones defining real numbers, are seen as only forever unfinished projects rather than finished objects. This means that at any point in time, the mathematician knows the real number only up to the precision of a rational segment.
    \item However, because intuitionistic mathematics is constructive, it is assumed that $n$ in (ii) can be constructed explicitly from $m$; hence, we can pinpoint the location of a real number on the number line with \emph{arbitrary} precision. To emphasise this, we could also define a real number to be a pair of (`potentially') infinite sequences $\a\in\S^\N$, $\mu\in\N^\N$ such that (i) is satisfied (ii) is replaced by
    \begin{itemize}
        \item[(ii$'$)] $\forall m(\a''(\mu(m))-\a'(\mu(m))) \leq_\Q \inv{m}$.
    \end{itemize}
    \item Classically, $\neq_\R$ and $\apart_\R$ are equivalent. This is not true constructively: the fact that we can prove that $\forall n(\a(n)\approx_\S\b(n))$ is contradictory does not mean that we can constructively find an $n$ such that $\neg(\a\approx_\S\b)$. $\apart_\R$ is a \emph{positive} notion, while $\neq_\R$ is a weaker \emph{negative} notion.
    \item Equality $=_\R$ of two reals is in general undecidable (as is any other of the relations defined above), and hence the equivalence classes of $\R$ are undecidable. We will discuss an example below.
    \item The set of real numbers is not countable (contrary to what one would perhaps expect from constructive mathematics). It is even \emph{positively uncountable}, as the following theorem shows.
\end{itemize}

\begin{theorem}
	$\R$ is positively uncountable, that is: for any infinite sequence $\a_0,\a_1,\dots$ or real numbers, we can construct a real number $\a$ such that for all $i\geq 0$, $\a\apart_\R\a_i$.
\end{theorem}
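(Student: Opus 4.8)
The plan is to prove this by a constructive diagonalisation, building $\alpha$ as a shrinking, dwindling sequence of rational segments that is deliberately steered away from each $\alpha_i$ in turn. Concretely, I would construct a nested sequence of rational segments with $s_{i+1} \sqsubseteq_\S s_i$ for all $i$, starting from any nondegenerate segment such as $s_0 = (0,1)$, and then set $\alpha := (s_0, s_1, s_2, \dots)$. Because each $s_{i+1}$ is \emph{covered} by $s_i$, the sequence $\alpha$ shrinks; because the lengths $s_i'' - s_i'$ will be forced to zero geometrically, $\alpha$ dwindles; hence $\alpha$ is a genuine real number in the sense of Definition~\ref{def:intuitionistic-reals}.

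The heart of the construction is the step from $s_i$ to $s_{i+1}$, which must guarantee that $\alpha$ ends up apart from $\alpha_i$. Writing $L = s_i'' - s_i' > 0$, I would first use the fact that $\alpha_i$ \emph{dwindles} to find an index $n_i$ with $\alpha_i''(n_i) - \alpha_i'(n_i) < L/4$. Now compare the left quarter $q_L$ and right quarter $q_R$ of $s_i$: these two segments are separated by a gap of length $L/2$, which exceeds the length of $\alpha_i(n_i)$, so $\alpha_i(n_i)$ cannot meet both of them. Since all endpoints involved are rationals and the order on $\Q$ is decidable, I can \emph{effectively} decide which of $q_L, q_R$ is strictly separated from $\alpha_i(n_i)$ and take that one as $s_{i+1}$; this yields either $s_{i+1} <_\S \alpha_i(n_i)$ or $\alpha_i(n_i) <_\S s_{i+1}$. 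Note that this choice is constructive, which is exactly what intuitionistic validity demands, and that $s_{i+1}'' - s_{i+1}' = L/4$, so the lengths shrink like $4^{-i}$ and dwindling of $\alpha$ is automatic.

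It then remains to check $\alpha \apart_\R \alpha_i$ for each $i$, i.e.\ to exhibit an index $n$ with $\neg(\alpha(n) \approx_\S \alpha_i(n))$. I would take $n = \max(i+1, n_i)$. Since $\alpha$ shrinks, $\alpha(n) \sqsubseteq_\S s_{i+1}$, and since $\alpha_i$ shrinks, $\alpha_i(n) \sqsubseteq_\S \alpha_i(n_i)$; combining these coverings with the strict separation $s_{i+1} <_\S \alpha_i(n_i)$ (the other case being symmetric) gives $(\alpha(n))'' <_\Q (\alpha_i(n))'$, that is $\alpha(n) <_\S \alpha_i(n)$, which implies $\neg(\alpha(n) \approx_\S \alpha_i(n))$. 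Hence $\alpha \apart_\R \alpha_i$, as required.

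I expect the main obstacle to be bookkeeping rather than conceptual: one must keep the construction manifestly constructive, so that at every stage a decidable rational comparison (not an appeal to PEM) selects $s_{i+1}$; one must secure \emph{strict} separation $<_\S$ rather than mere non-overlap, so that apartness genuinely follows; and one must align the comparison at a single matched index $n$ for both sequences, using their shrinking (covering) property. The geometric shrink factor of $\inv4$ is chosen precisely so that a free quarter always exists and dwindling comes for free; any fixed ratio strictly below one would serve equally well.
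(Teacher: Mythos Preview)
Your proposal is correct and follows essentially the same Cantor-style diagonalisation as the paper: start from $(0,1)$, at stage $i$ use dwindling of $\alpha_i$ to locate a small enough approximant, then decidably pick a subsegment that is strictly separated from it. The only cosmetic difference is that the paper trisects the current segment (taking $L_3$ or $R_3$) while you take outer quarters; your verification of apartness at the matched index $n=\max(i+1,n_i)$ is in fact more explicit than the paper's, which leaves that step to the reader.
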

\begin{proof}
	We proceed along a Cantor-like argument. First of all, for a rational segment $s\in\S$, divide $s$ into three even parts and define $L_3(s)$ and $R_3(s)$ to be the left- and rightmost parts, respectively. (That is, $L_3(s) = (s', \frac23 s' + \frac13 s'')$ and $L_3(s) = (\frac13 s' + \frac23 s'', s'')$.) We define the elements of $\a$ one by one. First define $\a(0) = (0,1)$; next we recursively define the remaining $\a(n)$ such that $\a''(n)-\a'(n) = \inv n$ for $n\geq 1$. Suppose that $\a(n)$ has been defined; find $m$ such that $\a_n''(m) - \a_n'(m) < 3^{-(n+1)}$. Such an $m$ exists and can be found because of the axiom that $\a_n$ \emph{dwindles}. Note that $\a_n(m)$ overlaps at most one of $L_3(\a(n))$ and $R_3(\a(n))$. We want $\a$ to be apart from $\a_n$; therefore, define $\a(n+1) = L_3(\a(n))$ if $\a_n(m)$ does not overlap $L_3(\a(n))$, and $\a(n+1) = R_3(\a(n))$ otherwise. $\a$ constructed in this way satisfies $\a\apart_\R\a_i$ for every $i\geq 0$.
\end{proof}

Note how the above proof works: we construct $\a$ step by step, and at each step we need to know only finite initial segments of finitely many $\a_i$.

\paragraph{Other constructions of the reals}\label{sec:int-reals-other-constructions}
The above is only one possible way to construct the reals from the rationals. Another example is identical to the one based on Cauchy sequences usually used in classical mathematics. We say that a rational sequence $\a\in\Q^\N$ has the \emph{Cauchy property} if
\[ \exists\mu\in\NN \forall m,n > \mu(k)\,(|\a(n)-\a(m)| < 1/k). \]
The sequence $\mu$ is called the \emph{Cauchy modulus} of $\a$. Similarly to the fourth bullet above, to prove that a sequence $\a$ is Cauchy, also its Cauchy modulus should be constructed, and conversely, when proving properties of Cauchy sequences, also their Cauchy modulus can be explicitly used in proofs. This approach based on Cauchy sequences is almost identical to the one using rational segments.

In addition, one could try to define the reals on the basis of their decimal (or binary) expansions; that is, we define a real number as an infinite sequence of natural numbers below 10, together with a natural number indicating the position of the decimal point. However, such a definition is fruitless in constructive mathematics, as the obtained set of real numbers is, for example, not closed under addition. Suppose, namely, that we define a real $x = 0.999\ldots$ which has as its $n$-th digits a $9$ if $\g(n) = 0$ and a 0 if $\g(n) = 1$, where $\g\in\NN$ is defined as in~\eqref{eq:goldbach}. Moreover, define $y = 0.000\ldots$, which has as its $n$-th digit the value $\g(n+1)$. Then $x+y$ cannot be constructively defined through its decimal expansion, for its unit digit depends on the validity of the Goldbach conjecture, for which there exists no known proof. In other words, the addition operation is not a total real function $\R\times\R\to\R$. (Cf.\ the discussion below Definition~\ref{def:some-representations}.)

\paragraph{Undecidability of relations on the reals}
In section~\ref{sec:NN} we defined a sequence $\g\in\NN$ such that the statement $\forall n(\g(n) = 0) \lor \neg\forall n(\g(n) = 0)$ is reckless. We can turn this into a statement about real numbers by defining the number $\rho^\g$ as follows:
\begin{equation}\label{eq:rho-gamma}
	\rho^\g(n) = \begin{cases}
		(-\inv n, \inv n) & \text{ if } \forall k\leq n (\g(k) = 0); \\
		(\inv k_0,\inv k_0) & \text{ if } \exists k\leq n (\g(k) \neq 0), \text{ where $k_0$ is the least $k$ such that $\g(k) = 0$}.
	\end{cases}
\end{equation}
We see that $\forall n(\g(n)=0) \leftrightarrow \rho^\g(n) \leq 0$, and $\exists n(\g(n) \neq 0) \leftrightarrow \rho^\g > 0$.\footnote{From now on, we will omit the $\R$-subscripts on relations and operations. $0 \equiv 0_\R$ is the sequence formed by repeating the rational segment $(0_\Q, 0_\Q)$.} Therefore, $\rho^\gamma$ is a well-defined real number for which the statement
\[\rho^\g \leq 0 \lor \rho^\g > 0\]
does not hold (i.e.\ it is a reckless statement). We say that $\rho^\g$ \defn{floats around} 0.

Defining a real number $\rho^\a$ for every sequence $\a\in\NN$ in the way of Equation~\eqref{eq:rho-gamma}, we see that the statement
\begin{equation}\label{eq:lpo-reals}
    \forall \rho\in\R\,(\rho \leq 0 \lor \rho > 0).
\end{equation}
implies LPO, and is therefore strictly false under BCP. The statement can in fact be shown to be equivalent to LPO \cite{mandelkern1989}. In particular, we see that \emph{equality on the reals is undecidable}: for $\rho,\s\in\R$, we can in general not decide whether $\rho=_\R\s$ or $\rho\neq_\R\s$.

The \defn{lesser limited principle of omniscience} (LLPO) is a weaker variant of LPO which is also reckless in intuitionistic mathematics but a tautology in classical mathematics. Similarly to LPO, LLPO is false under BCP, and it is equivalent to the statement\footnote{In the study of constructive mathematics it is often insightful to characterise reckless statements by their equivalence to a small set of reckless principles, such as PEM, LPO, LLPO and \emph{Markov’s principle} MP. The programme of exploring these equivalences is called \emph{constructive reverse mathematics}.}
\begin{equation}\label{eq:llpo-reals}
    \forall\rho\in\R\,(\rho\leq 0\lor\rho\geq 0).
\end{equation}

\paragraph{Spreads}\label{sec:spreads}
Let us briefly introduce the general intuitionistic notion of \emph{spread}, of which the real numbers can be seen as a special case. We first need some notation: denote by $\N^*$ the set of all finite sequences $s$ of natural numbers, by $\langle\,\rangle$ the sequence of length 0 and by $\langle m_1, m_2,\ldots,m_\ell\rangle$ the sequence of length $\ell$ containing the elements $m_1,\ldots,m_\ell$. The concatenation operation is denoted by $*$: e.g.\ $\langle a,b\rangle * \langle c\rangle = \langle a,b,c\rangle$. Finally, remember that for an infinite sequence $\a \in\NN$, $\o\a n = \langle \a(0),\ldots,\a(n-1)\rangle$ is the initial segment of $\a$ of length $n$.

A \defn{spread law} is an element $\s\in\NN$ for which
\begin{enumerate}[(i)]
    \item $\s(\langle\,\rangle) = 0$ and
    \item for all $s\in\N^*$, $\s(s) = 0$ if and only if $\exists n(\s(s * \langle n\rangle) = 0)$.
\end{enumerate}
If $\s(s) = 0$ for $s\in\N^*$, we say that the finite sequence $s$ is \defn{admitted by} the spread law $\s$. We call an infinite sequence $\a\in\NN$ \defn{admitted by} $\s$ if all its initial segments are admitted by $\s$, i.e.\ $\forall n(\s(\o\a n) = 0)$. The set $\mathcal F_\s\subseteq\NN$ of all infinite sequences admitted by $\s$ is called the \defn{spread} determined by $\s$. Hence, a spread is essentially a tree which has at least one branch (by (i)) and no leaves (by (ii)).

The real number field $\R$ can be seen as a spread if we replace (ii) in Definition~\ref{def:intuitionistic-reals} by the assumption that a real $\a$ dwindles by at least some predefined rate, e.g.
\begin{itemize}[(ii$''$)]
    \item $\forall n (\a''(n)-\a'(n)) \leq \inv n$.
\end{itemize}
(Namely, define the spread law $\s$ to admit a sequence $s\in\N^*$ iff its elements encode a finite sequence of rational segments that shrink and dwindle by the predefined rate.) Modulo the equivalence relation $=_\R$, this yields the same set $\R$ as Definition~\ref{def:intuitionistic-reals}.

\section{Real functions}\label{sec:int-real-functions}
In intuitionism, a \emph{real function} $f:\R\to\R$ is a constructive method to define a real number $f(\a)$ given a real number $\a$, and for which
\[ \forall \rho,\sigma\in\R\,(\rho =_\R \s \to f(\rho) =_\R f(\s)) \]
holds.

\paragraph{Analytic theorems and approximate variants}
As you might imagine, the invalidity of \eqref{eq:lpo-reals} has some profound impacts on the general validity of theorems in classical analysis. A notorious example of a theorem that fails to hold intuitionistically is the \emph{intermediate value theorem}, which states:
\begin{quote}
	For any continuous function $f$ from $[0,1]$ to $\R$ with $f(0) = 0$ and $f(1) = 1$, we have $\forall \rho\in[0,1]\exists \s\in [0,1] (f(\s) = \rho).$
\end{quote}
\begin{proposition}
    The intermediate value theorem is reckless; in particular, it implies $\forall \rho\in\R(\rho \leq 0 \lor \rho \geq 0)$, which is equivalent to LLPO and therefore even false under BCP.
\end{proposition}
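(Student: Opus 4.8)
The plan is to show that the intermediate value theorem (IVT) implies $\forall\rho\in\R\,(\rho\leq 0\lor\rho\geq 0)$, which is \eqref{eq:llpo-reals}; since this statement is equivalent to LLPO and hence false under BCP, the IVT is reckless. So I would fix an arbitrary $\rho\in\R$ whose sign is to be decided and first replace it by a conveniently bounded real carrying the same sign information: set $a=\max(-\tfrac14,\min(\tfrac14,\rho))$, so that $a\leq 0\leftrightarrow\rho\leq 0$ and $a\geq 0\leftrightarrow\rho\geq 0$, while $|a|<\tfrac12$. Here $\max$ and $\min$ against rationals are constructive and respect $=_\R$, so this reduction is legitimate.

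Next I would build from $a$ a continuous function $f:[0,1]\to\R$ with $f(0)=0$ and $f(1)=1$ exhibiting a plateau at height $\tfrac12+a$. Concretely, take $f$ to be piecewise linear through the nodes
\[ (0,0),\quad (\tfrac13,\tfrac12+a),\quad (\tfrac23,\tfrac12+a),\quad (1,1). \]
Because $f$ is assembled from linear interpolation (equivalently from $\max$ and $\min$ of affine functions), it is a bona fide continuous real function irrespective of the undecided sign of $a$, and $|a|<\tfrac12$ guarantees rising end segments with the plateau strictly inside $(0,1)$. Now apply the IVT with target value $\tfrac12\in[0,1]$ to obtain $\sigma\in[0,1]$ with $f(\sigma)=\tfrac12$. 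The purpose of the plateau is to separate the possible locations of $\sigma$: if $a>0$ the plateau lies strictly above $\tfrac12$, so $\tfrac12$ is attained only on the first rising segment and $\sigma<\tfrac13$; if $a<0$ the plateau lies strictly below $\tfrac12$ and $\sigma>\tfrac23$; and if $a=0$ the plateau sits exactly at $\tfrac12$ and $\sigma\in[\tfrac13,\tfrac23]$.

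Finally I would extract the decision constructively. The disjunction $\sigma<\tfrac23\lor\sigma>\tfrac13$ is intuitionistically valid — unlike $\sigma\leq\tfrac12\lor\sigma\geq\tfrac12$ — precisely because the bounds $\tfrac13<\tfrac23$ overlap: since $\sigma$ dwindles we may find a rational segment for $\sigma$ of width $<\tfrac13$, and comparing its endpoints against $\tfrac13$ and $\tfrac23$ (a decidable comparison of rationals) settles the disjunction. If $\sigma<\tfrac23$ then $a<0$ is impossible, since it would force $\sigma>\tfrac23$; hence $\neg(a<0)$, i.e.\ $a\geq 0$. If instead $\sigma>\tfrac13$ then $a>0$ is impossible, so $a\leq 0$. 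In either case $a\leq 0\lor a\geq 0$, whence $\rho\leq 0\lor\rho\geq 0$.

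The main obstacle — and the conceptual heart of the argument — is this last extraction step: one must resist reading off the sign of $a$ by testing $\sigma$ against $\tfrac12$, which would itself be an instance of LLPO. The plateau is exactly what creates a gap between the two admissible positions of $\sigma$ (left of $\tfrac13$ versus right of $\tfrac23$), so that the genuinely constructive, overlapping comparison $\sigma<\tfrac23\lor\sigma>\tfrac13$ suffices. I would also invoke the standard fact that $a\leq b\leftrightarrow\neg(b<a)$ for intuitionistic reals in order to pass from $\neg(a<0)$ to $a\geq 0$, and from $\neg(a>0)$ to $a\leq 0$.
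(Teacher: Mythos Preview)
Your proof is correct and follows essentially the same approach as the paper: the same piecewise-linear function with a plateau at height $\tfrac12+\rho$ (respectively $\tfrac12+a$), application of IVT at the target value $\tfrac12$, and the overlapping comparison $\sigma<\tfrac23\lor\sigma>\tfrac13$ to recover the sign. Your preliminary clipping of $\rho$ to $a\in[-\tfrac14,\tfrac14]$ is a tidier touch the paper omits---without it the claim that $f$ is non-decreasing (which the paper invokes) can fail for large $|\rho|$, though the conclusion still goes through; likewise your explicit appeal to $\neg(a<0)\to a\geq 0$ makes precise what the paper leaves implicit.
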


\begin{wrapfigure}{o}[1cm]{0pt}
    \centering
    \fbox{\begin{tikzpicture}[scale=0.7,x=1cm, y=1cm]%
        \draw[->] (0,0) node[anchor=east] {\footnotesize 0} -- (0,3) node[anchor=east] {\footnotesize 1};%
        \draw[->] (0,0) node[anchor=north] {\footnotesize 0} -- (3,0) node[anchor=north] {\footnotesize 1};%
        \draw[dotted] (0,1.5) node[anchor=east] {\footnotesize $\dfrac12$} -- (3,1.5);%
        \draw (0,0) -- (1,1.6) -- (2,1.6) -- (3,3);%
        \node[anchor=center] at (2.3,2.4) {\footnotesize $f$};%
        \node[anchor=center, align=left] at (1.5,-1.2) {\footnotesize The function $f$\\\footnotesize for small positive $\rho$.};
    \end{tikzpicture}%
    }
\end{wrapfigure}

\noindent\textit{\proofname}. Assume the intermediate value theorem and let $\rho\in\R$ be given. Define a function $f:[0,1]\to\R$ such that $f(0)=0$, $f(\frac{1}{3}) = f(\frac{2}{3}) = \inv2 + \rho$, $f(1) = 1$ and $f$ is linear between these points. This is a well-defined intuitionistic function since the function value $f(\s)$ can be approximated arbitrarily closely by approximating $\s$ and $\rho$ arbitrarily closely, for $\s\in\R$. Applying the intermediate value theorem, find $\s$ such that $f(\s) = \inv2$. By finding $m$ such that $0 \leq \s''(m) - \s'(m) < \inv3$, we can decide that either $\s < \frac23$ or $\s < \frac13$. If $\s < \frac23$, then $\rho \geq 0$ because $f(\frac23) = \inv2 + \rho$ and $f$ is non-decreasing (see the figure). Similarly, if $\s < \frac13$ then $\rho \leq 0$. We conclude that $\rho \leq 0 \lor \rho\geq 0$ for any $\rho\in\R$. For the proof that $\forall \rho\in\R\,(\rho \leq 0 \lor \rho \geq 0)$ is equivalent to LLPO, see \textcite[Theorem~5]{mandelkern1989}.\hfill\qed

\

Many other classically valid analytical theorems are intuitionistically invalid, such as the Bolzano-Weierstrass theorem (equivalent to LPO) and, ironically, Brouwer’s fixed point theorem (equivalent to LLPO) \cite{mandelkern1989}.\footnote{As Brouwer himself well realised. Brouwer has made some significant contributions to classical topology; he understood that to be taken seriously in his unusual approach to mathematics, he first had to prove himself an able mathematician.} This does not mean all of analysis is lost: several `approximate’ variants of these principles hold constructively. The \emph{approximate intermediate value theorem}:
\begin{quote}
	For any continuous function $f$ from $[0,1]$ to $\R$ with $f(0) = 0$ and $f(1) = 1$, we have $\forall\rho\in[0,1] \forall n\exists\s\in [0,1] (|f(\s) - f(\rho)| < \inv n)$,
\end{quote}
for example, is classically equivalent to the intermediate value theorem but is valid intuitionistically.

\paragraph{Continuity of real functions}
From the way a real function is defined in intuitionism, it already appears that every real function is continuous. We can make this explicit by using a version of Brouwer's continuity principle applied to real numbers, which we denote by BCP$_\R$. It states that for every real relation $R\subseteq \R\times\N$, we have\footnote{BCP$_\R$ can be deduced from intuitive argument or from a generalised version of BCP which holds for spreads.}
\begin{equation}\label{eq:bcp-r}
    \forall\rho\in\R\exists n (\rho R n) \longrightarrow  \forall\rho\in\R\exists n\exists m \forall\s\in\R \Big(\left|\s-\rho\right| < \inv{m} \to \s R n\Big).
\end{equation}
In words: if we are able to find and $n$ such that $\rho R n$ for \emph{any} $n$, then we must be able to do so on the basis of some finite approximation of $\rho$.

BCP$_\R$ can be used to prove:
\begin{theorem}
    Every real function $f:\R\to\R$ is continuous.
\end{theorem}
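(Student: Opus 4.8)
The plan is to feed $f$ into BCP$_\R$ \eqref{eq:bcp-r} through a carefully chosen relation that records a rational approximation of the \emph{output} $f(\rho)$, and then to read off a modulus of continuity in the \emph{input} from the conclusion of the principle. Here continuity is understood pointwise: I must show that for every $\rho\in\R$ and every $k\in\N$ there is $m\in\N$ such that $|\sigma-\rho|<\inv m$ implies $|f(\sigma)-f(\rho)|<\inv k$ for all $\sigma\in\R$. So I would fix an arbitrary $\rho\in\R$ and an arbitrary $k\in\N$ and produce such an $m$; since both are arbitrary, this establishes the theorem.

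First I would define a relation $R\subseteq\R\times\N$ by declaring $\sigma\mathrel{R}n$ to hold if and only if $n$ codes (via $\Q\cong\N$) a rational $q_n$ with $|f(\sigma)-q_n|<\inv{2k}$. Two things must be verified so that BCP$_\R$ applies. First, $R$ respects coincidence: if $\sigma=_\R\sigma'$ then $f(\sigma)=_\R f(\sigma')$ because $f$ is a real function, whence $\sigma\mathrel{R}n\leftrightarrow\sigma'\mathrel{R}n$, so $R$ is a genuine real relation. Second, its domain is all of $\R$: given any $\sigma$, the real number $f(\sigma)$ \emph{dwindles}, so I can construct a rational $q_n$ lying strictly within $\inv{2k}$ of $f(\sigma)$, giving $\sigma\mathrel{R}n$. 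Together these yield $\forall\sigma\in\R\,\exists n(\sigma\mathrel{R}n)$, which is exactly the antecedent of \eqref{eq:bcp-r}.

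Next I would apply BCP$_\R$ and instantiate its conclusion at the chosen $\rho$, obtaining $n,m\in\N$ with $\forall\sigma\in\R(|\sigma-\rho|<\inv m\to\sigma\mathrel{R}n)$. Unpacking, there is a single fixed rational $q_n$ with $|f(\sigma)-q_n|<\inv{2k}$ for every $\sigma$ satisfying $|\sigma-\rho|<\inv m$. Taking $\sigma=\rho$ (legitimate since $|\rho-\rho|=0<\inv m$) gives $|f(\rho)-q_n|<\inv{2k}$. Then for any $\sigma$ with $|\sigma-\rho|<\inv m$ the triangle inequality gives $|f(\sigma)-f(\rho)|\le|f(\sigma)-q_n|+|q_n-f(\rho)|<\inv{2k}+\inv{2k}=\inv k$, which is precisely the modulus statement required.

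The routine parts are the triangle-inequality estimate and the extraction of $q_n$ from the dwindling property. The step requiring the most care is checking that $R$ is an admissible relation for BCP$_\R$: one must confirm both its extensionality under $=_\R$ and that the defining strict inequality $|f(\sigma)-q_n|<\inv{2k}$ is a constructively meaningful positive statement, witnessed by an index at which the rational segments of $f(\sigma)$ and of $q_n$ are suitably apart. I would also flag that this argument delivers only pointwise continuity at each $\rho$; upgrading to uniform continuity on a closed interval would additionally invoke the fan theorem rather than BCP$_\R$ alone, so I would not attempt that here.
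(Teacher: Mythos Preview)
Your proof is correct and follows exactly the route the paper indicates: the paper does not spell out a proof but merely states that BCP$_\R$ can be used, and your argument---defining a relation that pins down a rational $\tfrac{1}{2k}$-approximation to $f(\sigma)$, applying BCP$_\R$ to obtain a single $q_n$ valid on a $\tfrac{1}{m}$-neighbourhood of $\rho$, and finishing with the triangle inequality---is the standard way to carry this out.
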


\section{The role of time in intuitionism}\label{sec:intuitionism-time}
We have already seen multiple ways in which the notion of time enters into intuitionism. In this section we discuss these in more detail. The importance of time clearly resounds in Brouwer’s two ‘acts of intuitionism’, formulated in his Cambridge lectures in the 1940s. The first act is:
\begin{quote}
	\textsc{First act of intuitionism}\ \ \textit{Completely separating mathematics from mathematical language and hence from the phenomena of language described by theoretical logic, recognizing that intuitionistic mathematics is an essentially languageless activity of the mind having its origin in the perception of a move of time. This perception of a move of time may be described as the falling apart of a life moment into two distinct things, one of which gives way to the other, but is retained by memory. If the twoity thus born is divested of all quality, it passes into the empty form of the common substratum of all twoities. And it is this common substratum, this empty form, which is the basic intuition of mathematics.} \cite{brouwer1981}
\end{quote}
Hence, the notion of ‘time’ central to intuitionism refers to the time experienced by the mathematician, and can also be characterised as `intuitive’ or `subjective’ time. The first act gives rise to the natural numbers (section~\ref{sec:NN});in addition, the realisation that the mathematician needs time to prove theorems and do calculations already implies that the principle of the excluded middle does not hold, even with respect to lawlike sequences such as the one given in \eqref{eq:goldbach} \cite{sep-intuitionism}.

The second act of intuitionism expresses the acceptance of (free) choice sequences, and gives rise to the existence of the continuum:
\begin{quote}
	\textsc{Second act of intuitionism}\ \ \textit{Admitting two ways of creating new mathematical entities: firstly in the shape of more or less freely proceeding infinite sequences of mathematical entities previously acquired} (so that, for example, infinite decimal fractions having neither exact values, nor any guarantee of ever getting exact values are admitted); \textit{secondly in the shape of mathematical species, i.e.\ properties supposable for mathematical entities previously acquired, satisfying the condition that if they hold for a certain mathematical entity, they also hold for all mathematical entities which have been defined to be 'equal' to it, definitions of equality having to satisfy the conditions of symmetry, reflexivity.} \cite{brouwer1981}
\end{quote}
The ‘step-by-step’ generation of choice sequences, and the fact that they introduce a form of `indeterminism’ to intuitionism in the sense that their elements are not necessarily fixed in advance by a law, makes the importance of time in intuitionism even more explicit. However, choice sequences are not necessary to realise that PEM does not hold intuitionistically (as can also be seen from the fact that no other form of constructive mathematics accepts the existence of choice sequences).

\

Brouwer, at least in his early writings, indicates that the `intuitive time’ central to intuitionism should be clearly distinguished from `scientific time’. He considers the first to be the only \emph{aprioristic} element of science (the only ``necessary prerequisite for the possibility of science’’), while the second only appears \emph{a posteriori} from experience and can be reduced to a one-dimensional parameter serving to `catalogue observed phenomena' \cite{brouwer1907}. On the other hand, it is tempting to compare the distinction between lawlike and lawless (i.e.\ free choice sequences) sequences to deterministic and indeterministic processes in physics, which are solely related to Brouwer’s `scientific time’.\footnote{Unless one associates indeterminism to free will and the psychological arrow of time, in which `intuitive time’ might come into play.} However, in Brouwer’s intuitionism, both lawlike sequences and choice sequences are subject to the `intuitive time’,\footnote{Many intuitionists maintain, moreover, that the collection of lawlike sequences is not clearly separable from the collection of (free) choice sequences, meaning that these are no `well-circumscribed’ sets and that it is only meaningful to regard the continuum (or Baire space $\N$) as a whole. We will return to this in section~\ref{sec:lawless-sequences-mathematics}} which suggests that this comparison might not be justified.\footnote{Moreover, Oscar Becker, a contemporary of Brouwer, has compared the distinction between lawlike sequences and choice sequences in intuitionism to Heidegger’s distinction between `natural time’ and `historical time’, where the former is `time as measured’ and the latter can be seen as the cultural phenomenon of passage from one generation to another \cite{roubach2005being}. It is unclear whether physical indeterminism plays a role in this comparison, but because `natural time’ presumably coincides with Brouwer’s `scientific time’, also this comparison can be questioned from an intuitionistic perspective.}

\

In the previous sections, we have seen that some of Brouwer’s arguments for intuitionistic principles, such as his continuity principle~\eqref{eq:bcp}, are based on reasoning \emph{about} the mind of the mathematician; here an \emph{idealised} mind is meant, for Brouwer considered intuitionistic mathematics to be independent of psychology \cite{sep-intuitionism}. Brouwer called this idealised mind the \emph{creating subject}. A formal theory of the creating subject was designed by Kreisel \cite{kreisel1967informal} (and later Troelstra) in order to formalise and analyse Brouwer’s creating subject arguments. It concretises the temporal aspect of intuitionism by introducing the notation $\square_n P$, which should be interpreted as saying that the creating subject has a proof of or `experiences the truth of’ statement $P$ at time $n$, together with some axioms. $n$ ranges over a discrete (countable) set, representing the `stages’ of the creating subject’s arguments \cite{vanAtten2018bks}.

The creating subject also enters into intuitionism via arguments based on sequences given by the \emph{Brouwer-Kripke schema} (BKS), proposed by Brouwer and formalised by Saul Kripke \cite{vanAtten2018bks}. The schema asserts that for every proposition $P$, one can construct a sequence $\a\in\NN$ such that $\a(n) = 0$ if at stage $n$ (or `day $n$’), the creating subject has not proven $P$, and $\a(n) = 1$ if it has.\footnote{It is an axiom of the theory of the creating subject that $\square_n(P)$ is decidable. Hence, BKS can be expressed as $\exists \a(\a(n) \neq 0 \leftrightarrow \square_n P)$, or, not using the creating subject formalism, $\exists\a(\exists n(\a(n)\neq 0) \leftrightarrow P)$, for all $P$ \cite{sep-intuitionism,vanAtten2018bks}.} Such a sequence is perhaps the clearest example showing that an infinite sequence develops in time and is never a static, finished object. Furthermore, BKS sequences are sometimes seen as lawlike, which casts further doubts on the comparison between lawlike sequences and determinism in physics alluded to before.

\section{Lawless sequences}\label{sec:lawless-sequences-mathematics}
In section~\ref{sec:choice-sequences-continuity} we already discussed the notion of \emph{lawlike} sequences and \emph{choice sequences}, the latter being sequences of which the elements are revealed one by one, not necessarily according to a law. What is often presented as perhaps the simplest example of a choice sequence is a \emph{lawless sequence} or \emph{(absolutely) free choice sequence} of natural numbers. This kind of sequences was first considered by Georg Kreisel.\footnote{The term \emph{lawless} was coined by Gödel~\cite{kreisel1968}.} In his words, lawless sequences
\begin{quote}
	are those where the simplest kind of restriction on restrictions is made, namely some finite initial segment of values is prescribed, and, beyond this, no restriction is to be made. \cite{kreisel1968}
\end{quote}
However, trying to more precisely characterise the set of lawless sequences has turned out far from simple, and attempts to do this are surrounded by controversy. Needless to say, there is a strong connection between lawless sequences and mathematical notions of randomness as well as indeterminism in physics. The latter connection is be discussed in section~\ref{sec:lawless-indeterminism}. Lawless sequences in intuitionism are generally associated with Kreisel and Troelstra’s formalisation; therefore, we briefly discuss their approach, even though it is considered controversial in some respects. Before doing this, however, let’s take a closer look at lawlike ones.

\subsection{Lawlike sequences}\label{sec:lawlike-sequences}
Lawlike sequences are roughly those sequences that are ‘completely determined [and] fully described’ \cite{TrvD} or ‘fixed by a recipe’. What exactly a ‘law’ or ‘recipe’ means is a matter of interpretation.

The definition of ‘law’ that appeals to most mathematicians is a Turing-computable (recursive) law. In this interpretation, ‘lawlike’ is seen as synonymous to ‘recursive’, i.e.~the lawlike sequences are precisely the recursive sequences. This identification can be expressed by an axiom called \emph{Church’s thesis}.\footnote{Not to be confused with the \emph{Church-Turing thesis}, nor with the \emph{physical Church-Turing thesis}, which are discussed in Appendix~\ref{sec:church-turing-computability}. For a formal expression of Church’s thesis, see \textcite[§4.3.1]{TrvD}.}
Lawlikeness can also be taken to refer to a broader concept, however. An example of a class of sequences that are not recursive, but are still sometimes said to be lawlike, are the sequences given by the Brouwer-Kripke schema (BKS), introduced in the previous section. While the elements of a BKS sequence are indeed ‘fixed by a recipe’, they are less ‘predetermined’ than recursive sequences.

Finally, it is interesting to note that while under most notions of lawlikeness, the collection of lawlike sequences is countable from the classical point of view, no lawlike enumeration exists.\footnote{This can be proven classically if `lawlike' is taken to mean `recursive'; \textcite{moschovakis1993} argues that this holds for any notion of lawlikeness.} Indeed, to Brouwer, the collection of lawlike sequences was `\emph{countably unfinished}' (`\emph{aftelbaar onaf}')
; however, he argued that the collection was still too small to build up the continuum; in particular, to fully justify his continuity principle \eqref{eq:bcp}, he needed to introduce the notion of \emph{choice sequence}. Georg Kreisel~\cite{kreisel1968} later derived from this the notion of \emph{lawless sequence}.

\subsection{Intensional lawlessness; Kreisel and Troelstra’s formalisation}\label{sec:lawless-kreisel-troelstra}
Kreisel \cite{kreisel1968}, and subsequently Troelstra \cite{troelstra77}, developed an intuitionistic theory of lawless sequences which is based on an \defn{intensional} concept of lawlessness: this means that not the \emph{elements} of the sequences themselves are what characterise lawlessness, but the \emph{process} by which the elements are determined. Properties of sets that only depend on their elements, on the other hand, are called \defn{extensional} properties.\footnote{Cf. the \emph{extensionality axiom} of ZF, which states that two sets are equal iff they have the same elements.} \textcite[§4.6.2]{TrvD} explain the concept of lawless sequence (or more precisely, proto-lawless sequence, see below) perhaps somewhat more clearly than Kreisel:
\begin{quote}
	A lawless sequence is to be viewed as a process (say $\a$) of choosing values $\a0,\a1,\ldots\in\N$ such that at any stage in the generation of $\a$ we know only finitely many values of $\a$. If we generate a lawless sequence, then we have \emph{a~priori} decided not to make general restrictions on future values at any stage.
\end{quote}
Thus \textcite{TrvD} distinguish lawless sequences from what they call \emph{hesitant sequences}, which are processes of generating values ``such that at any stage we either decide that henceforth we are going to conform to a law in determining future values, or, if we have not already decided to conform to a law at an earlier stage, we freely choose a new value [of the sequence]’’.

Kreisel and Troelstra proposed four axioms LS1--LS4 for lawless sequences~\cite{troelstra77}. I will only discuss the first three axioms; especially the second is instructive because it illustrates the intensional nature of Kreisel and Troelstra’s notion.

The first axiom, the \emph{density axiom} LS1, states that
\begin{equation}\label{eq:ls1}
	\forall s\in\N^*\exists\a(s \sqsubset \a),
\end{equation}
where $\a$ ranges over the lawless sequences, and $s \sqsubset \a$ means that $s$ is an initial segment of $\a$. In words, the axiom says that for each possible finite sequence of natural numbers, we can construct a lawless sequence which has that finite sequence as an initial segment. (This makes the sequence slightly less ‘lawless’; for an extended discussion on this axiom, see~\cite[§12.2.2]{TrvD}. Troelstra calls lawless sequences for which no initial segment is specified beforehand \emph{proto-lawless}.)

To justify the second axiom, Troelstra considers the relation $\equiv$, which one could say represents \emph{intensional equality} or \emph{identity} of sequences: if $\a,\b\in\NN$, then $\a \equiv \b$ means that $\a$ and $\b$ refer to the \emph{same} process of generating values. Obviously, we have
\[ \a\equiv\b \lor \a\not\equiv\b, \text{\quad as well as \quad} \a\equiv\b\to\a=\b; \]
For lawless $\a$ and $\b$, Troelstra argues that we also have
\[ \a\not\equiv\b \to \a\neq\b, \]
since if $\a\not\equiv\b$, it would be impossible to have a proof of $\a=\b$, as at each point in time only finitely many values of both $\a$ and $\b$ can be known.\footnote{This argument is surrounded by controversy, also within intuitionistic mathematics.\footnote{Wim Veldman (personal communication).} Having a proof of $\a=\b$ is contradictory, but does this mean that the statement $\a=\b$ is itself contradictory?} This suggests the second axiom:
\begin{equation}\label{eq:ls2}
	\a=\b \lor \a\neq\b,
\end{equation}
where $\a$ and $\b$ are lawless sequences (obviously this does not hold for all sequences; this would imply LPO).

The third axiom, the \emph{principle of open data}, is like Brouwer’s continuity principle but with the $\forall \a$-quantifier removed, since Kreisel and Troelstra, unlike Brouwer, speak of individual lawless sequences. Let $A$ be a unary predicate and $\a$ a lawless sequence, then the simplest version of LS3 states that
\[ A(\a) \to \exists n\forall \b(\o\a n = \o\b n \to A(\b)). \]

From LS1 and LS3, one can in particular conclude that for any lawless $\a$ and lawlike $a$, we have
\begin{equation}\label{eq:lawless-neq-lawlike}
    \neg(\a = a).
\end{equation}
(Namely, assuming $\a = a$, use LS3 to find $n$ such that $\forall \b(\o\a n = \o\b n \to \b = a))$, and using LS1, find $\gamma$ such that $\o\gamma n = \o\a n$ but $\gamma(n) \neq \a(n)$. Then $\gamma\neq a$ but this contradicts LS3.)

Being able to prove this contradiction seems to conflict with the fact that at each moment in time only a finite initial segment of $\a$ is known, but it follows from the intensional nature of Kreisel and Troelstra’s definition of lawless sequences.

\

A disconcerting consequence of the intensional nature of Kreisel and Troelstra’s lawlessness is that a sequence $\g$ defined by pointwise addition of two sequences $\a$ and $\b$ cannot be lawless, even if $\a$ and $\b$ are both lawless; for $\g$ is \emph{completely described} by the law $\forall n(\g(n)=\a(n)+\b(n))$. On the other hand, one can take $\g$ and $\a$ to be lawless, but then $\b$ cannot be lawless. In fact, to make this rigorous, one can prove from the axioms LS1 and LS3 that the identity operation is the only lawlike operation under which the universe of lawless sequences is closed~\cite[§12.2.9]{TrvD}.

Kreisel and Troelstra use their four axioms LS1–LS4 to prove an \emph{elimination theorem}, which states that all formulas with quantifiers ranging over lawless sequences are equivalent to formulas not involving lawless sequences.\footnote{Granted that one accepts the Kreisel and Troelstra's formulation of lawless sequences; others argue that one cannot quantify over the lawless sequences in the first place because that set is not well-circumscribed \cite{swart1992spreads}.} In this way, they conclude, lawless sequences can be seen as nothing more than a ‘figure of speech’ but are not necessary to prove results.\footnote{However, there are several reasons to believe that the elimination theorem does \emph{not} imply that lawless sequences are `unnecessary’ in intuitionistic mathematics. See e.g.~\textcite[p41]{vanAtten2006brouwer}.}

It is useful to mention that one can also consider `lawless’ elements of a spread $\mathcal F_\s$; in this case, the choice is restricted to elements that are admitted by the spread law $\s$, but is otherwise free. In this way we can speak of, e.g., lawless binary sequences and lawless reals.

\subsection{Separating lawlike from lawless}\label{sec:separating-lawlike-lawless}
Although Kreisel and Troelstra's approach is probably the most well-known formalisation of lawless sequences, it is surrounded by controversy, also among intuitionists. Its dependence on intensional considerations causes it to be `language-dependent and [have] no absolute mathematical meaning' \cite{swart1992spreads}. This is particularly clear from the example involving the sequences $\g,\a$ and $\b$ just discussed. In the words of \textcite{gielen1981continuum}:
\begin{quote}
    We admit that an individual sequence can be (more or less) lawlike, or determinate, as we will put it later on, but we do not think it meaningful to speak of `the set of all lawlike sequences', or, for that matter, of `the set of all lawless sequences'. (The `denumerably unfinished totality of all lawlike sequences', to which Brouwer sometimes referred, is \textelp{} unable to do justice to the fullness of the geometric intuition of the continuum. In Brouwer's own view the continuum as a whole is a far clearer concept.)
\end{quote}
The sets of lawlike and lawless sequences are also sometimes called `not \emph{well-circumscribed}', and it is argued that one can only quantify over well-circumscribed sets \cite{swart1992spreads}, rendering much of Kreisel and Troelstra's formalisation invalid. Their philosophy of lawless sequences remains, however, an interesting phenomenological point of discussion.

\subsection{Extensional lawless sequences}\label{sec:lawless-extensional}
Also extensional definitions of lawless sequences have been proposed, viz.~sequences that are lawless by virtue of their elements only and not the process by which the elements are generated. One of these definitions has been studied by Joan Moschovakis, who describes lawless sequences as those that ``evade description by any fixed law’’, aiming to solve the problem that Kreisel and Troelstra's lawless sequences are not well-circumscribed \cite{moschovakis1993, moschovakis2016}. It has proven difficult to analyse these sequences intuitionistically. This stems from the fact that extensional lawlessness of a sequence cannot be decided on the basis of an initial segment only. Lawlessness can also be expressed through classical notions such as \emph{1-randomness} (also referred to as \emph{Martin-Löf randomness}), although this yields different results \cite{moschovakis2016}, and notions related to \emph{generic sets} \cite{moschovakis1993}. In this thesis, we mainly consider Kreisel and Troelstra’s notion, and compare it to indeterminism in physics (section~\ref{sec:lawless-indeterminism}); however, analysing the connection between physical indeterminism and these extensional lawlessness definitions would also be interesting.
    \chapter{Computability theory}\label{app:computability-theory}

\lettrine[lines=3]{T}{he field of} computability theory, also called recursion theory, emerged in the twentieth century to try to answer the question of what exactly an \emph{algorithm} is. It was developed in response to the foundational mathematical debates of the early twentieth century, and is therefore closely connected to the \textit{Entscheidungsproblem}, Gödel's incompleteness theorems and the debate between constructive and classical mathematics. For a more complete introduction to computability theory, see e.g.\ \textcite{terwijn2004syllabus}. Some notations in this chapter are borrowed from \textcite{weihrauch2000}, and others from \textcite{brattka2008tutorial}.

Many different definitions of the notion of algorithm were developed in the 1930's and onwards by notable figures including but not limited to Kurt Gödel, Emile Post, Alonzo Church and Stephen Cole Kleene. Perhaps the best-known formalisation was given by Alan Turing. Remarkably, all these notions of algorithm are formally equivalent, and give rise to the same formal notion of computable function.

Turing's approach is based on Turing machines, which represent idealised computing devices. A Turing machine can be described to consist of a \emph{tape} of unbounded length, divided into infinitely many \emph{cells}, each being either `blank' or containing a 0 or 1; together with a \emph{head} which can move along the tape and manipulate the cells' values. The action of the head depends only on a \emph{program}, i.e.\ a finite set of rules or instructions, given in advance, together with the \emph{state} of the program, of which there are finitely many, and the value of the cell at the current location of the head. Given an initial specification of the values of the cells and an initial location of the pointer, the Turing machine may either continue running forever, or terminate after a finite number of steps. In the latter case, the Turing machine is said to \emph{halt}.

In anticipation of section~\ref{sec:computable-analysis-preliminaries}, we generalise Turing machines to working with arbitrary alphabets, instead of only the alphabet $\{0,1\}$ as in the above.
\begin{convention}
    An \defn{alphabet} is a finite set containing at least two elements. In the following, $\Sigma$ denotes some fixed alphabet. $\Sst$ is the set of all finite sequences of elements in $\Sigma$.
\end{convention}

\begin{notation}
    For any two sets $X,Y$, by $\varphi:\subseteq X\to Y$ we mean a function whose domain $\dom\varphi$ is contained in $X$ and whose range $\ran\varphi$ is contained in $Y$. $\phi$ is called (strictly) \defn{partial} if $\dom\varphi\subsetneq X$ and \defn{total} if $\dom\varphi = X$.
\end{notation}

\begin{definition}\label{def:sst-sst-computable}
    A function $\varphi:\subseteq\Sst\to\Sst$ is called \defn{(Turing) computable} if there exists a Turing machine $M$ such that for any finite sequence $s\in\Sst$, when the initial configuration of the tape contains $s$ and is otherwise blank and the initial location of the head is at the start of $s$, then the Turing machine halts if and only if $s\in\dom\varphi$, and if it does, it finishes in a final configuration in which the tape contains the finite sequence $\varphi(s)$ and is otherwise blank.
\end{definition}

The original objects of study for Turing were not functions $\varphi:\subseteq\Sst\to\Sst$ but functions $\varphi:\subseteq\N\to\N$. These can, however be easily encoded by the former functions. In section~\ref{sec:computable-analysis-preliminaries} we will discuss in more detail how to define computability on arbitrary sets with cardinality $<2^\omega$.

\paragraph{Church-Turing thesis}\label{sec:church-turing-computability}
As we noted before, in addition to Turing's formulation, many different formalisations of the informal notion of algorithm have been developed, and (almost) all have turned out to be equivalent. This suggests that Turing's formal notion is indeed the `right' notion of algorithmic computability. This is captured in the heuristic statement referred to as the \defn{Church-Turing thesis}:
\begin{quote}
    A function $\varphi:\N\to\N$ can be computed by an algorithm (in the informal sense) if and only if it is Turing computable.
\end{quote}
An algorithm in the informal sense is also sometimes referred to as an \defn{effective} or \emph{mechanical}, i.e.\ `pen-and-paper' method. A variant of the Church-Turing thesis relevant to this thesis is the \defn{physical Church-Turing thesis}, which states that
\begin{quote}
    A function $\varphi$ can be physically computed if and only if it is Turing computable.
\end{quote}
That is, not only pen-and-paper methods can be used but any process allowed according to the laws of physics. We discuss the status of the physical Church-Turing thesis in somewhat more detail in section~\ref{sec:constructivising-physics}.


\section{Computable analysis} \label{sec:computable-analysis-preliminaries}
While the origins of computability theory lie in the study of algorithms that compute functions $\varphi:\N\to\N$ of natural numbers, the second half of the twentieth century showed an increased interest in computability notions for functions on non-discrete spaces, in particular the real numbers. Many different approaches have been developed to generalise computability theory to the reals. These form the field of computable analysis, which is still very active today. We will focus on Klaus Weihrauch's \emph{Type-2 theory of effectivity} and in particular on the language of \emph{representations} of sets.\footnote{Another notable approach was given by \textcite{PER89computability}, who also discuss computability notions on Banach spaces and operators on Banach spaces, which are therefore relevant to physics. Computability notions on manifolds, however, which could be useful to classical and relativistic mechanics, have only recently been introduced \cite{aguilar2017computable}.} A comprehensive introduction is presented in \textcite{weihrauch2000}, and a brief introduction is given in \textcite{brattka2008tutorial}. An even briefer and less complete introduction follows below.

\

We adopt the same convention as in the above that $\Sigma$ is a finite alphabet containing at least two elements. $\Som = \{ p \mid p : \N\to\Sigma \}$ is now the set of (one-way) infinite sequences over $\Sigma$.

Whereas Turing machines usually compute functions $\varphi:\Sst\to\Sst$, a \defn{Type-2 machine} computes functions $\Som\to\Sst$, $\Sst\to\Som$ or $\Som\to\Som$. A Type-2 machine of the latter type consists of a one-way input tape, a one-way output tape and a working tape. Each tape has a separate head which can move along the cells of its tape. That the input and output tapes are \emph{one-way} means that the heads can only move to the right (one cell at a time). The input head only reads values of cells on the input tape, while the output head only writes values (from $\Sigma$) on its cells, and the head at the working tape can do both (and can move to the left and right). This restriction on the head of the output tape allows us to at any time during the execution of the machine retrieve an initial segment of the (potentially infinite) sequence $p\in\Som$ that the machine outputs (without allowing the machine to change the values it had already output at an earlier stage).
Again, the behaviour of the three heads is completely determined by the \emph{program} of the machine, specified in advance, its \emph{state}, of which there are finitely many, and the value of the cells at each of the heads' positions.

\begin{definition}\label{def:som-som-computable}
    A function $F:\subseteq\Som\to\Som$ is \defn{computable} if there exists a Type-2 machine which, when the initial values of the cells on the input tape are given by the infinite sequence $p\in\Som$, keeps writing on the output tape forever and outputs the sequence $F(p)$ if $p\in\dom F$, and terminates after a finite number of steps if $p\notin\dom F$.\footnote{Remark that in the case for functions $\Som\to\Som$, the computation is `successful' whenever the corresponding machine keeps running forever, while for functions $\Sst\to\Sst$ (see Definition~\ref{def:sst-sst-computable}), the computation is `successful' whenever the Turing machine terminates after a finite number of steps.}
\end{definition}

Please note that this is only one of the many possible ways of formulating machines that compute $\Som\to\Som$ functions, but that (to us) only the resulting formal computability notion is of importance.

\subsection{Representations and computable functions on the reals}
To define computability on more general sets, we use the framework of representations.
\begin{definition}
    A \defn{representation} of a set $X$ is a surjective function $\delta:\subseteq\Som\to X$. If $x\in X$, then any $p\in\Som$ with $\d(p) = x$ is called a \defn{$\d$-name} of $x$. We call the pair $(X,\d)$ a \defn{represented space}.
\end{definition}

\begin{definition}\label{def:cauchy-representation}
    The \defn{Cauchy representation} $\rho:\subseteq\Som\to\R$ of the real numbers is defined by $\rho(p) = x\in\R$ for $p\in\Som$ if and only if $p$ encodes a sequence of rationals $(q_i)_{i\in\N}\subseteq\Q$ that converges rapidly to $x$, i.e.\ $\forall i\in\N: |x-q_i| < 2^{-i}$. (Here we assume that there is a canonical way to encode a sequence of rationals in a sequence $p\in\Som$.)
\end{definition}

Naturally, by construction of $\R$, the function $\rho$ is surjective. We are, however, interested in the subset of $\R$ of \emph{computable} numbers.

\begin{definition}\label{def:computable-sequence}
    A sequence $p\in\Som$ is called \defn{computable} if there is a Type-2 machine that on any (or no) input produces the sequence $p$. (This is equivalent to there being a total computable function $\varphi:\N\to\Sigma$ that produces the symbol $p(n)$ on input $n\in\N$.)
    
    Let $(X,\d)$ be a represented space. An element $x\in X$ is called \defn{computable} if it has a computable $\d$-name (i.e.\ there exists a computable $p\in\Som$ such that $\d(p) = x$).
\end{definition}

\begin{example}\label{ex:computable-real}
    It follows from Definitions~\ref{def:cauchy-representation} and~\ref{def:computable-sequence} that a real $x\in\R$ is $\rho$-computable if and only if there exists a computable sequence of rationals (i.e.\ a computable total function $\N\to\Q$, in the sense of Definition~\ref{def:sst-sst-computable}) that converges rapidly to $x$. This can be shown to be equivalent to, for instance (see \textcite[Theorem~3.2]{brattka2008tutorial} for more examples):
    \begin{itemize}
        \item There exists a Turing machine that outputs the infinite binary or decimal expansion of $x$;
        \item There exists a computable sequence of shrinking and dwindling (in the sense of Definition~\ref{def:intuitionistic-reals}) rational intervals, of which the intersection contains only $x$.
    \end{itemize}
    
    Because there are countably many Type-2 machines, there are only countably many computable reals;\footnote{Note that this chapter is written from a classical viewpoint. When one restricts oneself to using computable mathematical objects only, as is done in e.g.\ Markov's recursive constructive mathematics, then one cannot say that the set of computable sequences is countable, as there exists no computable enumeration of all computable numbers. Cf.\ section~\ref{sec:lawlike-sequences}.} hence, almost every real (w.r.t.~the Lebesgue measure) is uncomputable.
\end{example}

We now define computability of functions on represented sets.
\begin{definition}
    Let $(X,\d_X)$ and $(Y,\d_Y)$ be represented spaces and let $f:\subseteq X\to Y$ be a function. A function $F:\subseteq\Som\to\Som$ is called a \defn{$(\d_X,\d_Y)$-realiser} of $f$ if and only if $\dom(f\circ\d_X) \subseteq \dom(\d_Y\circ F)$ and
    \[ (\d_Y\circ F)(p) = (f\circ \d_X)(p) \text{\quad for all } p\in\dom(f\circ\d_X). \]
    (I.e.\ for any $\d_X$-name $p$ for $x\in X$, the value $F(p)$ is a $\d_Y$-name for $f(x)$.) The function $f$ is called \defn{computable} if it admits a computable $(\d_X,\d_Y)$-realiser (in the sense of Definition~\ref{def:som-som-computable}).
\end{definition}

We first define some more useful representations of the real numbers, and then discuss what computability of functions with respect to these representations intuitively mean, without going into the details.
\begin{definition}\phantomsection\label{def:some-representations}
    \begin{itemize}
        \item For any representation $\d$ of $X$, we can define a representation $\d^2$ of $X\times X$ by
        \[ \d^2(\langle p_1,p_2\rangle) = (x_1,x_2)\ \ :\iff\ \d(p_i) = x_i \text{ for } i = 1,2, \]
        where for $p_1,p_2\in\Som$, we define $\langle p_1,q_1\rangle = (p_1(0),p_2(0),p_1(1),p_2(1),\ldots) \in \Som.$
        
        Iterating this, we can define representations $\d^k$ of $X^k$ for any power $k\in\N$. In particular, the \defn{Cauchy representation} $\rho^k$ of $\R^k$ is defined in this way.
        
        \item The representation $\rho_\text{int}:\subseteq\Som\to\R$ is defined as
        \begin{align*}
            \rho_\text{int}(p) = x\ \ :\iff\ & p \text{ encodes a sequence } ([a_i, b_i])_{i\in\N}\subseteq\Q^2 \text{ of rational intervals}\\
            & \text{such that for all } i, a_i \leq a_{i+1} \leq b_{i+1} \leq b_i \text{ and } |b_i-a_i| < 2^{-i}, \\
            & \text{and } a_i \to x, b_i \to x.
        \end{align*}
        
        \item The \defn{binary representation} $\rho_2:\subseteq\Som\to [0,1]$ is defined in the usual sense by 
        \[ \rho_2(p) = x\in\R\ \ :\iff\ p_i\in\{0,1\} \text{ for all } i \text{ and } x = \sum_{i=1}^\infty 2^{-i\, p(i)} \]
        for $p\in\Som$ (where $\Sigma$ is assumed to contain at least the symbols $0$ and $1$.)
        
        \item The representation $\rho_< :\subseteq \Som\to\R$ is defined by
        \[ \rho_<(p) = x\in\R\ \ :\iff\ p \text{ enumerates all } q\in\Q \text{ with } q<x. \]
        A $\rho_<$-computable real number is also called \defn{lower-semicomputable}, as is a function $f:(X,\d)\to\R$ which is $(\d,\rho_<)$-computable. $\rho_<$-computability is weaker than $\rho$-computability. \defn{Upper-semicomputability} is similarly defined, by replacing $<$ with $>$.
    \end{itemize}
\end{definition}

Intuitively, a function $f:\R^k\to\R^k$ is $(\rho^k,\rho^k)$-computable iff when given a sequence of rational vectors in $\Q^k$ rapidly converging to $x\in\R^k$, one can compute from this another sequence of rational vectors rapidly converging to $f(x)\in\R^k$.

Similarly, a function $f:\R\to\R$ is $(\rho_\text{int},\rho_\text{int})$-computable iff there is an effective method to determine, for each $i$, the value $f(x)$ up to a precision of $2^{-i}$, given any required level of precision on the argument $x$. This is very much like the way real functions are defined in intuitionism (see section~\ref{sec:int-real-functions}). Note, however, that a \emph{constructive} method need not be \emph{effective} (i.e.\ computable); indeed, there are intuitionistic functions that are not lawlike.

\label{computability-decimal-expansion-problem}It follows from the discussion in Example~\ref{ex:computable-real} that a real number $x$ is $\rho_2$-computable if and only if it is $\rho$- or $\rho_\text{int}$-computable. However, when it comes to computability of \emph{functions}, computability with respect to $\rho_2$ is a stronger notion than computability with respect to the other two representations. In particular, the addition function $f:\R^2\to\R, (x,y)\mapsto x+y$ is \emph{not} $(\rho_2^2, \rho_2)$-computable. The argument for this is completely analogous to the one given in section~\ref{sec:int-reals-other-constructions} on page~\pageref{sec:int-reals-other-constructions}, where it is shown by a Brouwerian argument that when the reals are defined by through binary expansion, the set of real numbers is constructively not closed under addition.

\

We mention the following result, which naturally follows from the discussion above, because it stresses the connection between intuitionistic mathematics and computability theory once again:
\begin{theorem}
    Every $(\rho^k,\rho^k)$-computable function is continuous.
\end{theorem}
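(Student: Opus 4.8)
The plan is to deduce the result from two facts that are implicit in the discussion preceding the theorem: that $f$ being $(\rho^k,\rho^k)$-computable means it admits a computable realiser $F:\subseteq\Som\to\Som$ with $\rho^k\circ F = f\circ\rho^k$ on $\dom(f\circ\rho^k)$; and that any such $F$ is necessarily \emph{continuous} with respect to the natural (product) topology on $\Som$, whose basic open sets are the cylinders of sequences sharing a common finite initial segment. The real content is topological rather than computational: computability is used only to supply a realiser, and the argument would go through verbatim for any continuously realisable $f$.

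First I would establish the continuity of $F$ directly from the Type-2 machine model of Definition~\ref{def:som-som-computable}. Fix $p\in\dom F$ and $N\in\N$, and aim to control the first $N$ symbols of the output $F(p)$. Because the output tape is one-way, those first $N$ symbols are written after only finitely many steps, and in those finitely many steps the input head -- which is also one-way -- has scanned only some finite initial segment $\o p m$ of the input. Hence any $q\in\dom F$ with $\o q m = \o p m$ drives the machine through exactly the same computation up to the moment the $N$-th output symbol is produced, so $\o{F(q)}N = \o{F(p)}N$. This is precisely continuity of $F$ at $p$.

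Next I would transfer this downstairs through $\rho^k$. The representation $\rho^k$ (Definitions~\ref{def:cauchy-representation} and~\ref{def:some-representations}) is itself continuous: by its definition, any two $\rho^k$-names agreeing on a long enough initial segment encode the same rapidly converging rational data for enough terms that the points they name are within $\e$ of one another. So fix $x\in\dom f$ and $\e>0$, choose any $\rho^k$-name $p$ of $x$, and use this to pick $N$ large enough that agreement on the first $N$ output symbols pins down the value $f(x)=\rho^k(F(p))$ to within $\e$; then invoke the continuity of $F$ to pick $m$ so that every $q\in\dom F$ with $\o q m = \o p m$ satisfies $\o{F(q)}N = \o{F(p)}N$, and hence $|\rho^k(F(q)) - f(x)| < \e$. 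Since any $\rho^k$-name of a point $x'\in\dom f$ lies in $\dom F$ by the realiser condition, it remains only to show that every $x'\in\dom f$ sufficiently close to $x$ admits such a name $q$ agreeing with $p$ up to length $m$.

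The hard part will be this last construction of a name. Enlarging $m$ if necessary, I may assume that the first $m$ symbols of $p$ encode a whole block of rationals $q_0,\dots,q_L$ with $|q_i - x| < 2^{-i}$ (coordinatewise); the strictness of these inequalities leaves a positive slack $\d := \min_{i\le L}\bigl(2^{-i} - |q_i - x|\bigr) > 0$. For any $x'\in\dom f$ with $|x'-x| < \d$ I then have $|q_i - x'| \le |q_i-x| + |x-x'| < 2^{-i}$ for all $i\le L$, so the same initial block $q_0,\dots,q_L$ is a legitimate start of a rapidly converging name of $x'$; by density of $\Q$ I can extend it by freely chosen rationals $q_{L+1},q_{L+2},\dots$ with $|q_j - x'| < 2^{-j}$ to obtain a full $\rho^k$-name $q$ of $x'$ with $\o q m = \o p m$. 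For this $q$ we get $|f(x') - f(x)| = |\rho^k(F(q)) - f(x)| < \e$, which is continuity of $f$ at $x$. The only genuinely delicate points are the clean-block reduction and the slack estimate; everything else is bookkeeping.
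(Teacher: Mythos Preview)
Your argument is correct and is the standard route to this result: continuity of the Type-2 realiser on $\Som$ via the one-way tape constraint, continuity of $\rho^k$ as a quotient map, and the admissibility property of the Cauchy representation (the slack estimate that lets nearby reals share an initial name segment). The one place to be slightly more careful is the clean-block reduction on the \emph{output} side as well: you need that the first $N$ symbols of $F(p)$ determine $\rho^k(F(p))$ to within $\e$, which again follows by taking $N$ past a block boundary so that some $q_0,\dots,q_L$ with $2^{-L+1}<\e$ are fully encoded. With that minor addition the proof is complete.

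As for comparison with the paper: there is nothing to compare. The paper does not prove this theorem; it is stated without proof as something that ``naturally follows from the discussion above,'' included mainly to highlight the parallel with the intuitionistic result that all real functions are continuous. Your write-up supplies exactly the details the paper leaves implicit.
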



\subsection{Recursively enumerable open subsets}
In this thesis, we are not only interested in computability notions for points in $\R^k$, but also for subsets of $\R^k$. A first attempt, inspired by the notion of decidability on subsets of $\N$, would be to call a subset $A\subseteq\R^k$ \emph{decidable} if there exists a Turing machine that on input $x\in\R$ decides whether $x\in A$ or $x\notin A$---that is, the characteristic function $\chi_A:\R^k\to\R$ which maps $A$ to $\{1\}$ and $\R^k\setminus A$ to $\{0\}$ is $(\rho^k,\rho)$-computable. We see, however, that such a definition would not make sense:
\begin{proposition}
    The only decidable subsets of $\R^k$ are the trivial subsets $\emptyset$ and $\R^k$.
\end{proposition}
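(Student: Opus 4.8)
The plan is to reduce the claim to two facts: that computable functions are continuous, and that $\R^k$ is connected. Suppose $A\subseteq\R^k$ is decidable, so that by definition its characteristic function $\chi_A:\R^k\to\R$ (sending $A$ to $\{1\}$ and $\R^k\setminus A$ to $\{0\}$) is $(\rho^k,\rho)$-computable. First I would invoke the continuity theorem stated just above: every computable function between these represented spaces is continuous for the Euclidean topologies. Although that theorem is phrased for $(\rho^k,\rho^k)$-computable functions, the identical argument applies verbatim when the codomain is $\R=\R^1$ represented by $\rho=\rho^1$; hence $\chi_A$ is a continuous map $\R^k\to\R$.

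Next I would exploit that the range of $\chi_A$ is contained in the two-point set $\{0,1\}$, whose subspace topology inherited from $\R$ is discrete. Consequently both $\{0\}$ and $\{1\}$ are relatively open in the range, so by continuity their preimages $\chi_A^{-1}(\{1\})=A$ and $\chi_A^{-1}(\{0\})=\R^k\setminus A$ are open subsets of $\R^k$. This exhibits $\R^k$ as a partition into the two disjoint open sets $A$ and $\R^k\setminus A$.

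Finally, since $\R^k$ is connected, one of these two open sets must be empty: either $A=\emptyset$ or $\R^k\setminus A=\emptyset$, i.e.\ $A=\R^k$. This is exactly the claimed dichotomy.

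The proof has no substantial obstacle; the only point demanding care is the appeal to continuity, where one must check that the cited result genuinely covers functions valued in $\R=\R^1$ rather than only in $\R^k$. If one prefers not to lean on the precise form of that theorem, I would instead route the connectedness step through the intermediate value theorem: restricting $\chi_A$ to a straight-line path between a point of $A$ and a point of its complement yields a continuous map $[0,1]\to\R$ attaining both $0$ and $1$ but no value strictly between them, which is impossible.
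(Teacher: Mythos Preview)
Your proof is correct and follows essentially the same approach as the paper: invoke continuity of computable functions, then observe that only $\emptyset$ and $\R^k$ have continuous characteristic functions. The paper states the latter fact without justification, whereas you spell it out via connectedness of $\R^k$; your remark about the continuity theorem being stated for $(\rho^k,\rho^k)$ but applying to $\rho=\rho^1$ is a fair point of care that the paper elides.
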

\begin{proof}
    Every $(\rho^k,\rho)$-computable function is continuous, and the only subsets of $\R^k$ with continuous characteristic function are $\emptyset$ and $\R^k$. The characteristic functions $\chi_\emptyset$ and $\chi_{\R^k}$ are obviously computable.
\end{proof}

A more useful definition is the following:
\begin{definition}
    A subset $V\subseteq\Som$ is called \defn{recursively enumerable open (r.e.\ open)} if $V = \dom h$ for some computable function $h:\subseteq\Som\to\Sst$ (i.e.\ if there is a Turing machine which halts on input $p\in\Som$ if and only if $p\in V$). $V$ is called \defn{r.e.\ open in $W$} for $W\subseteq\Som$ if $V = \dom h \cap W$ for such $h$.
    
    If $(X,\d)$ is a represented space, a subset $U\subseteq X$ is called \defn{$\d$-r.e.\ open} if $\d^{-1}(U)$ is r.e.\ open as a subset of $\Som$. (That is, there is a Turing machine which, given a $\d$-name of any element $x\in X$, halts if and only if $x\in U$.)\footnote{\label{fn:r-e-open-in-w}It would be natural to define, in addition, that $U$ is \emph{$\d$-r.e.\ open in $W$} for some $W\subseteq X$ if $U = V\cap W$ for some $V\subseteq X$ which is $\d$-r.e.\ open. However, this definition seems not useful, at least not to us (see footnote~\ref{fn:r-e-open-in-dom-f}); this might be the reason that it is not given in e.g.\ \textcite{weihrauch2000}. Also note that this is not (at least not trivially) equivalent to $\d^{-1}(U)$ being r.e.\ open in $\d^{-1}(W)$ (namely, if $\d^{-1}(U) = \dom h \cap \d^{-1}(W)$ for some computable $h:\subseteq\Som\to\Sst$ then this implies that $U = \d(\dom h) \cap W$, but $\d^{-1}(\d(\dom h))$ might be strictly larger than $\dom h$ and hence not necessarily r.e.\ open.)}
\end{definition}

In the case of real numbers, we can characterise $\rho^k$-r.e.\ openness as follows. Here ${B_{\e}(q)\subseteq\R^k}$ is the Euclidean open ball with radius $\e$ around $q$.
\begin{proposition}
    A set $U\subseteq\R^k$ is $\rho^k$-r.e.\ open if and only if $U = \bigcup_{i\in\N} B_{\e_i}(q_i)$ for some computable sequence of pairs $(q_i,\e_i) \in \Q^k\times\Q_{>0}$. In particular, any r.e.\ open subset $U\subseteq\R^k$ is open in the Euclidean topology.
\end{proposition}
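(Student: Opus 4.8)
The plan is to prove both implications, and then read off the final clause for free. For the backward direction I would construct, from the computable sequence $(q_i,\e_i)$, a Type-2 machine that semidecides membership in $U$ along $\rho^k$-names. For the forward direction I would exploit the fundamental locality (``use'' or ``open data'') property of Type-2 computation: a machine that halts on an input $p\in\Som$ does so after reading only a finite prefix of $p$, so halting is witnessed by a cylinder. The parenthetical ``In particular'' is then immediate: a set of the form $\bigcup_i B_{\e_i}(q_i)$ is a union of Euclidean-open balls, hence Euclidean-open. (I would dispose of $U=\emptyset$ as a degenerate case up front, reading ``computable sequence'' as a computable, possibly finite or empty, enumeration, so that the empty union is permitted.)

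For the direction $(\Leftarrow)$, suppose $U=\bigcup_{i}B_{\e_i}(q_i)$ with $(q_i,\e_i)$ computable. A $\rho^k$-name of a point $x$ supplies rationals $r_0,r_1,\dots\in\Q^k$ with $|x-r_j|<2^{-j}$. I would build a machine $M$ that dovetails over all pairs $(i,j)$, computes $q_i,\e_i$, reads $r_j$ from the input, and halts as soon as the decidable rational inequality $|r_j-q_i|+2^{-j}<\e_i$ holds (outputting, say, the empty word). If $x\in B_{\e_i}(q_i)$, then with $\delta=\e_i-|x-q_i|>0$ any $j$ with $2^{-j+1}<\delta$ satisfies $|r_j-q_i|+2^{-j}\le|x-q_i|+2^{-j+1}<\e_i$, so $M$ halts; conversely, if the inequality ever holds then $|x-q_i|\le|x-r_j|+|r_j-q_i|<\e_i$, forcing $x\in U$. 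Thus $M$ halts on a name $p$ iff $\rho^k(p)\in U$, so $(\rho^k)^{-1}(U)$ is r.e.\ open.

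For $(\Rightarrow)$, suppose $(\rho^k)^{-1}(U)=\dom h$ for a computable $h$ realised by a machine $M$. By the locality principle, $M$ halting on $p$ is witnessed by a finite prefix; taking the shortest prefix completing an integral number of rational approximations, I would enumerate the r.e.\ set $S$ of \emph{boundary-aligned} words $w$ (those coding a full list $r_0,\dots,r_m\in\Q^k$) on which $M$ halts reading only within $w$. For such $w$ one has $w\Som\subseteq\dom h$, and a point $x$ admits a $\rho^k$-name extending $w$ precisely when $|x-r_j|<2^{-j}$ for all $j\le m$; hence $\rho^k(w\Som\cap\dom\rho^k)=C_w:=\bigcap_{j\le m}B_{2^{-j}}(r_j)$. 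Then $C_w\subseteq\rho^k(\dom h)=U$, and since $\dom h=\bigcup_{w\in S}w\Som$ we get $U=\bigcup_{w\in S}C_w$. Each $C_w$ is open, so it equals the union of the rational balls it contains; and ``$B_\e(q)\subseteq C_w$'' is decidable, being the conjunction of the rational tests $|q-r_j|+\e\le 2^{-j}$ for $j\le m$. Dovetailing the enumeration of $S$ with the enumeration of qualifying rational balls yields a single computable sequence whose union is $\bigcup_w C_w=U$.

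The main obstacle I expect is the bookkeeping in $(\Rightarrow)$: a finite prefix pins $x$ down to a finite \emph{intersection} of rational balls, not a single ball, and a naive prefix may truncate a rational mid-encoding. Isolating boundary-aligned witnessing words (so that $C_w$ is exactly the stated intersection) and then re-expanding each $C_w$ into the rational balls it contains is the delicate part, together with keeping all set operations intersected with $\dom\rho^k$ so that behaviour on non-names is irrelevant — this last point mirrors the relativised ``r.e.\ open in $W$'' manoeuvres used in Proposition~\ref{prop:dht-r-e-open}. Once the locality principle is invoked, however, the remaining steps are routine decidable rational arithmetic.
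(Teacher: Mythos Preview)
The paper does not give its own proof of this proposition; it simply refers the reader to \S5.1.16 of Weihrauch's textbook. Your argument is correct and is essentially the standard one found there: for $(\Leftarrow)$ dovetail a semidecision procedure over ball indices and approximation levels, and for $(\Rightarrow)$ invoke the finite-use (locality) property of Type-2 machines to extract from each halting prefix a finite intersection of rational balls, then re-expand these into a single computable enumeration of rational balls.
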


For a proof see e.g.\ \textcite[§5.1.16]{weihrauch2000}. Now follow some propositions which will prove useful in section~\ref{sec:mathematics-development} when applying them to the Hamiltonian flow.

\begin{proposition}\label{prop:intersection-r-e-open}
    The intersection $V\cap W$ of any two r.e.\ open sets $V,W\subseteq\Som$ is again r.e.\ open. In particular, if $V$ is r.e.\ open in $W$ and $W$ is r.e.\ open, then $V$ is r.e.\ open.
\end{proposition}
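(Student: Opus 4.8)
The plan is to prove the first (main) statement by a parallel-simulation argument and then obtain the ``in particular'' clause as an immediate corollary. First I would unfold the definitions: since $V$ and $W$ are r.e.\ open, there are computable functions $h_1,h_2:\subseteq\Som\to\Sst$ with $V=\dom h_1$ and $W=\dom h_2$, realised by Type-2 machines $M_1$ and $M_2$ that halt on an input $p\in\Som$ precisely when $p\in V$ and when $p\in W$, respectively. The goal is to build a single machine $M$ that halts on $p$ if and only if both $M_1$ and $M_2$ halt on $p$, i.e.\ iff $p\in V\cap W$; the function $h$ it computes then satisfies $\dom h=V\cap W$, witnessing r.e.\ openness of the intersection.

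The construction of $M$ is by dovetailing: on input $p$, the machine $M$ simulates $M_1$ and $M_2$ step by step in alternation, keeping on its working tape a separate record of each simulated machine's full configuration (working-tape contents, head positions, program state), and $M$ halts exactly when both simulations have halted. I expect the one point requiring care---the main, though routine, obstacle---to be that the physical input tape is one-way and read-only, so the two simulated machines cannot each independently rescan $p$. I would handle this by having $M$ read each cell of $p$ at most once, caching the prefix read so far onto its working tape, advancing the physical input head only when a simulation requests a symbol not yet cached; each simulated input head then consults the cached prefix. With this bookkeeping, $M$ halts on $p$ iff both $M_1$ and $M_2$ halt on $p$, so $\dom h=\dom h_1\cap\dom h_2=V\cap W$ is r.e.\ open. (Because halting of each $M_i$ is an ``eventually'' condition, one could even run the two simulations sequentially rather than dovetailed; but the parallel version generalises transparently to any finite intersection.)

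For the ``in particular'' clause I would simply chain the definitions with the part just proved: if $V$ is r.e.\ open in $W$, then by definition $V=\dom h\cap W$ for some computable $h:\subseteq\Som\to\Sst$. Here $\dom h$ is r.e.\ open, being the domain of a computable function (take $h$ itself as the witness), and $W$ is r.e.\ open by hypothesis; hence $V$ is the intersection of two r.e.\ open sets and is r.e.\ open by the first part. This requires no further argument beyond the dovetailing construction above.
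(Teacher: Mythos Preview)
Your proposal is correct and follows essentially the same approach as the paper: build a machine that simulates $M_V$ and $M_W$ in parallel (alternating instructions) and halts iff both halt, then deduce the ``in particular'' clause by writing $V=\dom h\cap W$ as an intersection of two r.e.\ open sets. Your treatment is in fact more detailed than the paper's, which does not spell out the input-caching needed to cope with the one-way input tape.
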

\begin{proof}
    (From \textcite[Theorem~2.4.5]{weihrauch2000}.) Let $M_V$ and $M_W$ be Turing machines that halt on input $p\in\Som$ iff $p\in V$, resp.\ $p\in W$. Let $M$ be a machine which on input $p\in\Som$ simulates both $M_V$ and $M_W$ on input $p$ in parallel (by alternately executing an instruction in the program of $M_V$ and of $M_W$, as long as none of the machines has halted), and which halts whenever both $M_V$ and $M_W$ have halted. Then $M$ halts on $p$ if and only if both $M_V$ and $M_W$ halt on $p$, which is the case if and only if $p\in V\cap W$.
\end{proof}

\begin{proposition}\label{prop:preimage-r-e-open}
    Let $X,Y$ be sets and $\d_X : \Som\to\Sst$, $\d_Y : \Som\to\Sst$ be representations. Let $f:\subseteq X\to Y$ be $(\d_X, \d_Y)$-computable with $(\d_X, \d_Y)$-realiser $F:\Som\to\Som$. If $U\subseteq Y$ is r.e.\ open, then $\d_X^{-1}(f^{-1}(U))$ is r.e.\ open in $\dom F$.\footnote{\label{fn:r-e-open-in-dom-f}It is suggested in \textcite{ziegler2006effectively} that, conversely, any function for which taking pre-images preserves r.e.\ openness is also computable; but it is unclear whether or not this holds only for total functions, and we do not need this result here. Furthermore, an even more elegant result than the present Proposition would be that $f^{-1}(U)$ is r.e.\ open in $\dom f$, in the sense defined in footnote~\ref{fn:r-e-open-in-w}; however, this might not necessarily follow from the present result, for the reason explained in that footnote. Luckily, we need no more than the present Proposition for our results in section~\ref{sec:mathematics-development}.}
\end{proposition}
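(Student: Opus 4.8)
The plan is to realise the preimage as the halting set of a single Type-2 machine built by composing the realiser $F$ with a semi-decision procedure for $U$. Since $U$ is $\d_Y$-r.e.\ open, by definition $\d_Y^{-1}(U)\subseteq\Som$ is r.e.\ open, so there is a computable $g:\subseteq\Som\to\Sst$ with $\dom g=\d_Y^{-1}(U)$; let $M_g$ be a machine halting on input $q\in\Som$ exactly when $q\in\d_Y^{-1}(U)$. Let $M_F$ be a machine computing the $(\d_X,\d_Y)$-realiser $F$. The idea is to feed the output of $M_F$ straight into $M_g$.

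First I would construct the composite machine $M_h$: on input $p\in\Som$ it simulates $M_F$ on $p$, and whenever $M_F$ emits a new output symbol it appends that symbol to the (virtual) input tape of a simulation of $M_g$, advancing $M_g$ whenever further input is available; $M_h$ halts precisely when $M_g$ halts. This is legitimate because the output tape of $M_F$ and the input tape of $M_g$ are both one-way: if $M_g$ halts on $F(p)$ it does so after reading only a finite prefix, which $M_F$ produces after finitely many steps. (This dovetailing is the same device used in the proof of Proposition~\ref{prop:intersection-r-e-open}.) The resulting $h=g\circ F:\subseteq\Som\to\Sst$ is computable, with $\dom h=\{p\in\dom F : F(p)\in\d_Y^{-1}(U)\}=F^{-1}(\d_Y^{-1}(U))$, and since $\dom h\subseteq\dom F$ this gives $\dom h\cap\dom F=F^{-1}(\d_Y^{-1}(U))$.

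It then remains to identify this set with $\d_X^{-1}(f^{-1}(U))$. Writing $A=\dom(f\circ\d_X)$ and invoking the defining property of the realiser, namely $(\d_Y\circ F)(p)=(f\circ\d_X)(p)$ for all $p\in A$, I would show $\d_X^{-1}(f^{-1}(U))=A\cap F^{-1}(\d_Y^{-1}(U))$: a name $p\in A$ satisfies $f(\d_X(p))\in U$ iff $\d_Y(F(p))\in U$ iff $F(p)\in\d_Y^{-1}(U)$. The inclusion $\d_X^{-1}(f^{-1}(U))\subseteq\dom h\cap\dom F$ is then immediate. The hard part will be the reverse inclusion, i.e.\ showing that the halting set does not overshoot: a priori $F^{-1}(\d_Y^{-1}(U))$ may contain points $p\in\dom F$ with $\d_X(p)\notin\dom f$ (or even $p\notin\dom\d_X$), on which the realiser is entirely unconstrained, and such $p$ lie outside $\d_X^{-1}(f^{-1}(U))$. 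Excluding these spurious names is exactly the delicate step; it must be handled by working throughout relative to $\dom F$ and leaning on the realiser agreement, and it is where I expect the argument to need the most care (cf.\ the domain caveats in footnotes~\ref{fn:r-e-open-in-w} and~\ref{fn:r-e-open-in-dom-f}).
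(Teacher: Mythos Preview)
Your approach is essentially the paper's: compose the semi-decision procedure $g$ for $\d_Y^{-1}(U)$ with the realiser $F$, and identify the resulting halting set with $\d_X^{-1}(f^{-1}(U))$ relative to $\dom F$. The one technical difference is that where you build the composite machine $M_h$ by hand via dovetailing, the paper invokes a black-box result (\textcite[Theorem~2.1.12]{weihrauch2000}) asserting that $g\circ F$ has a computable extension $h:\subseteq\Som\to\Sst$ with $\dom h\cap\dom F=\dom(g\circ F)$. Your construction is exactly a proof of that fact, so the two routes coincide.

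Regarding the ``hard part'' you isolate---the reverse inclusion, i.e.\ ruling out $p\in\dom F\setminus\dom(f\circ\d_X)$ with $F(p)\in\d_Y^{-1}(U)$---the paper does \emph{not} resolve this either. Its chain of biconditionals is explicitly restricted to $p\in\dom(f\circ\d_X)$, which yields $\d_X^{-1}(f^{-1}(U))=\dom h\cap\dom(f\circ\d_X)$, and then simply concludes. So your caution is well placed: the statement, read literally, is vulnerable at exactly the point you flag, since a realiser is unconstrained outside $\dom(f\circ\d_X)$. In the paper's sole application (Theorem~\ref{thm:flow-preserves-r-e-open}) this issue is sidestepped by immediately intersecting with $(\rho^k)^{-1}(\D_h^{-t})$, which lies inside $\dom(f\circ\d_X)$ and so kills any spurious names. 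In short, you have not missed a trick---the paper leaves the same gap and repairs it downstream rather than here.
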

\begin{proof}
    Because $U$ is r.e.\ open, there is a computable function $g:\subseteq\Som\to\Sst$ such that $\dom g = \d_Y^{-1}(U)$.
    Because $F$ is a $(\d_X, \d_Y)$-realiser of $f$, we have $\d_Y F (p) = f \d_X (p)$ for all sequences $p\in\dom(f\d_X)$.

    We use the fact that the function $g\circ F:\subseteq\Som\to\Sst$ (which has domain $F^{-1}(\dom g)$) has a computable extension $h:\subseteq\Som\to\Sst$ such that $\dom h \cap \dom F = \dom(g\circ F)$.
    For a proof of this fact, see \textcite[Theorem~2.1.12]{weihrauch2000}. Note that $\dom (f\circ\d_X) \subseteq \dom F$. It follows that for $p\in\dom(f\circ\d_X)$, we have $p\in \d_X^{-1}(f^{-1}(U)) \iff f(\d_X(p)) \in U \iff \d_Y(F(p)) \in U \iff F(p) \in \d_Y^{-1}(U) = \dom g \iff {p\in \dom(g\circ F)} \iff p\in \dom h \cap \dom F$. Since $\dom h\subseteq\Som$ is r.e.\ open by definition, we conclude that $\d_X^{-1}(f^{-1}(U))$ is r.e.\ open in $\dom F$.
\end{proof}
    
    \backmatter
    \phantomsection
    \addcontentsline{toc}{chapter}{\bibname}
    \printbibliography
\end{document}
